\newtheorem{theorem}{Theorem}[section]
\newtheorem{fact}{Fact}[section]
\newtheorem{prop}{Proposition}[section]
\newtheorem{claim}{Claim}[section]
\newtheorem{lemma}[theorem]{Lemma}
\theoremstyle{definition}
\newtheorem{ExampleTheorem}[theorem]{Example Theorem}
\theoremstyle{remark}
\newtheorem{remark}[theorem]{Remark}
\numberwithin{equation}{section}
\newcommand{\Stefanescu}{\c{S}tef\u{a}nescu} 
\newcommand{\comments}[1]{}
\newcommand{\N}{\mathbb{N}}
\newcommand{\Z}{\mathbb{Z}}
\newcommand{\R}{\mathbb{R}}
\newcommand{\C}{\mathbb{C}}
\newcommand{\T}{\mathbb{T}}
\newcommand{\abs}[1]{\lvert#1\rvert}
\DeclareMathOperator{\im}{Im}
\newcommand{\ud}{\mathrm{d}}
\begin{document}

\title[]{Subcritical behavior for quasi-periodic Schr\"odinger cocycles with trigonometric potentials}

\author{C. A. Marx, L. H. Shou, J. L. Wellens}

\address{Department of Mathematics, Oberlin College, Oberlin, OH - 44074}
\email{cmarx@oberlin.edu}

\address{Department of Mathematics, Caltech, Pasadena, CA - 91125}
\email{lshou@caltech.edu}

\address{Department of Mathematics, MIT, Cambridge, MA 02139}
\email{jwellens@MIT.edu}

\thanks{L. H. Shou was supported by the Caltech's Summer Undergraduate Research Fellowship (SURF). J. L. Wellens was supported by the Marcella Bonsall SURF Fellowship.}





\maketitle

\begin{abstract}
We give a criterion implying subcritical behavior for quasi-periodic Schr\"odinger operators where the potential sampling function is given by a trigonometric polynomial. Subcritical behavior, in the sense of Avila's global theory, is known to imply purely absolutely continuous spectrum for all irrational frequencies and all phases.
\end{abstract}

\section{Introduction} \label{sec_intro}

Let $\mathbb{T}:=\mathbb{R}/ \mathbb{Z}$ and fix an irrational number $\alpha$, subsequently referred to as the frequency. Evaluating an {\em{analytic}} function $v:\mathbb{T}\to\R$ along the trajectories of rotation by $\alpha$ with varying starting point $x \in \mathbb{T}$ determines a quasi-periodic Schr\"odinger operator,
\begin{equation} \label{eq_qso}
(H_{v(x); \alpha}\psi)_n:=\psi_{n-1}+\psi_{n+1}+v(x+n\alpha)\psi_n ~\mbox{.}
\end{equation}
For every realization of the {\em{phase}} $x \in \mathbb{T}$, (\ref{eq_qso}) is bounded and self-adjoint on $\mathit{l}^2(\mathbb{Z})$. 

In physics, quasi-periodic Schr\"odinger operators describe the conductivity of electrons in a two-dimensional crystal layer subject to an external magnetic field of flux $\alpha$ acting perpendicular to the lattice plane. In this context the potential sampling function $v$ is usually a trigonometric polynomial, which, through its Fourier coefficients (``{\em{coupling constants}}''), carries information about the material properties of the crystal. The most well-known example is the almost Mathieu operator (AMO), in physics also known as Harper's model, where $v(x) = 2 \lambda \cos(2 \pi x)$ and $\lambda > 0$. 

An interesting phenomenon encountered for quasi-periodic Schr\"odinger operators are {\em{metal insulator transitions}}. Depending on the coupling constants, presence of the external magnetic field may enhance or deplete the conductivity in the crystal. The prototype is the AMO where the spectral properties pass from purely absolutely continuous (ac) spectrum for $\lambda < 1$ (``subcritical regime'') to pure point spectrum with exponentially localized eigenfunctions when $\lambda > 1$ (``supercritical regime''); the transition is marked by a ``critical point'' at $\lambda = 1$ where the spectrum is purely singular continuous (sc). For a review of the results known for the AMO we refer to \cite{Jitomirskaya_review_2007, JitomirskayaMarx_review_2015}.

A dynamical measure for such transitions is given by the {\em{Lyapunov exponent}} (LE), in physics interpreted as an inverse localization length, which quantifies the averaged asymptotics of the solutions to the time-independent Schr\"odinger equation. Whereas positivity of the LE is heuristically associated with localization, zero LE is interpreted to indicate delocalization. 

Solutions to the time-independent Schr\"odinger equation are obtained most conveniently in dynamical systems terms. Given an initial condition $(\psi_0,\psi_{-1})^\mathrm{T}$, $n-$step \emph{transfer matrices} $B_n^E(x;\alpha)$ allow to iteratively generate solutions of \\ $H_{v(x); \alpha} \psi = E \psi$ over $\mathbb{C}^\mathbb{Z}$ via
\begin{equation} \label{eq_solns}
\begin{pmatrix}
\psi_n\\\psi_{n-1}
\end{pmatrix}=B_n^E(x;\alpha)\begin{pmatrix}
\psi_0\\\psi_{-1}
\end{pmatrix} ~\mbox{, } B_n^E(x;\alpha):=\prod_{j=n-1}^{0}B^E(x+j\alpha) ~\mbox{,}
\end{equation}
where
\begin{equation}\label{eqn_cocycle}
B^E(x):=\begin{pmatrix}E-v(x)&-1\\1&0\end{pmatrix}.
\end{equation}
The {\em{Schr\"odinger cocycle}} $(\alpha,B^E)$, a dynamical system on $\mathbb{T} \times \mathbb{C}^2$ defined by $(x,v)\mapsto(x+\alpha,B^E(x)v)$, captures the iterative scheme in (\ref{eq_solns}) in a compact way. In particular, the Lyapunov exponent of a quasi-periodic Schr\"odinger operator is defined by
\begin{align} \label{eq_complexLE}
L(\alpha,B^E)&:= \lim_{n\to\infty}\frac{1}{n}\int_{\mathbb{T}}\log\|B_n^E(x;\alpha)\|\,dx ~\mbox{.}
\end{align}
Since $B^E \in SL(2, \mathbb{R})$, the LE of Schr\"odinger operators is always non-negative.

In his seminal work titled ``{\em{Global theory of one-frequency operators}}'' \cite{Avila_globalthy_published}, A. Avila introduces a framework that allows to appropriately generalize the metal-insulator transition observed for the AMO to arbitrary analytic potentials $v$. Relying on the analyticity of $v$, he considers the LE of the cocycle $(\alpha, B^E( . + i \epsilon))$ obtained by complexifying the phase in (\ref{eq_complexLE}); we will refer to this as the {\em{complexified LE}} and denote it by $L(\epsilon; E)$. 

Characterized by the behavior of the complexified LE about $\epsilon = 0$, Avila decomposes the spectrum $\Sigma$ into three mutually disjoint sets: {\em{supercritical, subcritical and critical energies}}. An energy is classified as supercritical if the complexified LE vanishes in a neighborhood of $\epsilon=0$, as supercritical if the LE is positive at $\epsilon=0$, and as critical if the LE is zero at $\epsilon=0$ but not subcritical. These three possible situations for fixed energy $E \in \Sigma$  are shown schematically in Fig. \ref{fig:complexified-LE-energies}. 

\begin{figure}[h!]
\begin{subfigure}[b]{.3\textwidth}
\begin{tikzpicture}
\draw [thick,->] (-2,0)--(2,0);
\draw [thick,->] (0,-.5)--(0,2);
\node [above] at (0,2) {\footnotesize$L(\varepsilon;E)$};
\node [below] at (2,0) {$\varepsilon$};
\draw [line width=1.2pt] (-2,1.5)--(-.5,0)--(.5,0)--(2,1.5);
\end{tikzpicture}
\caption{subcritical behavior}
\end{subfigure} 
\quad
\begin{subfigure}[b]{.3\textwidth}
\begin{tikzpicture}
\draw [thick,->] (-2,0)--(2,0);
\draw [thick,->] (0,-.5)--(0,2);
\node [above] at (0,2) {\footnotesize$L(\varepsilon;E)$};
\node [below] at (2,0) {$\varepsilon$};
\draw [line width=1.2pt] (-2,2)--(0,0)--(2,2);
\end{tikzpicture}
\caption{critical behavior}
\end{subfigure}
\quad
\begin{subfigure}[b]{.3\textwidth}
\begin{tikzpicture}
\draw [thick,->] (-2,0)--(2,0);
\draw [thick,->] (0,-.5)--(0,2);
\node [above] at (0,2) {\footnotesize$L(\varepsilon;E)$};
\node [below] at (2,0) {$\varepsilon$};
\draw [line width=1.2pt] (-1.5,2)--(0,.5)--(1.5,2);
\end{tikzpicture}
\caption{supercritical behavior}
\end{subfigure}
\caption{Behavior for energies in the spectrum, corresponding to local behavior (at $\varepsilon=0$) of the complexified LE.}
\label{fig:complexified-LE-energies}
\end{figure}
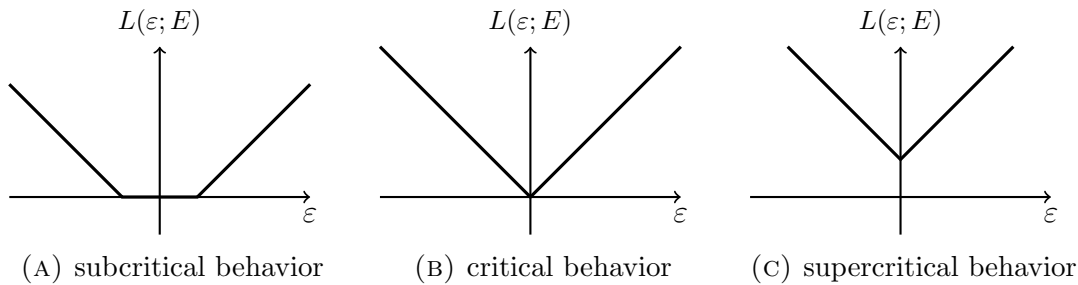

The supercritical regime just recovers the set of positive LE. It is the set of zero LE, however, for which Avila's decomposition yields additional insight unavailable prior to his global theory. Whereas from Kotani-Simon theory it has been known that the set of zero LE forms a Lebesgue essential support for the absolutely continuous (ac) spectrum \cite{Simon_1983, Kotani_1982}, it still leaves unaddressed sets of Lebesgue measure zero where the spectrum could potentially be singular. 

Further decomposing into subcritical and critical energies, enables to explicitly separate ac spectrum from singular spectrum. 
Here, as common, singular spectrum is defined as the union of singular continuous and pure point spectrum.
By the {\em{almost reducibility theorem}} \cite{Avila_prep_ARC_1,Avila_prep_ARC_2} the spectrum is purely ac on the set of subcritical energies. It is purely singular\footnote{Notice that we do not claim that the spectrum on the set of critical energies is purely singular continuous but only that it is purely singular (see the definition of singular spectrum, given above) for all phases. Even though the former is a known conjecture \cite{JitomirskayaMarx_review_2015}, so far this has not been proven yet. Excluding appearance of eigenvalues for all phases is a delicate and difficult problem, that has not even been established for the critical AMO.} on the set of critical energies, a consequence of the dynamical dichotomy of Avila-Fayad-Krikorian \cite{AvilaFayadKrikorian_2011}. Notably, both these spectral results for (\ref{eq_qso}) hold for {\em{all}} phases and {\em{all}} irrational frequencies. 

In this paper we focus on the set of zero LE and establish a sufficient criterion for subcritical behavior if the potential $v$ is a real trigonometric polynomial,
\begin{equation} \label{eq_potential}
v(x) = 2 \sum_{n=1}^M \left( a_n \cos(2 \pi n x) + b_n \sin(2 \pi n x) \right) ~\mbox{, } \vert a_M \vert + \vert b_M \vert > 0 ~\mbox{.}
\end{equation}
Here, we may assume absence of a constant term which would only result in a shift of the spectrum. 

We mention that detecting critical energies is in principle much more delicate, since in contrast to both sub- and supercriticality, criticality is not stable w.r.t. perturbations in $\alpha$ and $v$ \cite{Avila_globalthy_published}. In fact, Avila shows that small perturbations in the Fourier-coefficients of $v$ destroy critical behavior which allows to prove that for a measure theoretically typical (=prevalent) potential $v$ in the analytic category, the set of critical energies is empty \cite{Avila_globalthy_published}!

Our sufficient criterion relies on quantifying the asymptotics of the complexified LE, $L(\epsilon; E)$, as $\vert \epsilon \vert \to \infty$, building on earlier ideas (``{\em{method of almost constant cocycles}}'') which allowed to determine $L(\epsilon; E)$ for the AMO and extended Harper's model \cite{JitomirskayaMarx_2012}. This is achieved by imposing a suitable largeness condition on
\begin{equation} \label{eq_defnminfunc}
m(\epsilon; E):= \min_{x \in \mathbb{T}} \vert E - v(x + i \epsilon) \vert ~\mbox{,}
\end{equation}
associated with the upper left entry of the matrix (\ref{eqn_cocycle}). We mention that such largeness conditions have played a role earlier in proving positivity \cite{SoretsSpencer_1991, KleinDuarte_2013_posLE} as well continuity of the LE \cite{Bourgain_2005, AvilaJitomirskayaSadel_2013}.

Since $v$ is a trigonometric polynomial, $m(\epsilon; E) \to +\infty$ as $\vert \epsilon \vert \to \infty$. Thus letting $\epsilon_H= \epsilon_H(E;  \{ a_j ; b_j \}_{j=1}^{M})$ denote the {\em{largest}} $\epsilon \geq 0$ such that $m(\epsilon; E) = 2$, we will show that 
\begin{equation} \label{eq_complexherman}
L(\epsilon; E) = \log \vert i a_M - b_M \vert + 2 \pi M \vert \epsilon \vert ~\mbox{, for $\vert \epsilon \vert \geq \epsilon_H$. }
\end{equation}
As we will argue, (\ref{eq_complexherman}) in particular recovers the well-known lower bound for the LE due to Herman \cite{Herman_1983}
\begin{equation} \label{eq_herman}
L(\alpha, B^E) \geq \log \vert i a_M - b_M \vert ~\mbox{.}
\end{equation}
For this reason, we will refer to (\ref{eq_complexherman}) as {\em{complex Herman formula}} and to $\epsilon_H$ as the {\em{Herman radius}}. Note that as opposed to Herman's lower bound in (\ref{eq_herman}), the complex Herman formula does depend on both $E$ as well as on {\em{all}} the Fourier-coefficients through the Herman radius.

Exploring properties of the complexified LE for $E \in \Sigma$, will result in the following criterion for subcriticality, which constitutes our main result:
\begin{theorem} \label{thm_main}
Given a quasi-periodic Schr\"odinger operator, $\alpha$ irrational, and $v$ as in (\ref{eq_potential}).
\begin{itemize}
\item[(i)] $E \in \Sigma$ is subcritical if the Herman radius, $\epsilon_H = \epsilon_H(E;  \{ a_j ; b_j \}_{j=1}^{M})$, satisfies
\begin{equation} \label{eq_subcritcrit_energy}
\epsilon_H < -\dfrac{\log \vert i a_M - b_M \vert}{2 \pi (M - d)} ~\mbox{,}
\end{equation}
where 
\begin{equation} \label{eq_gcd}
d: = \mathrm{gcd}\{ n \in \{1, \dots, M\} : \vert a_n \vert + \vert b_n \vert > 0 \} ~\mbox{.}
\end{equation}

\item[(ii)] Define the {\em{uniform Herman radius}} $\epsilon_{H; {\mathrm{unif}} } = \epsilon_{H; {\mathrm{unif}} }(\{ a_j ; b_j \}_{j=1}^{M})$ to be the largest $\epsilon \geq 0$ such that
\begin{equation*} \label{eq_unifHerman}
\widetilde{m}(\epsilon; E) := \min_{ x \in \mathbb{T}} \vert v(\theta + i \epsilon) \vert = 4 + 2 \sum_{n=1}^{M} \left( \vert i a_n - b_n \vert \right) ~\mbox{.}
\end{equation*}
All energies in the spectrum are subcritical if
\begin{equation} \label{eq_subcritcrit_uniform}
\epsilon_{H; {\mathrm{unif}} } < -\dfrac{\log \vert i a_M - b_M \vert}{2 \pi (M - d)} ~\mbox{.}
\end{equation}
\end{itemize}
\end{theorem}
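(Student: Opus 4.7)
The plan is to prove the contrapositive: if $E \in \Sigma$ is not subcritical, then the Herman radius must satisfy $\epsilon_H \geq -\log|ia_M - b_M|/(2\pi(M - d))$, contradicting \eqref{eq_subcritcrit_energy}. Setting $f(\epsilon) := L(\epsilon; E)$, the argument rests on three ingredients: the complex Herman formula \eqref{eq_complexherman}, proved earlier in the paper; Avila's quantization of acceleration, which asserts that $f$ is convex, even, and piecewise linear with slopes in $2\pi\mathbb{Z}$ bounded in absolute value by $2\pi M$; and Johnson's theorem characterizing $\Sigma$ by failure of uniform hyperbolicity.

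I first sharpen the quantization to ``slopes of $f$ in $2\pi d\mathbb{Z}$''. Writing $v(x) = V(dx)$ with $V$ a real trigonometric polynomial of degree $M/d$ and non-vanishing leading coefficient, the identity $B^E(x + i\epsilon) = \widetilde B^E(dx + id\epsilon)$ combined with invariance of Lebesgue measure on $\mathbb{T}$ under the $d$-fold covering $x \mapsto dx \bmod 1$ yields $L(\epsilon; E) = \widetilde L(d\epsilon; E)$, where $\widetilde L(\eta; E)$ denotes the complexified LE of the Schr\"odinger cocycle $(d\alpha, \widetilde B^E(\cdot + i\eta))$ associated with $V$ (with $d\alpha$ still irrational). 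Since $V$ has trigonometric degree $M/d$, Avila's quantization places slopes of $\widetilde L(\cdot; E)$ in $2\pi\mathbb{Z}\cap[-2\pi M/d,\, 2\pi M/d]$, and the chain rule shows that slopes of $f$ are exactly $d$ times those of $\widetilde L(\cdot; E)$, so they lie in $2\pi d\mathbb{Z}\cap[-2\pi M,\, 2\pi M]$.

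Next, I show that if $E \in \Sigma$ is not subcritical then $f'(0^+) \geq 2\pi d$. If $E$ is critical, $f(0) = 0$ and $f \not\equiv 0$ on any neighborhood of $0$, so convexity forces $f'(0^+) > 0$, and the refined quantization gives $f'(0^+) \geq 2\pi d$. If $E$ is supercritical, $f(0) > 0$; the Herman bound \eqref{eq_herman} is unusable in the relevant regime $|ia_M - b_M| < 1$ (outside of which the hypothesis is vacuous), but $E \in \Sigma$ rules out uniform hyperbolicity at $\epsilon = 0$ via Johnson's theorem, and Avila's characterization of UH (UH at $\epsilon_0$ iff $f(\epsilon_0) > 0$ and $f$ is affine on a neighborhood of $\epsilon_0$) forces $f'(0^+) > 0$, so again $f'(0^+) \geq 2\pi d$. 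Convexity and the complex Herman formula then combine to give
\[
\log|ia_M - b_M| + 2\pi M \epsilon_H \;=\; f(\epsilon_H) \;\geq\; f(0) + 2\pi d\,\epsilon_H \;\geq\; 2\pi d\,\epsilon_H,
\]
which rearranges to $\epsilon_H \geq -\log|ia_M - b_M|/(2\pi(M-d))$, contradicting \eqref{eq_subcritcrit_energy}. This establishes (i).

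For (ii), I would show $\epsilon_H(E) \leq \epsilon_{H;\mathrm{unif}}$ for every $E \in \Sigma$, after which (i) applies to every energy in the spectrum. The elementary spectral bound $|E| \leq \|v\|_\infty + 2 \leq 2 + 2\sum_{n=1}^M |ia_n - b_n|$ for $E \in \Sigma$, together with the triangle inequality, gives $m(\epsilon; E) \geq \widetilde m(\epsilon) - |E| > 2$ for every $\epsilon > \epsilon_{H;\mathrm{unif}}$, yielding the claim. I expect the main obstacle to be the supercritical case: extracting $f'(0^+) > 0$ from $f(0) > 0$ genuinely requires Johnson's theorem together with Avila's UH characterization, since Herman's lower bound alone cannot separate supercritical from subcritical behavior in the regime $|ia_M - b_M| < 1$ where the theorem's hypothesis has content.
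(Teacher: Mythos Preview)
Your proof is correct and follows essentially the same route as the paper: contrapositive, quantization refined to $d\mathbb{Z}$, convexity inequality linking $L(0;E)$ to $L(\epsilon_H;E)$, and the spectral-radius bound for part~(ii). Your treatment is in fact more careful than the paper's on one point: the paper writes ``if $E\in\Sigma$ is not subcritical, Remark~\ref{rem_quantiz} yields $\omega(0;E)\ge d$'' without justifying positivity of the acceleration in the supercritical case, whereas you correctly isolate that this step requires Johnson's theorem together with Avila's characterization of uniform hyperbolicity (positive $L$ plus local affinity).
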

\begin{remark} \label{remark_mainthm}
\begin{itemize}
\item[(i)] The conditions (\ref{eq_subcritcrit_energy}) and (\ref{eq_subcritcrit_uniform}) implicitly assume that $\vert i a_M - b_M \vert \leq 1$ which is no restriction since the classical Herman bound (\ref{eq_herman}) implies supercritical behavior if $\vert i a_M - b_M \vert > 1$.
\item[(ii)] We note that proving that some $E_0$ is subcritical, implies existence of {\em{some}} ac spectrum about $E_0$ (see also Sec. \ref{sec_someapplications_energy}).
\end{itemize}
\end{remark}
We mention that both $\epsilon_H$ and $\epsilon_{H; {\mathrm{unif}}}$ can be estimated easily through polynomial root bounds; we discuss this in Sec. \ref{sec_someapplications}.

While the general theory is well developed, the framework of Avila's global theory has only been employed explicitly to the AMO and a generalization known as extended Harper's model \cite{JitomirskayaMarx_2012, AvilaJitomirskayaMarx_preprint_2015}. Physically interesting models include a special case of (\ref{eq_potential}) with all $b_n = 0$ (e.g. \cite{HiramotoKohmoto_1989, SoukoulisEconomou_1982, ChaoRiklundLiu_1985,ChaoRiklundLiu_1986}), also known as generalized Harper's model; here the few available rigorous results \cite{Herman_1983, HaroPuig_2013} focus on positivity of the LE. In light of proving subcritical behavior, we mention a related result on purely ac spectrum for potentials of the form $v(x) = \lambda f(x)$ where $f$ is a real analytic function and $\lambda \in \mathbb{R}$. Here, Bourgain-Jitomirskaya prove existence of $\lambda_0 = \lambda_0(f)$ such that for all $\vert \lambda \vert < \lambda_0$, the spectrum of (\ref{eq_qso}) is purely ac for {\em{a.e.}} $x \in \mathbb{T}$ if $\alpha$ is {\em{Diophantine}} \cite{BourgainJitomirskaya_2002_inventiones}. As this result is based on proving localization for the dual operator, it is bound to impose arithmetic conditions on both the frequency and the phases. Establishing subcritical behavior, however, implies results on ac spectrum irrespective of such arithmetic conditions.

We structure the paper as follows: Sec. \ref{sec_complexHerman} proves the main result, Thm. \ref{thm_main}, based on the complex Herman formula. As discussed there, the latter is an expression of asymptotic uniform hyperbolicity of the Schr\"odinger cocycle which is quantified by the Herman radius. The key ingredient here is Proposition \ref{prop_key} asserting that the Schr\"odinger cocycle is uniformly hyperbolic whenever $m(\epsilon; E) > 2$. 

Sec. \ref{sec_uniformds} contains a dynamical proof of Proposition \ref{prop_key} based on verifying a cone condition; the latter also allows to extract further estimates of the complexified LE (Proposition \ref{prop_key_amended}), thereby amending the result in Proposition \ref{prop_key}. The dynamical approach of Sec. \ref{sec_uniformds} is contrasted with a spectral theoretical proof of Proposition \ref{prop_key}, which in particular sheds a light on the spectral theoretic meaning of the lower bound ``2'' in the largeness condition on $m(\epsilon;E)$; as explained there, complexifying the phase leads to deformation of the spectrum of $H_{v(x); \alpha}$, thereby pushing a given energy $E$ into the resolvent set if $m(\epsilon; E) > 2$. 

In Sec. \ref{sec_someapplications} we present various applications of our main result, Thm. \ref{thm_main} to models of physical interest, among them to the generalized Harper model. Here, estimating the Herman radius is shown to be reduced to bounds on the largest positive root of a real polynomial, the latter of which are well explored in the literature.

It is natural try to extend Theorem \ref{thm_main} to Jacobi operators, which generalize quasi-periodic Schr\"odinger operators by introducing an additional trigonometric polynomial $c(x)$ whose evaluation modifies the discrete Laplacian, see (\ref{eq_jacobiop}). The extension is not immediate and is discussed in Sec. \ref{sec_jacobi}. It leads to distinguishing three cases depending on the relative degree of $c(x)$ and $v(x)$. 

We conclude the paper with some remarks on how one could use the ideas from Sec. \ref{sec_uniformds} to obtain conclusions about {\em{supercritical}} behavior (i.e. positivity of the LE) for quasi-periodic Schr\"odinger operators. Recently we learned that Jitomirskaya-Liu obtained the first quantitive results on positivity of the LE for the potential $v(x) = 2 (\lambda_1 \cos(2 \pi x) + \lambda_2 \cos(4 \pi x))$ which go beyond the classical Herman bound in (\ref{eq_herman}) \cite{JitomirskayaLiu_2015}. Even though our method yields a obtain a general lower bound for the LE improving Herman (see Proposition \ref{prop_positiveLE}), it unfortunately proved to be difficult to extract quantitative results for a concrete potential. On the other hand, the lower bound in Proposition \ref{prop_positiveLE} can be easily analyzed numerically, thereby giving rise at least to a simple numerical scheme to test for super-criticality.

\subsection{Some notation}
As common, for $\delta >0$ and $X = \mathbb{R}, \mathbb{C}, M_2(\mathbb{C})$, the space $\mathcal{C}_\delta^\omega(\mathbb{T}, X)$ denotes the $X$-valued functions on $\mathbb{T}$ with holomorphic extension to a neighborhood of $\abs{\im{z}} \leq \delta$, $\delta>0$ equipped with supremum norm. To obtain statements independent of $\delta$, we consider $\mathcal{C}^\omega(\mathbb{T};X):=\cup_{\delta>0} \mathcal{C}_\delta^\omega(\mathbb{T},X)$ with the {\em{inductive limit topology}} induced by $\Vert . \Vert_\delta$. In this topology, convergence of a sequence $f_n \to f$ is equivalent to existence of some $\delta>0$ such that $f_n \in \mathcal{C}_\delta^\omega(\mathbb{T}, X)$ eventually and $\Vert f_n - f \Vert_\delta \to 0$ as $n \to \infty$.

\section{The complex Herman formula} \label{sec_complexHerman}

Fix $E \in \mathbb{R}$. The complex Herman formula (\ref{eq_complexherman}) rests on the basic observation that if $v$ is given by (\ref{eq_potential}), the upper left corner in (\ref{eqn_cocycle}) will dominate the Schr\"odinger cocycle as $\vert \epsilon \vert \to \infty$. Specifically, complexifying the phase and taking out the dominating term yields
\begin{equation} \label{eq_almostconst}
B^E(x + i \epsilon) =  \left( a_M - \frac{b_M}{i} \right) \mathrm{e}^{2 \pi M \epsilon} \mathrm{e}^{- 2 \pi i M x} \begin{pmatrix} -1 + o(1) & o(1) \\ o(1) & 0 \end{pmatrix} ~\mbox{,}
\end{equation}
uniformly in $x \in \mathbb{T}$, as $\epsilon \to +\infty$. 

Thus the cocycle dynamics is asymptotically determined by the {{\em{almost constant cocycle}}, $(\alpha, ( \begin{smallmatrix} - 1 + o(1) & o(1) \\ o(1) & 0 \end{smallmatrix} ))$. Since $L( \alpha, ( \begin{smallmatrix} -1 & 0 \\ 0 & 0 \end{smallmatrix} ) ) = 0$, continuity of the LE in the analytic category \cite{JitomirskayaMarx_2012,AvilaJitomirskayaSadel_2013} implies
\begin{equation}
L(\epsilon; E) = \log \vert i a_M - b_M \vert + 2 \pi M \vert \epsilon \vert + o(1) ~\mbox{, as $\epsilon \to \infty$ .}
\end{equation}

On other hand, from Avila's global theory the analytic properties of the complexfied LE are well understood:
\begin{theorem}[\cite{Avila_globalthy_published}] \label{thm_complexLEglobal}
Let $\alpha$ irrational and $E \in \mathbb{R}$. Then, $\epsilon \mapsto L(\epsilon; E)$ is convex, even, non-negative, and piecewise linear with right-derivatives satisfying
\begin{equation} \label{eq_accel}
\omega(\epsilon; E) := \frac{1}{2 \pi} D_+ L(\epsilon; E) \in \mathbb{Z} ~\mbox{.}
\end{equation}
\end{theorem}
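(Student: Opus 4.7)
The plan is to reproduce Avila's proof of this structural theorem from \cite{Avila_globalthy_published}, treating the four asserted properties in order of increasing difficulty.

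The first three properties rest on plurisubharmonic (psh) considerations. For each $n \in \N$, the function $\phi_n(z) := \log\|B_n^E(z)\|$ is psh on the strip of holomorphy of $v$, since $B_n^E$ is a holomorphic matrix-valued function. Its real-direction average
\[
\Phi_n(z) \;:=\; \frac{1}{n}\int_0^1 \phi_n(x+z)\,dx
\]
is again psh in $z$, but by $1$-periodicity of $\phi_n$ in $\re z$ it depends only on $\epsilon := \im z$. Subharmonicity $\Delta\Phi_n \geq 0$ combined with $\partial_x^2 \Phi_n \equiv 0$ forces $\partial_\epsilon^2 \Phi_n \geq 0$, so $\Phi_n$ is convex in $\epsilon$; convexity passes to the pointwise limit $\Phi_n \to L(\cdot;E)$. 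Evenness follows from $v$ having real Fourier coefficients, which gives $v(\bar z) = \overline{v(z)}$, hence $B_n^E(\bar z) = \overline{B_n^E(z)}$ and $\|B_n^E(x - i\epsilon)\| = \|B_n^E(x + i\epsilon)\|$. Non-negativity is immediate from $\det B_n^E(z) \equiv 1$, which forces $\|B_n^E(z)\| \geq 1$ and therefore $L \geq 0$.

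The genuine content lies in the piecewise linearity with slopes in $2\pi\Z$. The scalar prototype is elementary: for $g \in \mathcal{C}^\omega(\T,\C)$, writing $w = e^{2\pi i z}$ and viewing $g$ as holomorphic on an annulus in $w$, Jensen's formula implies that $\epsilon \mapsto \int_\T \log|g(x + i\epsilon)|\,dx$ is piecewise linear, with right-derivative equal to $2\pi$ times the winding number of $g$ on $|w| = e^{-2\pi\epsilon}$---so valued in $2\pi\Z$---and breakpoints at $\epsilon = -\frac{1}{2\pi}\log|w_j|$ corresponding to the zeros $w_j$ of $g$. To transfer this to the matrix setting, Avila considers the finite-$n$ approximants $L_n(\epsilon) := \frac{1}{n}\int \log\|B_n^E(x + i\epsilon)\|\,dx$ and exploits the $SL(2,\C)$ structure: the singular values of $B_n^E$ satisfy $\sigma_1\sigma_2 = 1$, so $\|B_n^E\|^2$ is an explicit function of the Hilbert--Schmidt norm alone. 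This reduces the matrix log-norm to a scalar expression amenable to the Jensen argument, yielding piecewise linearity of $L_n$ with right-derivatives quantized in $\tfrac{2\pi}{n}\Z$.

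The main obstacle is to promote this rational $\tfrac{2\pi}{n}\Z$-quantization at finite $n$ to the integer $2\pi\Z$-quantization of $L = \lim_n L_n$: a pointwise limit of convex functions with rational slopes need not produce integer slopes. Avila circumvents this by giving the right-derivative a topological interpretation, identifying $\omega(\epsilon;E)$ with an integer-valued index of the complexified cocycle---a winding number of the projectivized cocycle on $\mathbb{CP}^1$---which is stable under analytic perturbations and manifestly integer-valued. The identification is transparent on the uniformly hyperbolic locus, where the cocycle admits a continuous invariant splitting and the index can be read off from either invariant sub-bundle; it extends to the remaining $\epsilon$ by an upper-semicontinuity argument that exploits the convexity of $L(\cdot;E)$ already established. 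This topological input is the decisive new ingredient and constitutes the novel content of Avila's theorem.
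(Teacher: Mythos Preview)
The paper does not prove this theorem; it is quoted from \cite{Avila_globalthy_published} and used as a black box. There is therefore no ``paper's own proof'' to compare against, and your proposal should be read as an attempt to reconstruct Avila's argument.

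Your treatment of convexity, evenness, and non-negativity is correct and standard. The problem is in your route to quantization. You assert that the finite-$n$ averages
\[
L_n(\epsilon) \;=\; \frac{1}{n}\int_{\T} \log\|B_n^E(x+i\epsilon)\|\,dx
\]
are piecewise linear with slopes in $\tfrac{2\pi}{n}\Z$, via a reduction of $\|B_n^E\|$ to the Hilbert--Schmidt norm and then Jensen. This step fails: the Hilbert--Schmidt norm squared $\mathrm{tr}\bigl((B_n^E)^*B_n^E\bigr)=\sum_{j,k}|a_{jk}|^2$ is a sum of moduli of holomorphic functions, not the modulus of a single holomorphic function, so Jensen's formula does not apply. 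In fact $L_n(\epsilon)$ is a smooth (real-analytic) function of $\epsilon$ and is \emph{not} piecewise linear for finite $n$; piecewise linearity is an emergent feature of the limit, not present at any finite stage. So the ``rational quantization of $L_n$'' you describe does not exist, and the subsequent discussion of promoting $\tfrac{2\pi}{n}\Z$ to $2\pi\Z$ is addressing a non-problem.

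Avila's actual mechanism is closer to the last paragraph of your proposal than to the middle one. One works directly with the limiting $L$: on the (open) set of $\epsilon$ where $(\alpha,B^E(\cdot+i\epsilon))$ is uniformly hyperbolic, the holomorphic invariant section lets one write $L(\epsilon;E)=\int_{\T}\log|\lambda_1(x;\epsilon)|\,dx$ for a genuinely holomorphic scalar $\lambda_1$, and \emph{there} Jensen gives local affinity with integer slope. Quantization at the remaining $\epsilon$ then follows from convexity together with a density/perturbation argument (and, in Avila's presentation, a periodic approximation in $\alpha$). Your final paragraph gestures at this correctly, but it is the whole argument, not a patch on top of a finite-$n$ scheme; the middle paragraph should be dropped.
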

\begin{remark} \label{rem_quantiz}
The quantity defined in (\ref{eq_accel}) is known as the {\em{acceleration}}. Following the proof of (\ref{eq_accel}) given in \cite{Avila_globalthy_published} actually shows that if $v$ is $1/d$-periodic for some $d \in \mathbb{N}$, then 
$\omega(\epsilon; E) \in d \mathbb{Z}$. In particular for $v$ of the form (\ref{eq_potential}), the least {\em{positive}} value of the acceleration is $d$ defined in (\ref{eq_gcd}). As we will see, this accounts for the appearance of $d$ in Theorem \ref{thm_main}.
\end{remark}

From Theorem \ref{thm_complexLEglobal} we thus conclude existence of some $\epsilon_0 \geq 0$ such that 
\begin{equation} \label{eq_complexherman_1}
L(\epsilon; E) = \log \vert i a_M - b_M \vert + 2 \pi M \vert \epsilon \vert ~\mbox{, for all $\vert \epsilon \vert \geq \epsilon_0$.}
\end{equation}
Note that by convexity, the asymptotic formula (\ref{eq_complexherman_1}) automatically implies a global lower bound, which, letting $\epsilon = 0$, recovers the original Herman bound in (\ref{eq_herman}).

We mention that above argument was first used in \cite{JitomirskayaMarx_2012} to study extended Harper's model,  a Jacobi operator generalizing the AMO. There, as a result of (\ref{eq_accel}), the limited values of the acceleration allowed to extrapolate the asymptotics to obtain an expression for $L(\epsilon;E)$ valid for {\em{all}} $\epsilon \in \mathbb{R}$. Using Remark \ref{rem_quantiz}, this analysis has an immediate extension to quasi-periodic Schr\"odinger operators if $v$ in (\ref{eq_potential}) has only {\em{one}} non-vanishing term, in which case for all $E \in \Sigma$,
\begin{equation}
L(\epsilon; E) = \max \{ \log \vert i a_M - b_M \vert + 2 \pi M \vert \epsilon \vert ; 0\} ~\mbox{, all $\epsilon \in \mathbb{R}$.}
\end{equation} 
In particular, the situation is analogous to the AMO, i.e. all energies are subcritical for $\vert i a_M - b_M \vert < 1$, critical if $\vert i a_M - b_M \vert =1$, and supercritical if $\vert i a_M - b_M \vert > 1$. 

For more general $v$, the simple idea underlying Theorem \ref{thm_main} is to gain additional information about the complexified LE by {\em{quantifying}} when the asymptotic formula in (\ref{eq_complexherman_1}) holds or, put equivalently, when $\epsilon \mapsto L(\epsilon; E)$ is eventually linear.

To this end, we take advantage of a key result in \cite{Avila_globalthy_published} which characterizes the linear and positive segments of $\epsilon \mapsto L(\epsilon; E)$ by uniform hyperbolicity of the Schr\"odinger cocycle. The following provides a sufficient criterion for uniform hyperbolicity, therefore helps to identify linear pieces in the complexified LE. We recall the definition of the auxiliary function $m(\epsilon;  E)$ in (\ref{eq_defnminfunc}).
\begin{prop} \label{prop_key}
Let $v \in \mathcal{C}^\omega(\mathbb{T}; \mathbb{R})$, $\alpha \in \mathbb{T}$ irrational, and $E \in \mathbb{R}$. The Schr\"odinger cocycle is uniformly hyperbolic whenever $m(\epsilon; E) > 2$.
\end{prop}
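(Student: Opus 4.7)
The plan is to prove the proposition by exhibiting an invariant expanding cone field, from which uniform hyperbolicity follows by a standard projective-hyperbolicity argument. Write $w(x) := E - v(x + i\epsilon) \in \mathcal{C}^\omega(\mathbb{T}; \mathbb{C})$, so the hypothesis reads $m(\epsilon; E) = \min_{x \in \mathbb{T}} |w(x)| > 2$. Observe that the real function $c \mapsto c + c^{-1}$ on $(0, \infty)$ attains its minimum value $2$ at $c = 1$; hence I can fix some $c > 1$ with
\[
c + c^{-1} < m(\epsilon; E).
\]
The number $2$ in the hypothesis is precisely this minimum, which also explains why it is the natural threshold for a cone-based argument.

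The key step is a one-step cone estimate. Consider the open cone
\[
\mathcal{K} := \{ (u, v)^{\mathrm{T}} \in \mathbb{C}^2 \setminus \{0\} : |u| > c \, |v| \}.
\]
A direct computation gives $B^E(x + i\epsilon) (u, v)^{\mathrm{T}} = (w(x) u - v, \, u)^{\mathrm{T}}$, so for $(u, v)^{\mathrm{T}} \in \mathcal{K}$ the reverse triangle inequality yields
\[
|w(x) u - v| \;\geq\; |w(x)| |u| - |v| \;>\; (m(\epsilon; E) - c^{-1}) |u| \;>\; c \, |u|.
\]
In particular the image again lies in $\mathcal{K}$, and with respect to the $\ell^\infty$ norm on $\mathbb{C}^2$ (which equals $|u|$ on $\mathcal{K}$, since $c > 1$ forces $|v| < |u|$) the image has norm $|w(x) u - v| > c \, |u|$. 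Thus a single step of the cocycle strictly preserves $\mathcal{K}$ and inflates vectors in $\mathcal{K}$ by a factor $> c$, uniformly in $x \in \mathbb{T}$.

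Iterating, any vector in $\mathcal{K}$ is expanded by at least $c^n$ under the $n$-step cocycle $B_n^E(\,\cdot + i\epsilon;\alpha)$. By the standard cone-field criterion, the nested intersection $\bigcap_{n \geq 0} B_n^E(x - n\alpha + i\epsilon; \alpha) \overline{\mathcal{K}}$ collapses to a single continuous invariant line field $E^u(x)$ on which the cocycle expands at rate $\geq c$; and since $B^E(x + i\epsilon) \in \mathrm{SL}(2, \mathbb{C})$, the determinant-one constraint produces a complementary continuous invariant line field $E^s(x)$ on which the cocycle contracts at rate $\leq c^{-1}$. The resulting continuous dominated splitting $\mathbb{C}^2 = E^s(x) \oplus E^u(x)$ is precisely the definition of uniform hyperbolicity.

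The only mild subtlety is the passage from the expanding cone to the invariant line field $E^u(x)$. This is routine but requires that the projectivized action of $B^E(x + i\epsilon)$ on $\mathbb{CP}^1$ be a strict contraction of $\overline{\mathcal{K}}/\!\sim$ in the Hilbert projective metric, and this strict contraction is exactly what is bought by the strict inequality $c + c^{-1} < m(\epsilon; E)$ together with continuity of $x \mapsto w(x)$. Everything beyond that amounts to the one-line algebraic estimate displayed above.
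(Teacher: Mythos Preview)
Your proof is correct and follows the same cone-condition strategy as the paper's dynamical proof in Section~\ref{sec_uniformds}, but your execution is more direct: you work with the Schr\"odinger matrix $B^E(x+i\epsilon)$ itself and the cone $\{|u|>c|v|\}$ in $\mathbb{C}^2$, whereas the paper first factors $B^E_\epsilon = (E-v)\,D_\epsilon$ as in (\ref{eq_cocyclefactorization}) and then proves a general projective-contraction Lemma (Lemma~\ref{lem_stability}) for the normalized matrix $D_\epsilon$, applied on a disk in $\overline{\mathbb{C}}$. The two pictures are of course equivalent---your cone projects exactly to the paper's projective ball---and your choice of $c>1$ with $c+c^{-1}<m(\epsilon;E)$ is precisely the optimization hidden in the paper's choice $\gamma^*=\tfrac{1-\delta}{2}$ in the proof of Lemma~\ref{lem_stability}. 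What the paper's factorized route buys is the finer quantitative control of Proposition~\ref{prop_key_amended} (the estimate of $\Xi$ via the fixed point of $F_{D_\epsilon}$), which is used later for the supercritical bounds in Section~\ref{sec_supercrit}; for the bare statement of Proposition~\ref{prop_key} your argument is entirely sufficient and arguably cleaner.
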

\begin{remark} \label{rem_prop_key_optimality}
The lower bound ``2'' of $m(\epsilon; E)$ is {\em{optimal}} in general. For instance if $v \equiv 0$ and $E = \pm 2$, $(\alpha, B^E)$ cannot be uniformly hyperbolic since $\Sigma = [-2,2]$ and uniform hyperbolicity is known to be an open property which, by Johnson's theorem \cite{Johnson_1986}, cannot occur on the spectrum. The optimality will also follow directly from the proof given in Sec. \ref{sec_uniformds}, see Remark \ref{rem_lem_stability_optimality}.
\end{remark}
We postpone the proof of Proposition \ref{prop_key} for now (see Sec. \ref{sec_uniformds} and \ref{sec_propkeySpec}) and rather turn to showing how it implies our main result, Theorem \ref{thm_main}. 

\begin{proof}[Proof of Theorem \ref{thm_main}]
First observe that for $\epsilon$ outside the radius of zeros of \\ $(E - v)$, the minimum modulus principle implies
\begin{equation}
m(\epsilon; E) = \min_{\vert \im(z) \vert  \geq \epsilon} \vert E - v(z) \vert ~\mbox{,}
\end{equation}
whence $m(\epsilon; E)$ increases {\em{strictly}} for $\epsilon$ outside the radius of zeros of $(E - v)$. In particular, the properties of $\epsilon \mapsto L(\epsilon;E)$ stated in Proposition \ref{prop_key} imply that (\ref{eq_complexherman_1}) holds with $\epsilon_0$ replaced by the Herman radius,  $\epsilon_H = \epsilon_H(E;  \{ a_j ; b_j \}_{j=1}^{M})$, introduced in Sec. \ref{sec_intro}.

To prove Theorem \ref{thm_main} (i), considering the contrapositive, if $E\in\Sigma(\alpha)$ is \emph{not} subcritical, Remark \ref{rem_quantiz} yields $\omega(\epsilon=0; E)\ge d$. Then, the complex Herman formula and convexity of $L(\epsilon;E)$ yield the {\em{upper}} bound,
\begin{align*}
0\le L(\epsilon;E)&\le\log| i a_M - b_M |+2\pi M \epsilon_H+2\pi d(\epsilon-\epsilon_H) ~\mbox{, $0 \leq \epsilon \leq \epsilon_H$ .}
\end{align*}

In particular,
\[
\log| i a_M - b_M |+2\pi\epsilon_H(M-d)\ge0,
\]
which is equivalent to
\[
\epsilon_H\ge-\frac{\log| i a_M - b_M |}{2\pi(M-d)} ~\mbox{.}
\]
Thus if $\epsilon_H<-\dfrac{\log| i a_M - b_M |}{2\pi(M-d)}$, $E$ must be subcritical.

Finally, we obtain the uniform criterion for subcritical behavior in Theorem \ref{thm_main} (ii), estimating the spectral radius of $H_{v(x); \alpha}$ from above by $2 + 2 \sum_{n=1}^{M} \left( \vert i a_n  - b_n \vert \right)$; this allows to eliminate the energy dependence by replacing $\epsilon_H$ with the uniform Herman radius ($\epsilon_{H; {\mathrm{unif}}}$) defined in (\ref{eq_unifHerman}).
\end{proof}

Theorem \ref{thm_main} has thus been reduced to proving Proposition \ref{prop_key}. We shall give two proofs, one dynamical and one spectral theoretical, the subjects of the Sec. \ref{sec_uniformds} and \ref{sec_propkeySpec}, respectively.

\section{Asymptotic domination} \label{sec_uniformds}
 
We recall that factoring the Schr\"odinger cocycle according to 
\begin{equation} \label{eq_cocyclefactorization}
B^E(x + i \epsilon) = \left( E - v(x + i \epsilon) \right) \begin{pmatrix} 1 & - \frac{1}{E - v(x + i \epsilon)} \\ \frac{1}{E - v(x + i \epsilon)} & 0 \end{pmatrix} =: \left( E - v(x + i \epsilon) \right) D_\epsilon(x) ~\mbox{,}
\end{equation}
shows that its asymptotic dynamics is determined by $(\alpha, D_\epsilon)$, which as $\epsilon \to +\infty$, is uniformly close to the constant cocycle $(\alpha,  (\begin{smallmatrix} 1 & 0 \\ 0 & 0 \end{smallmatrix} ))$. Trivially, the latter induces the invariant splitting $\C^2=\langle\binom{1}{0}\rangle\oplus\langle\binom{0}{1}\rangle$ such that the dynamics in one invariant subspace dominates the other. In this section, we prove Proposition \ref{prop_key} by showing that these dynamical features are in fact already present once $m(\epsilon; E) > 2$.

To capture these dynamical features precisely, we recall the following terminology from partially hyperbolic dynamics. Given $D \in \mathcal{C}^\omega(\mathbb{T}; M_2(\mathbb{C}))$ and $\alpha \in \mathbb{T}$, a cocycle $(\alpha,D)$ is said to induce a {\em{dominated splitting}} (also ``uniform domination;'' write $(\alpha,D)\in\mathcal{DS}$) if there exists a continuous, nontrivial splitting $\C^2=S_x^{(1)}\oplus S_x^{(2)}$ and $N\in\N$ satisfying
\begin{enumerate}
\item[(i)] $(\alpha,D)$-invariance, i.e. $D_N(x; \alpha) E_x^{(j)}\subseteq E_{x+N\alpha}^{(j)}$, for $1 \leq j \leq 2$,
\item[(ii)] for all $v_j\in S_x^{(j)}\setminus\{0\}$, $1 \leq j \leq 2$, one has
\begin{equation}
\frac{\Vert D_N(x; \alpha) )v_1\Vert}{\Vert v_1\Vert}>\frac{\Vert D_N(x; \alpha)v_2\Vert}{\Vert v_2\Vert} ~\mbox{.} \label{def:dominance}
\end{equation}
\end{enumerate}
Here, as in (\ref{eq_solns}), we denote $D_N(x ; \alpha): = \prod_{j = N-1}^{0} D(x + j \alpha)$. For obvious reasons, will refer to $S_x^{(1)}$ as the {\em{dominating section}} and to $S_x^{(2)}$ as the {\em{minoring section}}.

Clearly, the condition $(\alpha,D)\in\mathcal{DS}$ is equivalent to some iterate of $(\alpha, D)$ being continuously conjugate to a diagonal cocycle where one diagonal entry uniformly dominates the other, i.e. $\exists N \in\mathbb{N}$ and $C\in\mathcal{C}(\mathbb{T},GL(2,\mathbb{C}))$ such that
\begin{equation}
C(x+N\alpha)^{-1}D_N(x)C(x)=\begin{pmatrix}
\lambda_1(x)&0\\0&\lambda_2(x)
\end{pmatrix},\label{eqn:conjugate}
\end{equation}
with $\lambda_1,\lambda_2\in\mathcal{C}(\mathbb{T},\mathbb{C})$ such that for all $x\in\mathbb{T}$,
\begin{equation} \label{eq_ds}
\vert \lambda_1(x) \vert > \vert \lambda_2(x) \vert ~\mbox{.}
\end{equation}
We mention that for analytic cocycles it is well known that analyticity is inherited by the invariant splitting, which in turn gives rise to analyticity of the conjugacy, see e.g. \cite[Theorem 6.1]{AvilaJitomirskayaSadel_2013}. 

Obviously, $(\alpha,  (\begin{smallmatrix} 1 & 0 \\ 0 & 0 \end{smallmatrix} )) \in \mathcal{DS}$. For a cocycle $(\alpha, D)$ with $\vert \det D \vert \equiv 1$, $\mathcal{DS}$ reduces to the notion of {\em{uniform hyperbolicity}} ($\mathcal{UH}$), in which case (\ref{eq_ds}) simplifies to
\begin{equation} \label{eq_condunifh}
\inf_{x \in \mathbb{T}} \vert \lambda_1(x) \vert > 1 ~\mbox{.}
\end{equation}
Since $\mathcal{DS}$ is the appropriate notion for the non-invertible cocycle $(\alpha,  (\begin{smallmatrix} 1 & 0 \\ 0 & 0 \end{smallmatrix} ))$, it will however be more convenient in this section to work with the latter. From the factorization in (\ref{eq_cocyclefactorization}) it is clear that 
\begin{equation}
(\alpha, B_\epsilon^E) \in \mathcal{UH} \Leftrightarrow (\alpha, D_\epsilon) \in \mathcal{DS} ~\mbox{,}
\end{equation}
whence the proof of Proposition \ref{prop_key} is reduced to showing that 
\begin{equation}
(\alpha, D_\epsilon) \in \mathcal{DS} ~\mbox{whenever } m(\epsilon; E) > 2 ~\mbox{.}
\end{equation}

It is well known that $\mathcal{DS}$ is an open property in $\mathbb{T} \times \mathcal{C}^\omega(\mathbb{T}; M_2(\mathbb{C}))$ \cite{Ruelle_1979_LEanalytic, BonattiDiazViana_book_2005}, in particular, $(\alpha, D_\epsilon) \in \mathcal{DS}$ once $m(\epsilon; E)$ is sufficiently large. The point here is to {\em{quantify}} the neighborhood of stability for $\mathcal{DS}$ about $(\alpha,  (\begin{smallmatrix} 1 & 0 \\ 0 & 0 \end{smallmatrix} ))$, which will result in Proposition \ref{prop_key}. This will be done by verifying a {\em{cone condition}}, a well known strategy to detect presence of a dominated splitting. 

It will be useful to work in the projective plane $\mathbb{PC}^2$ which we identify with $\overline{\mathbb{C}}:= \mathbb{C} \cup \{ \infty \}$ via $(v_1, v_2) \mapsto \frac{v_2}{v_1}$ so that $D = \left( \begin{smallmatrix} a & b \\ c & d  \end{smallmatrix} \right) \in M_2(\mathbb{C})$ acts on $\overline{\C} \setminus \ker D$ as the fractional linear transformation
\begin{equation} \label{eq_matrixaction}
D \cdot z := \dfrac{c + dz}{a + bz} ~\mbox{.} 
\end{equation} 

Given a cocycle $(\alpha,D)$, a conefield for $(\alpha,D)$ is an open subset $U \subset \mathbb{T} \times \mathbb{PC}^2$ of the form $\cup_{x \in \mathbb{T}} \{x\} \times U_x$ such that, for all $x \in \mathbb{T}$, $\overline{U_x}$ is non-empty, properly contained in $\mathbb{PC}^2$, and $\overline{U_x} \cap \ker D(x) = \emptyset$. A conefield $U = \cup_{x \in \mathbb{T}} \{x\} \times U_x$ for $(\alpha,D)$ is said to satisfy a cone condition if there exists $N \in \mathbb{N}$ such that for every $x \in \mathbb{T}$, one has that $D_N(\alpha; x) \cdot \overline{U_x} \subseteq U_{x + N\alpha}$. It is known (see e.g. \cite{Avila_2011, AvilaJitomirskayaSadel_2013}) that verifying a cone condition implies $\mathcal{DS}$.

Using (\ref{eq_matrixaction}), the proof of Proposition \ref{prop_key} is hence reduced to the following simple contraction Lemma:

\begin{lemma} \label{lem_stability}
For $\epsilon > 0$ and $0 \leq \delta < 1$, consider the class of matrices
\begin{equation}
\mathcal{S}_{\epsilon, \delta} : =  \left\{ \left( \begin{smallmatrix} 1 & b \\ c & d \end{smallmatrix} \right) \in M_2(\mathbb{C}) : \, |d| \leq \delta; ~|b| < \epsilon; \, |c| \epsilon < \dfrac{(1-\delta)^2}{4}\right\} ~\mbox{.}
\end{equation}
Then for each $D \in \mathcal{S}_{\epsilon, \delta}$, the map $F_D: \overline{B_r(0)} \to \overline{B_r(0)}$, $F_D(z):= D \cdot z$ is a contraction, where $r = r(\epsilon, \delta) := \frac{1-\delta}{2 \epsilon}$. 
\end{lemma}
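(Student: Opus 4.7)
The plan is to verify both assertions by direct computation on the fractional linear expression $F_D(z) = (c+dz)/(1+bz)$. All of the estimates reduce to the substitution $r\epsilon = (1-\delta)/2$ together with the single algebraic identity
\[
\delta + \frac{(1-\delta)^2}{4} = \frac{(1+\delta)^2}{4};
\]
this identity is precisely what makes the numerical constants in the hypotheses tight.

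First I would establish invariance of $\overline{B_r(0)}$. For $|z| \leq r$, the bound $|b|r < \epsilon r = (1-\delta)/2$ gives
\[
|1 + bz| \geq 1 - |b|r > \frac{1+\delta}{2},
\]
while $|c + dz| \leq |c| + \delta r < (1-\delta)^2/(4\epsilon) + \delta(1-\delta)/(2\epsilon) = (1-\delta^2)/(4\epsilon)$. Dividing and using the identity above yields $|F_D(z)| < r$. So $F_D$ in fact maps $\overline{B_r(0)}$ into the \emph{open} ball $B_r(0)$, which is slightly stronger than what is claimed.

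For the contraction property I would compute the derivative
\[
F_D'(z) = \frac{d - bc}{(1 + bz)^2}
\]
and reuse the same two estimates: on $\overline{B_r(0)}$, one has $|d - bc| \leq \delta + |b||c| < \delta + (1-\delta)^2/4 = (1+\delta)^2/4$ and $|1 + bz|^2 > (1+\delta)^2/4$. Hence $|F_D'(z)| < 1$ pointwise, and compactness of $\overline{B_r(0)}$ promotes this to a uniform bound, giving the Euclidean contraction. A cleaner variant is to invoke the Schwarz--Pick lemma: since $F_D(\overline{B_r(0)}) \subset B_r(0)$ is already in hand, $F_D$ is a strict hyperbolic contraction of $B_r(0)$, which on the compact set $\overline{B_r(0)}$ is equivalent to a Euclidean strict contraction and bypasses the derivative computation altogether.

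The only real subtlety is bookkeeping strict versus non-strict inequalities: the extremal case $|b| = \epsilon$, $|c|\epsilon = (1-\delta)^2/4$ would produce $|F_D'| = 1$ and only non-expansion. Both strict hypotheses in the lemma are therefore essential, and it is important to verify that at least one of the two strict bounds survives in each of the subcases $b = 0$ and $b \neq 0$ (in the former, $|F_D'| \leq \delta < 1$ trivially). Everything else is a routine estimation on a Möbius transformation.
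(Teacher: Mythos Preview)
Your proof is correct and follows essentially the same route as the paper's: both establish invariance by estimating $|c+dz|/|1+bz|$ and then verify contraction via the derivative bound $|F_D'(z)|=|d-bc|/|1+bz|^2<1$. The only organizational difference is that the paper parametrizes the radius as $r_\gamma=\gamma/\epsilon$ and optimizes over $\gamma$ to discover $\gamma^*=(1-\delta)/2$, whereas you take $r=(1-\delta)/(2\epsilon)$ as given and package the arithmetic through the identity $\delta+(1-\delta)^2/4=(1+\delta)^2/4$; your Schwarz--Pick remark is a pleasant bonus not in the paper.
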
 
\begin{remark} \label{rem_lem_stability_optimality}
The conditions on $\vert b \vert$, $\vert c \vert$ in Lemma \ref{lem_stability} are in general optimal. For instance, direct computation shows that if $D = (\begin{smallmatrix} 1 & -1/2 \\ 1/2 & 0 \end{smallmatrix})$, $F_D$ is not a contraction for any $r > 0$, cf. also Remark \ref{rem_prop_key_optimality}.
\end{remark}

\begin{proof}
Let $D \in \mathcal{S}_{\epsilon, \delta}$. Write $r_{\gamma} := \frac{\gamma}{\epsilon}$, for some $\gamma \in (0,1)$ to be determined later. If $|z| \leq r_{\gamma}$, then 
\begin{equation}
|F_D(z)| = \dfrac{|c+dz|}{|1+bz|} \leq \dfrac{1}{1-\gamma}\left(|c| + \frac{\delta \gamma}{\epsilon}\right) ~\mbox{.}
\end{equation}

Thus, $F_D$ maps $\overline{B_{r_{\gamma}}(0)}$ to itself, if the parameters satisfy
\begin{equation} \label{eq_cond1}
\epsilon |c| \leq \gamma ((1-\gamma)-\delta) ~\mbox{.} 
\end{equation}

For fixed $\delta \in [0,1)$, the right hand side of (\ref{eq_cond1}) is maximized when $\gamma = \dfrac{1-\delta}{2}=: \gamma^*$, so that the condition in (\ref{eq_cond1}) becomes
\begin{equation}
\epsilon \vert c \vert \leq \frac{(1 -\delta)^2}{4} ~\mbox{.}
\end{equation}

On the other hand, one has 
\begin{equation} 
\sup_{z, w \in \overline{B_{r_{\gamma}}(0)}} \dfrac{|F_D(z) - F_D(w)|}{|z - w|} = \sup_{z \in \overline{B_{r_{\gamma}}(0)}}|F'_D(z)| = \sup_{z \in \overline{B_{r_{\gamma}}(0)}} \dfrac{|d-bc|}{|1+bz|^2} \leq \dfrac{\delta +\epsilon |c|}{(1-\gamma)^2} ~\mbox{.}
\end{equation}
Thus, for $F_D$ to be a contraction on $\overline{B_{r_{\gamma}}(0)}$ it suffices to have
\begin{equation}
\delta + \epsilon |c| < (1- \gamma)^2 ~\mbox{,} 
\end{equation}
which, taking $\gamma = \gamma^*$, becomes
\begin{equation}
\epsilon |c| < \dfrac{(1-\delta)^2}{4} ~\mbox{,}
\end{equation} 
in agreement with the definition of $\mathcal{S}_{\epsilon, \delta}$. 
\end{proof}

Finally, we mention that Lemma \ref{lem_stability} allows to extract an estimate for the complexified Lyapunov exponent. To this end, we note that the proof of Lemma \ref{lem_stability} shows that if $D = (\begin{smallmatrix} 1 & b \\ c & d \end{smallmatrix})$ satisfies $\vert b \vert , \vert c \vert \leq \eta < \frac{1 - \delta}{2}$, then $F_D$ is a contraction on $\overline{B_1(0)}$ with contraction constant
\begin{equation} \label{eq_contractionconst}
K \leq \frac{\delta + \eta^2}{(1 - \eta)^2} ~\mbox{.}
\end{equation}


\begin{prop}[Proposition \ref{prop_key} amended] \label{prop_key_amended}
Let $v \in \mathcal{C}^\omega(\mathbb{T}; \mathbb{R})$, $\alpha \in \mathbb{T}$ irrational, and $E \in \mathbb{R}$. The Schr\"odinger cocycle is uniformly hyperbolic whenever \newline $m(\epsilon; E) > 2$, in which case 
\begin{equation} \label{eq_complexLE_ds_bd}
L(\epsilon; E) = \int_\mathbb{T} \log \vert E - v(x + i \epsilon) \vert ~\ud x + \Xi ~\mbox{,} 
\end{equation}
where
\begin{eqnarray} \label{eq_complexLE_ds_bd_1}
~\quad & ~~ \log \left\{ \dfrac{ (1 - \frac{\sigma(\epsilon;E)}{m(\epsilon;E)})^2 + \frac{1}{m(\epsilon;E)^2} }{1 + \sigma(\epsilon;E)^2}  \right\} \leq 2 \Xi \leq \log \left\{ \left(1 + \frac{\sigma(\epsilon;E)}{m(\epsilon;E)}\right)^2 + \frac{1}{m(\epsilon;E)^2} \right\}     ~\mbox{, } \\ 
& \sigma(\epsilon; E) = \min\left\{ 1 ~;~ \dfrac{m(\epsilon;E)-1}{m(\epsilon;E)(m(\epsilon;E)-2)} \right\} \label{eq_estimatesolutionFP} ~\mbox{.}
\end{eqnarray}
\end{prop}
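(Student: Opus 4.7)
The plan is to exploit the scalar-matrix factorization $B^E(\cdot + i\epsilon) = (E - v(\cdot+i\epsilon))\, D_\epsilon$ in \eqref{eq_cocyclefactorization} to split the complexified LE as
\begin{equation*}
L(\epsilon; E) = \int_\mathbb{T} \log|E-v(x+i\epsilon)|\, dx + L(\alpha, D_\epsilon),
\end{equation*}
which follows by factoring the scalars out of $B_n^E$ (they commute with the matrices), taking logs of operator norms, and averaging. It therefore suffices to bound $\Xi := L(\alpha, D_\epsilon)$.

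The first main step is to use Proposition \ref{prop_key} to identify the dominating invariant section of $(\alpha, D_\epsilon)$. Applying Lemma \ref{lem_stability} to $D_\epsilon(x) \in \mathcal{S}_{1/m(\epsilon;E),0}$ (with $\delta = 0$ and $\eta = 1/m(\epsilon;E)$) places the projective fixed point in $\overline{B_1(0)} \subset \overline{\mathbb{C}}$; hence $S_x^{(1)} = \mathrm{span}\{u_1(x)\}$ with $u_1(x) = (1,\phi(x))^T$, $\phi \in \mathcal{C}^\omega(\mathbb{T};\mathbb{C})$, and $|\phi|\leq 1$. The invariance identity $D_\epsilon(x)\, u_1(x) = \lambda_1(x)\, u_1(x+\alpha)$ yields $\lambda_1(x) = 1 - \phi(x)/(E-v(x+i\epsilon))$ and the fixed-point relation $\phi(x+\alpha) = 1/(E-v(x+i\epsilon) - \phi(x))$. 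Pointwise dominance of $|\lambda_1|$ in the $\mathcal{DS}$ splitting then gives
\begin{equation*}
2\Xi = \int_\mathbb{T} \log|\lambda_1(x)|^2\, dx = \int_\mathbb{T} \log \frac{|1-\phi(x)/(E-v(x+i\epsilon))|^2 + 1/|E-v(x+i\epsilon)|^2}{1+|\phi(x)|^2}\, dx,
\end{equation*}
where the second equality uses $\|D_\epsilon u_1\|^2 = |\lambda_1|^2\|u_1\circ(\cdot+\alpha)\|^2$ and translation invariance of $dx$ on $\mathbb{T}$.

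To sharpen the crude bound $|\phi|\leq 1$ to $|\phi|\leq\sigma(\epsilon;E)$, I would impose self-consistency on the fixed-point relation: since $|\phi(x+\alpha)| \leq 1/(m-|\phi(x)|)$, any uniform bound $\sigma$ must satisfy $\sigma(m-\sigma)\geq 1$. The closed form $\sigma = \min\{1,(m-1)/(m(m-2))\}$ in \eqref{eq_estimatesolutionFP} is verified to satisfy this inequality: on the branch where $(m-1)/(m(m-2))\leq 1$ (i.e.\ $m\geq (3+\sqrt{5})/2$), the requirement reduces to $m^3-3m^2+2m-1 \geq 0$, which holds throughout that range.

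The upper bound in \eqref{eq_complexLE_ds_bd_1} then follows by pointwise estimation of the integrand: $|\phi|\leq\sigma$ and $|E-v|\geq m$ give $|1-\phi/(E-v)|^2 + 1/|E-v|^2 \leq (1+\sigma/m)^2 + 1/m^2$, while $1+|\phi|^2 \geq 1$. The main obstacle is the lower bound, which requires a coupled constrained optimization: setting $w := 1/(E-v)$ and $u := \phi\, w$, the numerator equals $|1-u|^2 + |w|^2$ subject to $|u|\leq \sigma|w|$ and $|w|\leq 1/m$. A direct calculation locates the unconstrained minimizer of $g(|w|) := (1-\sigma|w|)^2 + |w|^2$ at $|w|^* = \sigma/(1+\sigma^2)$; crucially, the self-consistency bound $\sigma(m-\sigma)\geq 1$ is precisely equivalent to $|w|^*\geq 1/m$, which pushes the constrained minimum to the boundary $|w|=1/m$ and produces the value $(1-\sigma/m)^2 + 1/m^2$. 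Combined with $1+|\phi|^2 \leq 1+\sigma^2$, taking $\log$ and integrating yields the claimed lower bound.
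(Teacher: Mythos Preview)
Your proof follows the same architecture as the paper's: factor via \eqref{eq_cocyclefactorization}, write $\Xi = L(\alpha, D_\epsilon) = \int_\mathbb{T}\log|\lambda_1|\,dx$ along the dominating section $(1,\phi)^T$, bound $|\phi| \leq \sigma$, then optimize the integrand pointwise. Your lower-bound argument is in fact more explicit than the paper's --- which merely asserts that $(b,S) \mapsto \frac{(1-bS)^2+b^2}{1+S^2}$ attains its minimum on $[0,1/m]\times[0,\sigma]$ at the corner $(1/m,\sigma)$ --- and your observation that $\sigma(m-\sigma)\geq 1$ is \emph{precisely} the condition pushing the unconstrained minimizer of $g(t)=(1-\sigma t)^2+t^2$ past $1/m$ is a clean way to see why the corner wins.

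One step needs repair, however. The sentence ``any uniform bound $\sigma$ must satisfy $\sigma(m-\sigma)\geq 1$'' has the logic reversed: verifying this inequality for the given $\sigma$ only shows that $[0,\sigma]$ is forward-invariant under $t\mapsto 1/(m-t)$, not that $\sup|\phi|$ actually lies there. The fix is short. Set $s := \sup_x|\phi(x)| \leq 1$; taking suprema in $|\phi(x+\alpha)| \leq 1/(m-|\phi(x)|)$ yields $s \leq 1/(m-s)$, i.e.\ $s(m-s)\leq 1$, so $s \leq (m-\sqrt{m^2-4})/2$. Your verified inequality $\sigma(m-\sigma)\geq 1$ then says exactly that $\sigma$ lies at or above this smaller root, hence $s\leq\sigma$. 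The paper handles this step differently: it compares the contractions $F_{D_\epsilon}$ and $F_{D_{\epsilon=\infty}}$ on $\overline{B_1(0)}$ via a standard Banach fixed-point stability estimate (Fact~\ref{fact:contraction}), obtaining from $\gamma \leq 1/(m-1)$ and contraction constant $K \leq 1/(m-1)^2$ (cf.\ \eqref{eq_contractionconst}) the bound $|\phi| \leq \gamma/(1-K) = (m-1)/(m(m-2))$ directly. Your bootstrap is more elementary and yields the same $\sigma$; the paper's route has the advantage of being a clean black-box application once Lemma~\ref{lem_stability} is in hand.
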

\begin{remark}
It is clear from the arguments above that 
\begin{equation}
\Xi = L(\alpha, D_\epsilon) = 0 ~\mbox{, for $\vert \epsilon \vert \geq \epsilon_H$ .}
\end{equation}
Proposition \ref{prop_key_amended} provides additional information {\em{outside}} the asymptotic regime, i.e. if $m(\epsilon_1;E) > 2$ for some $0 \leq \vert \epsilon_1 \vert < \epsilon_H$. In particular, for such $\epsilon_1$, (\ref{eq_complexLE_ds_bd_1}) implies $\Xi \geq - \log(2)$, or
\begin{equation} \label{eq_complexLE_ds_lowerboundsupercrit}
L(\epsilon_1; E) \geq  \int_\mathbb{T} \log \vert E - v(x + i \epsilon_1) \vert ~\ud x - \log(2) ~\mbox{.}
\end{equation}
We will return to this in Sec. \ref{sec_supercrit}.
\end{remark}

\begin{proof}
Fix $\epsilon > 0$ such that $m(\epsilon; E) > 2$. First observe that if $w_\epsilon: \mathbb{T} \to \mathbb{C}^2$ is any continuous lift of the dominating section $S_1(x; \epsilon)$ with $\|w_\epsilon(x)\| \equiv 1$, the complexified LE is given by
\begin{equation}
L(\epsilon; E)=\int_{\mathbb{T}}\log\|B^E_\epsilon(x) w_\epsilon(x) \| ~\ud x ~\mbox{.}
\end{equation}
In particular, normalizing $\left(\begin{smallmatrix}1\\S_1(\epsilon; x)\end{smallmatrix}\right)\in\mathbb{C}^2$, the factorization in (\ref{eq_cocyclefactorization}) yields (\ref{eq_complexLE_ds_bd}) with $\Xi = L(\alpha, D_\epsilon)$ given by
\begin{equation}\label{eqn:LE-logE}
\Xi = \int_{\mathbb{T}} \log \Vert D_\epsilon(x) \begin{pmatrix} 1 \\ S_1(\epsilon; x) \end{pmatrix} \Vert \ud x - \frac{1}{2} \int_\mathbb{T} \log ( 1 + \vert S_1(\epsilon; x) \vert^2 ) \ud x ~\mbox{.}
\end{equation}

To estimate $\Xi$, note that from the cone condition, $S_1(x; \epsilon)$ is determined by the fixed point problem $D_\epsilon(x) \cdot S_1(x;\epsilon) = S_1(x+\alpha; \epsilon)$, which, by Lemma \ref{lem_stability} has a solution in $\mathcal{C}^\omega(\mathbb{T}; \overline{B_1(0)})$ since
\begin{equation} \label{eq_prop_key_amended_1}
d = 0 ~\mbox{, } \vert b \vert = \vert c \vert = \frac{1}{\vert E- v(x + i \epsilon)\vert} \leq \frac{1}{m(\epsilon;E)} < \frac{1}{2} ~\mbox{.}
\end{equation}

We use the following standard fact from Banach fixed point theory to obtain an upper bound for $\vert S_1(x; \epsilon) \vert$:
\begin{fact}\label{fact:contraction}
Let $T_1,T_2:X\to X$ be two contractions with contraction constants $K_1,K_2$ on a complete metric space $(X,\rho)$. Denote by $x_j^*$ the (unique) fixed point of $T_j,1\leq j \leq 2$. If $\gamma:=\sup_{x\in X}\rho(T_1x,T_2x)<\infty$, then
\begin{equation}
\rho(x_1^*,x_2^*)\leq \gamma \min\big\{\frac{1}{1-K_1},\frac{1}{1-K_2}\big\}.
\end{equation}
\end{fact}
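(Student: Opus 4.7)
The plan is to exploit the standard perturbed-contraction triangle inequality and derive both upper bounds by a symmetric argument. First I would note that by the Banach fixed point theorem applied to the complete metric space $(X,\rho)$ and the contractions $T_j$ (with constants $K_j < 1$), the fixed points $x_1^*$ and $x_2^*$ exist and are unique, so $\rho(x_1^*, x_2^*)$ is well defined.

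Using the identities $x_j^* = T_j x_j^*$, I would insert the ``off-diagonal'' point $T_1 x_2^*$ and apply the triangle inequality:
\begin{equation*}
\rho(x_1^*, x_2^*) \;=\; \rho(T_1 x_1^*, T_2 x_2^*) \;\leq\; \rho(T_1 x_1^*, T_1 x_2^*) + \rho(T_1 x_2^*, T_2 x_2^*).
\end{equation*}
The first summand is bounded by $K_1\, \rho(x_1^*, x_2^*)$ since $T_1$ is a $K_1$-contraction, and the second is at most $\gamma$ by the definition of $\gamma$. Rearranging gives $(1 - K_1)\, \rho(x_1^*, x_2^*) \leq \gamma$, i.e.\ $\rho(x_1^*, x_2^*) \leq \gamma/(1-K_1)$.

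The symmetric bound $\gamma/(1 - K_2)$ then follows by running exactly the same argument with the roles of $T_1$ and $T_2$ interchanged --- inserting $T_2 x_1^*$ between $T_1 x_1^*$ and $T_2 x_2^*$, and invoking the contraction property of $T_2$ to absorb $K_2\, \rho(x_1^*, x_2^*)$. Taking the minimum of the two resulting upper bounds yields the claim. There is no genuine obstacle here; once the correct off-diagonal insertion is identified, the proof reduces to essentially a two-line computation. The statement is recorded in the paper merely as a self-contained tool to control the fixed-point section $S_1(\cdot;\epsilon)$ in the proof of Proposition \ref{prop_key_amended}.
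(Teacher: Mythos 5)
Your argument is correct: the insertion of the off-diagonal point $T_1 x_2^*$ (resp.\ $T_2 x_1^*$), the contraction estimate, and the absorption step give exactly the claimed bound $\rho(x_1^*,x_2^*)\leq \gamma\min\{(1-K_1)^{-1},(1-K_2)^{-1}\}$. The paper states Fact \ref{fact:contraction} as a standard fact without proof, and what you wrote is precisely the standard argument it implicitly relies on.
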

Thus, comparing the solutions $S_1(x; \epsilon)$ and $S_1(x, \epsilon = +\infty) = 0$, Fact \ref{fact:contraction} and (\ref{eq_contractionconst}) with
\begin{equation}
\gamma = \sup_{x \in \mathbb{T}, z \in \overline{B_1(0)}} \vert F_{D_{\epsilon}(x)} (z)- F_{D_{\epsilon = \infty}(x)}(z) \vert \leq \frac{1}{m(\epsilon;E) - 1} ~\mbox{,}
\end{equation}
implies $\vert S_1(x; \epsilon) \vert \leq \sigma(\epsilon; E)$ where $\sigma(\epsilon; E)$ is given by (\ref{eq_estimatesolutionFP}). In particular, the upper bound on $\Xi$ in (\ref{eq_complexLE_ds_bd_1}) follows immediately from (\ref{eqn:LE-logE}).

On the other hand, using (\ref{eq_prop_key_amended_1}), we estimate
\begin{equation}
\Xi \geq \frac{1}{2}\int_\mathbb{T} \log\frac{(1-|b| |S_1(\epsilon; x)|)^2+|b|^2}{1+|S_1(\epsilon; x)|^2}\ud x ~\mbox{,}
\end{equation}
which implies the lower bound in (\ref{eq_complexLE_ds_bd_1}) since on $[0, \frac{1}{m(\epsilon;E)}] \times [0, \sigma(\epsilon; E)]$, $(b,S) \mapsto \frac{(1-bS)^2+b^2}{1+S^2}$ is minimized at $b=\frac{1}{m(\epsilon;E)}$ and $s=\sigma(\epsilon; E)$.
\end{proof}

\section{Proof of Proposition \ref{prop_key} - spectral theory approach} \label{sec_propkeySpec}

In this section we present an alternative, spectral theoretic proof of Proposition \ref{prop_key}. Rather than verifying a cone condition as in Sec. \ref{sec_uniformds}, we will use Weyl $m$-functions to obtain explicit expressions for the invariant splitting giving rise to uniform hyperbolicity of the Schr\"odinger cocycle. In particular, this argument will shed a light on the spectral theoretic meaning of the lower bound ``2'' in the largeness condition on $m(\epsilon;E)$. We mention that many ideas in this section were inspired by our earlier work in \cite{Marx_2014}.

We start by noting that complexifying the phase in (\ref{eq_qso}) yields a discrete Schr\"{o}dinger operator with {\em{complex}} potential,
\begin{equation}\label{eqn:schrodinger2}
(H_{x+i\epsilon}\psi)(n):=((\Delta+V_{x+i\epsilon})\psi)(n) = \psi(n-1)+\psi(n+1)+v(x+n\alpha+i\epsilon)\psi(n) ~\mbox{,}
\end{equation}
in particular for $\epsilon \neq 0$, (\ref{eqn:schrodinger2}) is a {\em{non self-adjoint}} operator on $\mathit{l}^2(\mathbb{Z})$. Since both $\alpha$ and $v$ are considered to be fixed, we will simplify notation by dropping the explicit dependence on the frequency and the potential. Henceforth, we write $z:=x+i\epsilon$ and use $z$ and $x+i\epsilon$ interchangeably. 

Denote by $\delta_n$ the elements of the standard basis in $\mathit{l}^2(\mathbb{Z})$, and let $P^\pm$ be the orthogonal projection onto the subspaces, $\operatorname{Span}\{\delta_n: n > 0\}$ and $\operatorname{Span}\{\delta_n: n<0\}$, respectively. Define the half-line operators, $H_{z,\pm}:=P^{(\pm)}H_zP^{(\pm)}$.

For $E$ in the resolvent sets $\rho(H_{z, \pm})$, we let
\begin{align} \label{eq_definesection}
s_-(z, E)&:=-m_-(z,E) ~\mbox{,} \\
s_+(z, E)&:=-m_+(z - \alpha ,E)^{-1} ~\mbox{,}
\end{align}
where $m_\pm(z, E):=\langle\delta_{\pm1},(H_{z,\pm}-E)^{-1}\delta_{\pm1}\rangle$ are the {\em{Weyl $m-$functions}}. The resolvent identities show that $m(. + i \epsilon, E)$ and hence $s_\pm(. + i \epsilon, E)$ are continuous on $\mathbb{T}$. The main result in this section is the following angle formula:
\begin{lemma}[Angle formula]\label{lemma:angleformula}
Let $E \in \mathbb{R}$. If $\epsilon = \im z$ is such that $m(\epsilon; E) > 2$, then $E \in \rho(H_{z,+}) \cap \rho(H_{z,-}) \cap \rho(H_z)$ and 
\begin{equation}\label{eqn:goal}
|s_-(z, E)-s_+(z, E)|=|\langle\delta_0,(H_{z}-E)^{-1}\delta_0\rangle|^{-1} > 0 ~\mbox{.}
\end{equation}
Here, $\rho(H_z)$ denotes the resolvent set of the operator $H_z$.
\end{lemma}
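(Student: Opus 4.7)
My plan is to prove the three claims of the lemma (resolvent membership, the identity, and strict positivity) in three stages, each reducing to an elementary computation.

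First, to establish $E \in \rho(H_z) \cap \rho(H_{z,\pm})$, I would decompose $H_z = V_z + \Delta$ with $\Delta$ the discrete Laplacian (of norm at most $2$) and $V_z$ the multiplication operator by $v(z+n\alpha)$. The hypothesis $m(\epsilon;E) > 2$ yields $\|(V_z - E)^{-1}\| \leq m(\epsilon;E)^{-1} < 1/2$, so $H_z - E = (V_z - E)(I + (V_z - E)^{-1}\Delta)$ is invertible via a Neumann series; the half-line arguments are identical after replacing $\Delta$ by $P^{(\pm)} \Delta P^{(\pm)}$. This step incidentally exposes the spectral-theoretic meaning of the constant ``$2$'' in the hypothesis: it is precisely the operator norm of $\Delta$, beyond which the Neumann representation breaks down.

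Next, I would identify $s_\pm$ with projective coordinates of $\ell^2$ solutions. Because $E \in \rho(H_{z,\pm})$, there exist unique-up-to-scalar solutions $\psi^\pm$ of $(H_z - E)\psi = 0$ on the appropriate half-line that are $\ell^2$ at $\pm\infty$, and necessarily $\psi^\pm(0) \neq 0$ (otherwise the restriction would furnish an eigenvector of the half-line operator). A direct computation of $(H_{z,+} - E)^{-1}\delta_1$ via the ansatz $h = c\,\psi^+$, matched by the inhomogeneous equation at $n=1$, gives $m_+(z,E) = -\psi^+(1)/\psi^+(0)$. After the $z \mapsto z - \alpha$ shift and reciprocal built into the definition of $s_+$, this translates into $s_+(z,E) = \psi^+(-1)/\psi^+(0)$, now for the $\ell^2$-at-$+\infty$ solution attached to phase $z$. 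The mirror computation on the left half-line gives $s_-(z,E) = \psi^-(-1)/\psi^-(0)$.

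Finally, I would derive the angle formula by a Wronskian computation on the full line. Since $E \in \rho(H_z)$, no simultaneous $\ell^2$ solution at both ends exists, so $\psi^+$ and $\psi^-$ are linearly independent and the Wronskian $W := \psi^+(-1)\psi^-(0) - \psi^+(0)\psi^-(-1)$ is a non-zero constant in $n$. Writing $(H_z - E)^{-1}\delta_0$ as $a\,\psi^+$ for $n \geq 0$ and $b\,\psi^-$ for $n \leq 0$, and enforcing matching and the inhomogeneous equation at $n=0$, yields $\langle \delta_0, (H_z - E)^{-1}\delta_0 \rangle = -\psi^+(0)\psi^-(0)/W$. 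A direct algebraic manipulation of the difference gives $s_-(z,E) - s_+(z,E) = -W/[\psi^+(0)\psi^-(0)]$. The product of these two expressions is exactly $1$, which is the angle formula, and $W \neq 0$ delivers strict positivity.

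The only genuinely delicate aspect is bookkeeping: the Wronskian manipulations are routine, but the sign conventions and the $z \mapsto z - \alpha$ phase shift in the definition of $s_+$ must be tracked carefully, so that both $s_+$ and $s_-$ end up as projective coordinates of the two invariant directions at the \emph{same} phase $z$. This alignment is precisely what makes their difference equal to the reciprocal of a single matrix element of the full-line resolvent. The non-self-adjoint character of $H_z$ for $\epsilon \neq 0$ causes no additional trouble, since after the Neumann step the entire argument reduces to algebraic identities about solutions and matrix elements.
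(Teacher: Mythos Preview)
Your proposal is correct and follows essentially the same route as the paper's proof: the Neumann-series factorization $(H_z-E)=(V_z-E)[I+(V_z-E)^{-1}\Delta]$ for resolvent membership, the identification of $s_\pm$ with ratios $\psi^\pm(-1)/\psi^\pm(0)$ of the Jost solutions (using the phase shift $z\mapsto z-\alpha$ for $s_+$, which the paper records explicitly as an invariance relation $\psi_\pm(z+\alpha,E;n)=a(z,E)\psi_\pm(z,E;n+1)$), and the Wronskian/Green's-function identity $G_z(E,0,0)=\psi^+(0)\psi^-(0)/W$ for the angle formula. The only cosmetic difference is that the paper re-expresses $s_\pm$ at index $n=1$ via the difference equation before taking the difference, whereas you work directly at $n=-1$; the two are equivalent by constancy of the Wronskian.
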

Lemma \ref{lemma:angleformula} shows that the angle between the invariant sections $s_\pm(. + i \epsilon, E)$ is (uniformly) bounded away from zero, in particular $s_\pm(. + i \epsilon, E)$ give rise to a continuous, $(\alpha, B_\epsilon^E)$-invariant splitting of $\mathbb{C}^2$. 

We mention that (\ref{eqn:goal}) appeared earlier in \cite{Marx_2014} for complex {\em{energies}}\footnote{In fact, a continuity argument in \cite{Marx_2014} shows that for real phases, the angle formula extends to all $E$ is in the resolvent of the full line operator $H_x$.}  as apposed to complex phases. Below-mentioned argument will show that the underlying feature necessary in both cases is really that $E \in \rho(H_z)$, which of course is trivial for {\em{complex}} $E$ and {\em{real}} phase. For {\em{real}} $E$, {\em{complexifying}} the phase leads to deformation of the spectrum of $H_x$, pushing a given energy $E$ into the resolvent set if $m(\epsilon; E)$ is sufficiently large. 

\begin{proof}[Proof of the angle formula]
First, we verify $m(\epsilon; E) > 2$ implies $E\in\rho(H_{x+i\epsilon})$. The idea is to simply write
\begin{align} \label{eq_operatorfactorization}
(H_{x+i\epsilon}-E)=\Delta+V_{x+i\epsilon}-E&=(V_{x+i\epsilon}-E)[I+(V_{x+i\epsilon}-E)^{-1}\Delta] ~\mbox{,}
\end{align}
noticing that $m(\epsilon; E) = \min_{x\in\T}|v(x+i\epsilon)-E|>2$ guarantees existence of $(V_{z}-E)^{-1}$. (\ref{eq_operatorfactorization}) is really the operator analogue of the factorization in (\ref{eq_cocyclefactorization}). The operator $I+(V_{x+i\epsilon}-E)^{-1}\Delta$ is invertible if $\|(V_{x+i\epsilon}-E)^{-1}\Delta\|<1$, which is satisfied since $\|(V_{x+i\epsilon}-E)^{-1}\|<\frac{1}{2}$ and $\|\Delta\|=2$. Clearly, the same argument works for the half-line operators, showing that $m(\epsilon;E) > 2$ also implies $E\in\rho(H_{x+i\epsilon,\pm})$. In summary, all quantities in (\ref{eqn:goal}) are thus well-defined.

We next use some standard facts from the spectral theory of second order finite difference operators, usually formulated for the self-adjoint (Jacobi) case (see e.g. \cite{Teschl_book_2000}). Under the circumstances discussed here, everything is easily seen to carry over even though the operator (\ref{eqn:schrodinger2}) is not self-adjoint; for the reader's convenience, we summarize the necessary facts including brief arguments in the following paragraph. 

Denote the matrix elements of the Green's function by
\begin{equation}
G_{z}(E,n,m):=\langle\delta_n,(H_{z}-E)^{-1}\delta_m\rangle ~\mbox{, } n, m \in \mathbb{Z} ~\mbox{.}
\end{equation}
Explicit expressions for $G_z(E,n,m)$ are available from the Jost solutions $\psi_{\pm}(z,E)$, obtained by extending $(H_{x+i\epsilon,\pm}-E)^{-1}\delta_{\pm1} \in\ell^2(\Z_\pm)$ to satisfy the full line equation $H_{z}\psi=E\psi$. By construction, $\psi_\pm(z,E) \in \mathbb{C}^\mathbb{Z}$ does not have zeros, is $\mathit{l}^2$ at $\pm \infty$, and unique up to multiplicative constants. These solutions provide the formula
\begin{equation} \label{eq_greensfunction}
G_z(E, n,m)=\frac{\psi_-(z,E,\min\{m,n\})\psi_+(z,E;\max\{m,n\})}{W(\psi_-(z,E) ,\psi_+(z,E) )} ~\mbox{,}
\end{equation}
verified by direct computation. Here, $W(f,g) = f(n)g(n+1)-g(n)f(n+1)$ is the \emph{Wronskian}, which is $n$-independent (``conservation of the Wronskian'') if $f,g$ are both solutions to $H_{z}\psi=E\psi$. Similar computations for the half-line operators (eg. see \cite[\S1.2,2.1]{Teschl_book_2000}) show that
\begin{equation} \label{eq_sectionsJost}
m_{\pm}(z,E) = \frac{-\psi_\pm(z,E; \pm1)}{\psi_\pm(z,E;0)} ~\mbox{.}
\end{equation}
Finally, using the same argument as in the self-adjoint case, conservation of the Wronskian and unicity of the Jost solutions up to multiplicative constants show existence of $a(z, E)$ such that
\begin{equation} \label{eq_invariance}
\psi_\pm(z+\alpha, E; n) = a(z, E) \psi(z, E; n + 1) ~\mbox{.}
\end{equation}
We note that (\ref{eq_invariance}) will later imply invariance of $s_\pm(. + i \epsilon,E)$ under the action of the cocycle $(\alpha, B_\epsilon^E)$ (see (\ref{eq_invsplitting}) below).

To verify the angle formula, we first use (\ref{eq_greensfunction}) to express the right side of (\ref{eqn:goal}), thereby
\begin{align}\label{eqn:rhs}
|\langle\delta_0,(H_{z}-E)^{-1}\delta_0\rangle|^{-1} = \left|\frac{\psi_+(z,E;1)}{\psi_+(z,E;0)}-\frac{\psi_-(z,E;1)}{\psi_-(z,E;0)}\right| ~\mbox{.}
\end{align}

On the other hand, recasting $s_\pm(z,E)$ in terms of $\psi_\pm(z,E)$, we compute taking advantage of (\ref{eq_invariance}),
\begin{equation} \label{eqn:psi-plus-1-0}
s_+(z, E) = \dfrac{ \psi_+(z - \alpha, E; 0)}{\psi_+(z - \alpha, E; 1)} = \dfrac{ \psi_+(z, E; -1)}{\psi_+(z, E; 0)} = (E-v(z))-\frac{\psi_+(z,E;1)}{\psi_+(z,E;0)} ~\mbox{,}
\end{equation}
and
\begin{equation}\label{eqn:psi-minus-1-0}
s_-(z,E)=\frac{\psi_-(z,E;-1)}{\psi_-(z,E;0)}=(E-v(z))-\frac{\psi_-(z,E;1)}{\psi_-(z,E;0)} ~\mbox{.}
\end{equation}

Thus, combining (\ref{eqn:psi-plus-1-0}) and (\ref{eqn:psi-minus-1-0}) with (\ref{eqn:rhs}), we conclude (\ref{eqn:goal}), as claimed.
\end{proof}


Lemma \ref{lemma:angleformula} enters as the key ingredient in the spectral theoretic proof of Proposition \ref{prop_key}:
\begin{proof}[Spectral theoretic proof of Proposition \ref{prop_key}]
First observe that from (\ref{eq_invariance}), the sections $s_-(. + i\epsilon)$, $s_+(. + i \epsilon)$ are naturally $(\alpha, B_\epsilon^E)$-invariant: Under the identification $\mathbb{PC}^2\ni[(v_1, v_2)]\mapsto\frac{v_2}{v_1}\in\overline{\C}$, one concludes:
\begin{equation}\label{eqn:sinvariant}
B_\epsilon^E(x)\cdot s_{\pm}(z,E)=B_\epsilon^E(x) \begin{pmatrix}
\psi_\pm(z,E;0)\\\psi_\pm(z,E;-1)
\end{pmatrix}=\begin{pmatrix}
\psi_\pm(z,E;1)\\\psi_\pm(z,E;0)
\end{pmatrix}=s_{\pm}(z + \alpha,E) ~\mbox{.}
\end{equation}

By the angle formula,  $s_-(. + i \epsilon, E), s_+(. + i \epsilon, E)$ thus induce a continuous, $(\alpha, B_\epsilon^E)$-invariant splitting of $\mathbb{PC}^2$ expressed by the conjugacy,
\begin{equation} \label{eq_invsplitting}
C_\epsilon(x+\alpha)^{-1}B_\epsilon^E(x)C_\epsilon(x)=\begin{pmatrix}
\lambda_{1;\epsilon}(x)&0\\0&\lambda_{1; \epsilon}(x)^{-1}
\end{pmatrix} ~\mbox{.}
\end{equation}
where $\lambda_{1; \epsilon}(x)$ and
\begin{equation} \label{eq_conj}
C_\epsilon(x)=\frac{1}{\sqrt{1 - m_+(z - \alpha,E) m_-(z,E)}} \begin{pmatrix}1&m_+(z - \alpha,E)\\m_-(z,E)&1\end{pmatrix}    \in SL_2(\C)
\end{equation}
are continuous in $x$.

To conclude $(\alpha, B_\epsilon^E) \in \mathcal{UH}$, from (\ref{eq_condunifh}) it thus suffices to guarantee existence of $N \in \mathbb{N}$ such that uniformly in $x \in \mathbb{T}$,
\begin{equation} \label{eq_uh_condition_spth}
\prod_{j=0}^{N-1} \vert \lambda_{1; \epsilon}(x + j \alpha) \vert > 1 ~\mbox{.}
\end{equation}

(\ref{eq_uh_condition_spth}) follows immediately if we establish $L(\alpha, B_\epsilon^E) > 0$, in which case since $\psi_-(z, E)$ is $\mathit{l}^2$ at $-\infty$ and $E \in \rho(H_z)$, Oseledets' theorem determines 
\begin{equation} \label{eq_oseledets}
\frac{1}{n} \sum_{j =0}^{n-1} \log \vert \lambda_{1;\epsilon}(x+j \alpha) \vert \to L(\alpha, B_\epsilon^E) ~\mbox{.}
\end{equation}
Unique ergodicity of irrational rotations shows that the limit in (\ref{eq_oseledets}) is in fact {\em{uniform}} which yields (\ref{eq_uh_condition_spth}).

Finally, positivity of the complexified LE follows from a the following well-known growth Lemma, which dates back to the work of Sorets-Spencer (Proposition 1 in \cite{SoretsSpencer_1991}), see also \cite{Bourgain_book_2005}, Chapter 3. More recent generalizations to higher dimensional cocycles appeared in \cite{KleinDuarte_2013_posLE}, see Lemma 5.2 therein.
\begin{lemma}[Growth lemma]\label{lemma:growth}
For $n\in\N\cup\{0\}$, let $A_n:=\begin{pmatrix}a_n&-1\\1&0\end{pmatrix}$, where $|a_n|\ge\mu$, some $\mu>2$. Then for all $n\in\N$,
\begin{equation}\label{eqn:growth-lemma}
\frac{1}{n}\log\big\|\prod_{j={n-1}}^0A_j\big\|\ge\log(\mu-1)>0 ~\mbox{.}
\end{equation}
In particular, if $A_n=B_\epsilon^E(x+n\alpha)$ with $m(\epsilon; E)>2$, then $L(\alpha, B_\epsilon^E) >0$.
\end{lemma}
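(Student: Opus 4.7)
The plan is to prove the lower bound (\ref{eqn:growth-lemma}) by tracking the growth of a single orbit, namely the image of $e_1 := (1,0)^{\mathrm{T}}$ under the iterated transfer matrices. Setting $\psi_{-1} := 0$, $\psi_0 := 1$, and $\psi_{n+1} := a_n \psi_n - \psi_{n-1}$ for $n \geq 0$, a direct induction identifies
\begin{equation*}
\Big(\prod_{j=n-1}^{0} A_j\Big) e_1 = \begin{pmatrix} \psi_n \\ \psi_{n-1} \end{pmatrix} ~\mbox{,}
\end{equation*}
so that $\big\Vert \prod_{j=n-1}^0 A_j \big\Vert \geq |\psi_n|$. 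It therefore suffices to establish $|\psi_n| \geq (\mu-1)^n$, and the entire argument reduces to an elementary control on the ratios of consecutive $\psi_n$.

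The main step is to track $r_n := \psi_n/\psi_{n-1}$, well-defined for $n \geq 1$ since $|\psi_1| = |a_0| \geq \mu > 0$. The three-term recurrence for $\psi_n$ translates into
\begin{equation*}
r_{n+1} = a_n - \frac{1}{r_n} ~\mbox{,}
\end{equation*}
and I claim that $|r_n| \geq \mu - 1$ for all $n \geq 1$, by induction. The base case $|r_1| = |a_0| \geq \mu > \mu - 1$ is immediate. Assuming $|r_n| \geq \mu - 1$, the reverse triangle inequality gives
\begin{equation*}
|r_{n+1}| \geq |a_n| - \frac{1}{|r_n|} \geq \mu - \frac{1}{\mu - 1} \geq \mu - 1 ~\mbox{,}
\end{equation*}
the last inequality being equivalent to $\mu \geq 2$. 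This closing of the induction at the exact threshold $\mu = 2$ is really the only delicate point of the argument; the strict hypothesis $\mu > 2$ additionally ensures $\log(\mu - 1) > 0$ in the conclusion.

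Telescoping the ratio bound yields $|\psi_n| = \prod_{k=1}^n |r_k| \geq (\mu - 1)^n$, which gives (\ref{eqn:growth-lemma}). For the application to the Schr\"odinger cocycle, choosing $A_j := B^E(x + j\alpha + i\epsilon)$ makes $a_j = E - v(x + j\alpha + i\epsilon)$, so $|a_j| \geq m(\epsilon; E) > 2$ uniformly in $x \in \mathbb{T}$. Applying the growth bound with $\mu = m(\epsilon; E)$ gives $\big\Vert B_n^E(x + i\epsilon; \alpha) \big\Vert \geq (m(\epsilon;E) - 1)^n$ pointwise in $x$, and taking logarithms, integrating over $\mathbb{T}$, and passing to the limit in the definition (\ref{eq_complexLE}) of the complexified LE yields $L(\alpha, B_\epsilon^E) \geq \log(m(\epsilon; E) - 1) > 0$, as claimed.
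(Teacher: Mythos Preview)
Your proof is correct. The paper itself does not supply a proof of this lemma but merely cites Sorets--Spencer and Bourgain's book; the argument you give---tracking the orbit of $e_1$, converting the three-term recurrence into the Riccati map $r_{n+1}=a_n-1/r_n$, and closing the induction $|r_n|\ge\mu-1$ via $\mu-1/(\mu-1)\ge\mu-1\Leftrightarrow\mu\ge2$---is precisely the standard proof found in those references. One cosmetic point: your sentence ``well-defined for $n\ge1$ since $|\psi_1|=|a_0|\ge\mu>0$'' only literally justifies $r_1$; the well-definedness of all subsequent $r_n$ is of course ensured simultaneously by the induction itself (since $|r_n|\ge\mu-1>0$ forces $\psi_n\neq0$), so you might phrase that as part of the inductive package rather than as a separate remark.
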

\end{proof}

As concluding remark we note that the proof of Lemma \ref{lemma:angleformula} can be adapted to show:
\begin{prop} \label{eq_uhangle}
Fix $\alpha$ irrational and $E \in \mathbb{R}$. Whenever $\epsilon$ is such that $(\alpha, B_\epsilon^E) \in \mathcal{UH}$, $E \in \rho(H_z)$ and the angle formula (\ref{eqn:goal}) holds.
\end{prop}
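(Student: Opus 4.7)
The plan is to run the proof of Lemma \ref{lemma:angleformula} in reverse: rather than starting from $m(\epsilon; E) > 2$ and extracting the invariant splitting via Weyl $m$-functions, I would start from the $\mathcal{UH}$-splitting itself and use it to construct the Jost solutions and, ultimately, the resolvent of $H_z$. Under $(\alpha, B_\epsilon^E) \in \mathcal{UH}$, continuous and $(\alpha, B_\epsilon^E)$-invariant stable and unstable bundles $E^s_x, E^u_x \subset \mathbb{C}^2$ exist, with mutual angle uniformly bounded below on $\mathbb{T}$. Using the projectivization $(v_1,v_2) \mapsto v_2/v_1$ of Sec. \ref{sec_uniformds}, define $s_+(z,E)$ to be the slope of $E^s_x$ and $s_-(z,E)$ the slope of $E^u_x$; uniform transversality of the splitting then forces $|s_-(z,E) - s_+(z,E)|$ bounded away from zero.

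To recover the Green's function, I would pick continuous non-vanishing lifts of $E^s_x, E^u_x$, call their two components $\psi_\pm(z,E;0)$ and $\psi_\pm(z,E;-1)$, and extend to all $n \in \mathbb{Z}$ by iterating the cocycle. Cocycle-invariance of the bundles is equivalent to the translation identity (\ref{eq_invariance}), while the hyperbolic estimates give $|\psi_+(z,E;n)| \lesssim e^{-\gamma n}$ for $n \geq 0$ and $|\psi_-(z,E;n)| \lesssim e^{\gamma n}$ for $n \leq 0$, uniformly in $x$; in particular these $\psi_\pm$ play the role of the Jost solutions from Sec. \ref{sec_propkeySpec}. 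Transversality of $E^s_x, E^u_x$ keeps the Wronskian $W(\psi_-, \psi_+)$ uniformly bounded below, so that the formula
\[ G_z(E,n,m) := \frac{\psi_-(z,E; \min\{n,m\})\,\psi_+(z,E; \max\{n,m\})}{W(\psi_-, \psi_+)} \]
from (\ref{eq_greensfunction}) produces a two-sided inverse of $H_z - E$ satisfying $|G_z(E,n,m)| \lesssim e^{-\gamma|n-m|}$. Schur's test therefore yields boundedness of $G_z$ on $\ell^2(\mathbb{Z})$, so $E \in \rho(H_z)$.

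Once $E \in \rho(H_z)$ is secured, the angle formula (\ref{eqn:goal}) follows by the same algebraic manipulation as in the proof of Lemma \ref{lemma:angleformula}: the identities (\ref{eqn:psi-plus-1-0}) and (\ref{eqn:psi-minus-1-0}) recast $s_\pm(z,E)$ in terms of the ratios $\psi_\pm(z,E;1)/\psi_\pm(z,E;0)$, and inserting these into (\ref{eqn:rhs}) completes the computation.

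The hardest part should be the Schur bound used to conclude $G_z$ is bounded on $\ell^2(\mathbb{Z})$: it requires genuinely $x$-uniform exponential decay of $\psi_\pm$ and a uniform lower bound on $W(\psi_-,\psi_+)$. Both are consequences of compactness of $\mathbb{T}$ together with continuity of the invariant bundles and the uniformity built into $\mathcal{UH}$, but need to be articulated somewhat more carefully than in Sec. \ref{sec_propkeySpec}, since in this direction of the argument one does not have the self-adjoint spectral shortcuts available there.
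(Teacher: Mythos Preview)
Your proposal is correct and follows essentially the same route as the paper's proof: start from the $\mathcal{UH}$-splitting, use the stable/unstable sections to manufacture Jost-type solutions $\psi_\pm$, plug these into the Green's function formula (\ref{eq_greensfunction}) to exhibit a bounded inverse of $H_z - E$, and then rerun the algebra of Lemma \ref{lemma:angleformula}. Your explicit invocation of Schur's test is a nice touch that the paper leaves implicit.

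One technical point you gloss over and the paper handles explicitly: nothing prevents $\psi_\pm(z,E;0)$ from vanishing at some $x$, so the slopes $s_\pm(z,E)$ may take the value $\infty$ in your projective chart, and the ratios $\psi_\pm(z,E;1)/\psi_\pm(z,E;0)$ in (\ref{eqn:psi-plus-1-0})--(\ref{eqn:psi-minus-1-0}) are then only defined in $\overline{\mathbb{C}}$. Your claim that ``uniform transversality forces $|s_- - s_+|$ bounded away from zero'' conflates the projective angle with the chart-dependent Euclidean difference. The paper's proof notes that $s_-$ and $s_+$ cannot \emph{both} be $\infty$ at the same $x$ (else the Wronskian would vanish, contradicting linear independence), so the difference on the left of (\ref{eqn:goal}) is well-defined in $\overline{\mathbb{C}}$ and the identity still goes through. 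This is a minor bookkeeping issue, not a genuine gap.
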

\begin{proof}
Clearly, using the dominating (unstable) and minoring (stable) sections, $(\alpha, B_\epsilon^E) \in \mathcal{UH}$ implies existence of linear independent solutions $\psi_\pm(z; E)$ of $H_z \psi = E \psi$ decaying exponentially at respectively $\pm \infty$. Invariance of the sections implies that $\psi_\pm(z; E)$ trivially satisfy (\ref{eq_invariance}). In summary, using these solutions in the formulae (\ref{eq_greensfunction}) shows that $E \in \rho(H_z)$. 

Define $m_\pm(z, E)$ from (\ref{eq_sectionsJost}). We note that one may have $m_\pm(z, E) = \infty$, since zeros of $\psi_\pm(z; E)$ are not excluded, however, the sections defined in (\ref{eq_definesection}),
\begin{equation}
s_-(z, E) = \frac{\psi_-(z, E; -1)}{\psi_-(z, E; 0)} ~\mbox{, } s_+(z, E) = \frac{\psi_+(z - \alpha, E; 0)}{\psi_+(z - \alpha, E; 1)} ~\mbox{,}
\end{equation}
cannot {\em{both}} be $\infty$. Indeed, $\psi_-(z, E; 0) = \psi_+(z - \alpha, E; 1)= 0$ would lead to $\psi_+(z, E; 0) = 0$ by (\ref{eq_invariance}), which in turn would imply zero Wronskian, thereby contradicting linear independence of $\psi_\pm(z; E)$. Thus the difference on the left hand side of (\ref{eqn:goal}) is well-defined in $\overline{\mathbb{C}}$. Now we can run through the rest of the argument in the proof of Lemma \ref{lemma:angleformula} to conclude (\ref{eqn:goal}). 
\end{proof}


\section{Some applications} \label{sec_someapplications}

We apply Theorem \ref{thm_main} to various model situations, starting with the uniform criterion in part (ii) of the theorem.

\subsection{Subcriticality uniformly on the spectrum} \label{sec_someapplications_unif}
To obtain an estimate for the uniform Herman radius, write (\ref{eq_potential}) in complex form, 
\begin{equation} \label{eq_potential_1}
v(x) = \sum_{1 \leq \vert j \vert\le M} \lambda_j e_j(x) ~\mbox{,}
\end{equation}
where $e_j(x) = \mathrm{e}^{2 \pi i j x}$ and
\begin{equation} \label{eq_Fouriercoeff_potential}
\lambda_j = a_j + \frac{b_j}{i} ~\mbox{, $j >0$ , } \lambda_{-j} = \overline{\lambda_j} ~\mbox{.}
\end{equation}
Then, 
\begin{eqnarray} \label{eq_lowerboundv}
\vert v(x+ i \epsilon) \vert & \geq & \vert \overline{ \lambda_{M} } e_{-M}(z) + \lambda_M e_M(z) \vert  - \sum_{1 \leq \vert j \vert \leq M-1} \vert \lambda_j \vert \vert e_j(z) \vert \nonumber \\
                                        & \geq & \vert \lambda_M \vert \mathrm{e}^{2 \pi M \epsilon} - \sum_{j=1}^{M-1} \vert \lambda_j \vert \mathrm{e}^{2 \pi j \epsilon} - \sum_{j=1}^{M} \vert \lambda_j \vert ~\mbox{.}
\end{eqnarray}
As $\epsilon \to + \infty$, the right-most side of (\ref{eq_lowerboundv}) will eventually be positive, in particular letting $y:= \mathrm{e}^{2 \pi \epsilon}$, $\epsilon_{H; \mathrm{unif}}$ can be estimated from above by the {\em{largest positive root}} $R_p$ of the polynomial
\begin{equation}\label{eqn:general-m1-poly}
p(y):=|\lambda_M|y^M-|\lambda_{M-1}|y^{M-1}-\cdots-|\lambda_1|y-\left(4+3\sum_{j=1}^M|\lambda_j|\right) ~\mbox{.}
\end{equation}
We note that $p(y)$ has a unique positive root (Decartes' rule of signs) and, since $p(1) < 0$, necessarily $R_p > 1$. Thus, $\epsilon_{H; \mathrm{unif}} \leq \frac{1}{2 \pi} \log R_p$ and Theorem \ref{thm_main} (ii) imply:
\begin{prop} \label{coro_condition}
All energies in the spectrum are subcritical if
\begin{equation} \label{eq_coro_condition}
\vert \lambda_M \vert^{1/(M-d)} R_p < 1 ~\mbox{,}
\end{equation}
where $R_p$ is the largest positive root of the polynomial $p(y)$ defined in (\ref{eqn:general-m1-poly}). 
\end{prop}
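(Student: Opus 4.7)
The plan is to combine the polynomial-root estimate $\epsilon_{H;\mathrm{unif}} \leq (2\pi)^{-1}\log R_p$ already sketched in the paragraph above the statement with the uniform subcriticality criterion of Theorem \ref{thm_main} (ii); what remains is routine and naturally splits into three short steps.

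First, I would translate between the real trigonometric coefficients $(a_M, b_M)$ and the complex Fourier data $\lambda_M$. Using (\ref{eq_Fouriercoeff_potential}) one has $|ia_M - b_M|^2 = a_M^2 + b_M^2 = |\lambda_M|^2$, so the sufficient condition in Theorem \ref{thm_main} (ii) can be rewritten as $\epsilon_{H;\mathrm{unif}} < -\log |\lambda_M| / [2\pi(M-d)]$. No further reference to the real Fourier data is then needed.

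Second, I would formalize the polynomial bound on $\epsilon_{H;\mathrm{unif}}$. Setting $y = e^{2\pi\epsilon}$, the inequality (\ref{eq_lowerboundv}) is designed so that $p(y) \geq 0$ implies $\min_{x\in\mathbb{T}} |v(x+i\epsilon)| \geq 4 + 2\sum_{j=1}^M |\lambda_j|$, i.e., $\widetilde m(\epsilon) \geq 4 + 2\sum_j|\lambda_j|$; that is precisely what the constant term $-(4 + 3\sum_j |\lambda_j|)$ is tuned to produce. The sign pattern $+,-,\ldots,-$ of the coefficients of $p$ gives, by Descartes' rule of signs, exactly one positive root $R_p$, and a direct evaluation $p(1) = -2|\lambda_M| - 4\sum_{j=1}^{M-1}|\lambda_j| - 4 < 0$ forces $R_p > 1$. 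Combined with the strict monotonicity of $\widetilde m$ outside the radius of zeros of $v$ (minimum modulus principle, already invoked in the proof of Theorem \ref{thm_main}), this gives the claimed bound $\epsilon_{H;\mathrm{unif}} \leq (2\pi)^{-1} \log R_p$.

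Third, I would substitute this bound into the criterion from the first step: the inequality $(2\pi)^{-1}\log R_p < -\log|\lambda_M|/[2\pi(M-d)]$ is algebraically equivalent to $|\lambda_M|\, R_p^{M-d} < 1$, i.e., $|\lambda_M|^{1/(M-d)} R_p < 1$. No substantive obstacle is anticipated; the only care point is confirming that $\widetilde m$ actually attains the threshold $4 + 2\sum_j|\lambda_j|$ so that $\epsilon_{H;\mathrm{unif}}$ is a bona fide positive real number, which follows by continuity from $\widetilde m(0) \leq 2\sum_j |\lambda_j|$ and $\widetilde m(\epsilon)\to\infty$ as $\epsilon \to \infty$ (the latter because $v$ is a nontrivial trigonometric polynomial). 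Implicitly one also uses Remark \ref{remark_mainthm} (i), since if $|\lambda_M| > 1$ then the condition (\ref{eq_coro_condition}) fails already ($R_p > 1$), while Herman's bound (\ref{eq_herman}) yields supercriticality directly.
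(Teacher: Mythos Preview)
Your proposal is correct and follows exactly the paper's own argument: the proposition is stated immediately after the paragraph deriving $\epsilon_{H;\mathrm{unif}}\le(2\pi)^{-1}\log R_p$ from (\ref{eq_lowerboundv}) and Descartes' rule, with the conclusion then drawn directly from Theorem~\ref{thm_main}~(ii) via the identification $|ia_M-b_M|=|\lambda_M|$. Your three steps just make this explicit (with slightly more care about well-definedness of $\epsilon_{H;\mathrm{unif}}$), so there is nothing to add.
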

Identifying subcritical behavior hence reduces to finding $R_p$. 

As a first example, we consider the simplest nontrivial generalization of the AMO, letting $M=2$ in (\ref{eq_potential_1}). In this case, we can solve for $R_p$ exactly, giving
\begin{equation} 
R_p = \frac{|\lambda_1|}{2|\lambda_2|}+\frac{1}{2|\lambda_2|}\sqrt{|\lambda_1|^2+16|\lambda_2|+12|\lambda_1||\lambda_2|+12|\lambda_2|^2}, ~\mbox{.}
\end{equation}
The condition in Proposition \ref{coro_condition} thus yields subcritical behavior on all of the spectrum if
\begin{equation} \label{eqn:simple-m1-lambda-conditions}
|\lambda_1|+4|\lambda_2|+3|\lambda_2|^2+3|\lambda_1||\lambda_2|<1 ~\mbox{,} 
\end{equation}
which we illustrate in Fig. \ref{fig_simple_exact_m1}. 
\begin{figure}[h!]
\begin{center}
\includegraphics[scale=0.4]{./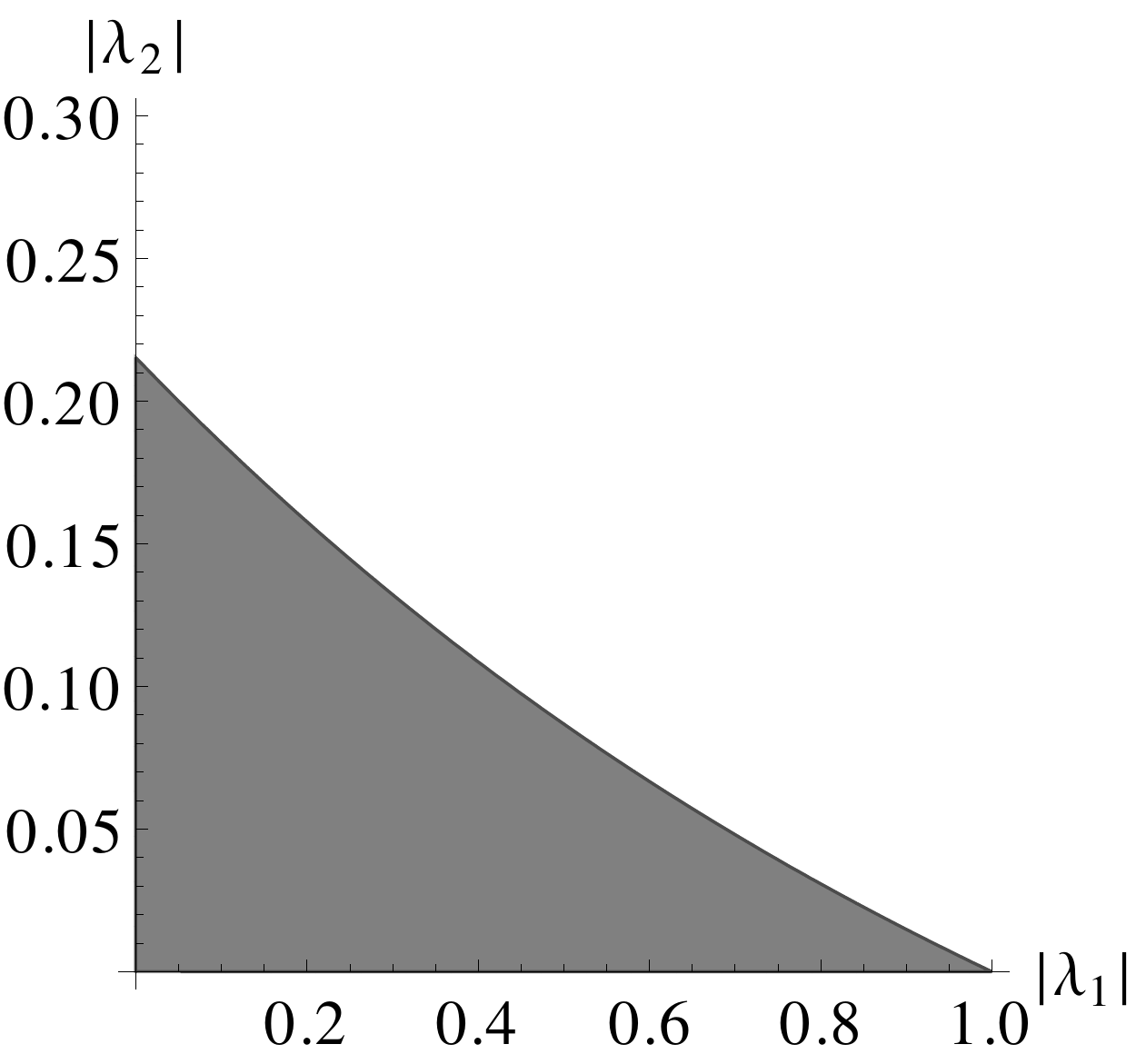}
\end{center}
\caption{The region of sub-criticality for $M=2$ determined in (\ref{eqn:simple-m1-lambda-conditions}). Note that the behavior captures the optimal boundary-value properties predicted by Fact \ref{fact_global} as $\lambda_2 \to 0$; in this limit, $\vert \lambda_1 \vert \to 1^-$.} \label{fig_simple_exact_m1}
\end{figure}

More generally, several articles in the physics literature consider the potential (e.g. \cite{HiramotoKohmoto_1989, SoukoulisEconomou_1982, ChaoRiklundLiu_1985,ChaoRiklundLiu_1986})
\begin{equation} \label{eq_genHarpers}
v(x) =  2 \mathrm{Re} \{ \lambda_1 e_1(x) + \lambda_M e_M(x)  \} ~\mbox{.}
\end{equation}
For $b_1 = b_M = 0$, this is known as {\em{generalized Harpers model}}, interesting also due to its relation to the quantized Hall effect in {\em{three}} dimensions  \cite{HiramotoKohmoto_1989, MontambauxKohmoto_1990}. 

In this case, upper bounds for the largest positive root of polynomials can be used to estimate $R_p$, for instance:
\begin{theorem}[\Stefanescu ~\cite{stefanescu1}] \label{thm:positive-root-bounds} 
Let $p(x)=x^d-b_1x^{d-m_1}-\cdots-b_kx^{d-m_k}+\sum_{j\ne m_1,\ldots,m_k}a_jx^{d-j}$, where $b_1,\ldots,b_k>0$ and $a_j\ge0$ for $j\not\in\{m_1,\ldots,m_k\}$. Then, 
\begin{equation}
B_1=\max\{(kb_1)^{1/{m_1}},\ldots,(kb_k)^{1/{m_k}} \}
\end{equation}
forms an upper bound for the positive roots of $p$.
\end{theorem}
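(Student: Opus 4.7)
The plan is to prove the contrapositive: show that if $x > B_1$, then $p(x) > 0$, so that no positive root can exceed $B_1$. The key observation is that the definition of $B_1$ is tailored exactly to let each negative monomial $b_i x^{d-m_i}$ be dominated by $x^d/k$, so that summing the $k$ such bounds cancels precisely the leading coefficient $1$ of $x^d$.

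More concretely, first I would fix any real $x > B_1$. By definition of $B_1$, we have $x > (kb_i)^{1/m_i}$ for every $i = 1, \ldots, k$, equivalently $k b_i < x^{m_i}$, and multiplying by $x^{d-m_i} > 0$ yields
\begin{equation*}
b_i x^{d-m_i} < \frac{x^d}{k}, \qquad i = 1, \ldots, k.
\end{equation*}
Summing these $k$ inequalities gives $\sum_{i=1}^{k} b_i x^{d-m_i} < x^d$, so that $x^d - \sum_{i=1}^{k} b_i x^{d-m_i} > 0$.

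Next I would handle the remaining terms of $p(x)$. Since $a_j \geq 0$ for all $j \notin \{m_1, \ldots, m_k\}$ and $x > 0$, each summand $a_j x^{d-j}$ is non-negative, so
\begin{equation*}
p(x) = \Bigl( x^d - \sum_{i=1}^{k} b_i x^{d-m_i} \Bigr) + \sum_{j \neq m_1, \ldots, m_k} a_j x^{d-j} > 0.
\end{equation*}
Hence $p$ has no positive root strictly greater than $B_1$, proving the bound.

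There is no real obstacle here: the argument is essentially a one-line application of the weighted AM-bound trick built into the statement of $B_1$. The only subtlety worth double-checking is the role of the constant $k$ in the exponent $1/m_i$: it arises precisely from the need to split the single $x^d$ into $k$ equal pieces, one for each negative coefficient, so that no positivity is lost in the worst case when all the bounds $(kb_i)^{1/m_i}$ are attained simultaneously. This also explains why the bound is in general not sharp for polynomials with only one dominant negative term, which would suggest using a refined variant with $k$ replaced by the number of negative coefficients that are actually close to the bound.
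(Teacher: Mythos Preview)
Your argument is correct and is precisely the standard proof of this classical root bound. Note, however, that the paper does not supply its own proof of this theorem: it is quoted from the literature (\c{S}tef\u{a}nescu) and used as a black box to estimate $R_p$. There is therefore nothing to compare against; your write-up stands on its own as a clean, complete verification of the cited result.
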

Applying Theorem \ref{thm:positive-root-bounds} to estimate $R_p$ for the potential in (\ref{eq_genHarpers}), 
we conclude from Proposition \ref{coro_condition} that all energies in the spectrum are subcritical if
\begin{equation} \label{eqn:ghm-next-stefanescu}
2|\lambda_1|<1 \,\text{ and }\, 8|\lambda_M|^{1/(M-1)}+6|\lambda_1||\lambda_M|^{1/(M-1)}+6|\lambda_M|^{M/(M-1)}<1 ~\mbox{.} 
\end{equation}


\subsubsection{Limiting behavior for generalized Harper's model}
Finally, we analyze the limiting behavior produced by Proposition \ref{coro_condition} for the potential in (\ref{eq_genHarpers}). Observe that for $M=2$, (\ref{eqn:simple-m1-lambda-conditions}) and Figure \ref{fig_simple_exact_m1} explicitly show that as $\vert \lambda_2 \vert \to 0^+$, the region of subcriticality approaches $\vert \lambda_1 \vert \to 1^+$, as expected from the spectral properties of the AMO. Indeed, the same behavior follows more generally from Proposition \ref{coro_condition}, which we quantify in:
\begin{claim}
For all $|\lambda_1|<1$, there is $\mu>0$ so that for all $0<|\lambda_M|\le\mu$ and some $\kappa=\kappa(|\lambda_M|)$, Proposition \ref{coro_condition} guarantees subcritical behavior on all of the spectrum whenever $|\lambda_2|,\ldots,|\lambda_{M-1}|\le\kappa$. Specifically, writing $0 \leq \vert \lambda_1\vert = 1 - \delta_1 - \delta_2$, for $\delta_1, \delta_2 > 0$, one can take
\begin{equation}\label{eqn:amo-bounds}
\mu = \left( \frac{\delta_1}{4+3M} \right)^{M-1} ~\mbox{, } 
\end{equation}
and for $M>2$, $\kappa =\frac{\delta_2}{(M-2)} \left(\frac{\vert \lambda_M \vert}{2(M-2)}\right)^{M-2}$.
\end{claim}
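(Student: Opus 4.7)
The plan is to apply Proposition \ref{coro_condition} and verify, by direct substitution, that the largest positive root $R_p$ of the polynomial $p(y)$ in \eqref{eqn:general-m1-poly} satisfies $|\lambda_M|^{1/(M-1)} R_p < 1$ under the stated hypotheses. Since $p$ has a unique positive root and $p(y) > 0$ for $y > R_p$ (Descartes' rule of signs, as noted just below \eqref{eqn:general-m1-poly}), it suffices to exhibit a value $y_0 \geq (|\lambda_M|)^{-1/(M-1)}$ at which $p$ is non-negative. The natural choice is the critical value itself, $y_0 := |\lambda_M|^{-1/(M-1)}$, so that the desired inequality becomes $p(y_0) > 0$.

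The first step is a direct calculation. Using $|\lambda_M| y_0^{M-1} = 1$, one obtains $|\lambda_M| y_0^M = y_0$ and $y_0^k = |\lambda_M|^{-k/(M-1)}$, so that
\begin{equation*}
p(y_0) \;=\; y_0\bigl(1 - |\lambda_1|\bigr) \;-\; \sum_{k=2}^{M-1} |\lambda_k|\, y_0^{\,k} \;-\; \Bigl(4 + 3\sum_{j=1}^M |\lambda_j|\Bigr).
\end{equation*}
Writing $1-|\lambda_1| = \delta_1 + \delta_2$ and splitting the $y_0\,(1-|\lambda_1|)$ term, positivity will follow if we verify separately
\begin{equation*}
\text{(a)}\quad y_0\,\delta_1 \;\geq\; 4 + 3\sum_{j=1}^M |\lambda_j|, \qquad
\text{(b)}\quad y_0\,\delta_2 \;\geq\; \sum_{k=2}^{M-1} |\lambda_k|\, y_0^{\,k}.
\end{equation*}

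For (a), I would first observe that under the hypotheses $|\lambda_1| < 1$, $|\lambda_M| \leq \mu \leq 1$, and $|\lambda_k| \leq \kappa \leq 1$ (the latter is easy to arrange since $\kappa \to 0$ as $|\lambda_M| \to 0$), we have $\sum_{j=1}^M |\lambda_j| \leq M$, hence the right-hand side of (a) is at most $4 + 3M$. Inequality (a) then reduces to $|\lambda_M|^{-1/(M-1)} \geq (4+3M)/\delta_1$, which is exactly the definition $\mu = \bigl(\delta_1/(4+3M)\bigr)^{M-1}$. For (b), each middle term is controlled by $|\lambda_k|\, y_0^k \leq \kappa\, y_0^{M-1} = \kappa\,|\lambda_M|^{-1}$ when $y_0 \geq 1$ (which holds for $|\lambda_M| \leq 1$), and summing over the $M-2$ indices gives an upper bound linear in $\kappa$ that decays fast enough, provided $\kappa$ is taken sufficiently small compared with a negative power of $|\lambda_M|$; the stated choice $\kappa = \frac{\delta_2}{M-2}\bigl(\frac{|\lambda_M|}{2(M-2)}\bigr)^{M-2}$ comfortably satisfies this constraint (in fact with room to spare, reflecting the conservative geometric-series bound). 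One must also check that the term $3|\lambda_M|$ contributed to the constant in (a) by the $\lambda_M$ summand is absorbed into the budget, which is immediate since $|\lambda_M| \leq 1$.

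The main obstacle, and the reason the hypothesis on $|\lambda_2|,\ldots,|\lambda_{M-1}|$ must depend on $|\lambda_M|$, is that $y_0 = |\lambda_M|^{-1/(M-1)}$ blows up as $|\lambda_M| \to 0^+$, so the middle sum in (b) is driven by the rapidly growing factors $y_0^k$; the perturbation size $\kappa$ must decay as a (positive) power of $|\lambda_M|$ to keep pace. Up to polynomial factors in $M$, this forces the power-law dependence appearing in the stated $\kappa$. Once inequalities (a) and (b) are verified, adding them yields $p(y_0) \geq 0$, hence $R_p \leq y_0$, and thus $|\lambda_M|^{1/(M-1)} R_p \leq 1$; strict inequality follows from any of the hypotheses being strict, and Proposition \ref{coro_condition} then produces the desired subcritical behavior on all of the spectrum.
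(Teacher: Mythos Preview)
Your proposal is correct and follows essentially the same strategy as the paper: both arguments set $y_0 = |\lambda_M|^{-1/(M-1)}$, split $1-|\lambda_1| = \delta_1 + \delta_2$, and use the $\delta_1$-portion to absorb the constant term (yielding the definition of $\mu$) and the $\delta_2$-portion to control the middle coefficients $|\lambda_2|,\dots,|\lambda_{M-1}|$ (yielding $\kappa$).

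The one noteworthy difference is that the paper proves $p(y) > 0$ for \emph{all} $y \geq y_0$, whereas you evaluate only at the single point $y_0$ and invoke the sign structure of $p$ (unique positive root, $p(0)<0$, $p(+\infty)=+\infty$) to conclude $R_p < y_0$. Your route is in fact cleaner: because the middle term $\lambda y^{M-1}$ eventually dominates $\delta_2 y$ for large $y$, the paper is forced into a case analysis according to whether $y$ lies below or above an auxiliary threshold $z_0$ (bounded via Theorem~\ref{thm:positive-root-bounds}), and this is precisely what produces the specific form of $\kappa$. Your pointwise check avoids that detour entirely, and the verification that the stated $\kappa$ suffices reduces to $|\lambda_M|^{(M-2)/(M-1)} \leq 2(M-2)$, which is immediate since $|\lambda_M| \leq 1$.
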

\begin{proof}
Write $|\lambda_1|=1-\delta_1-\delta_2$ for $\delta_1,\delta_2>0$. Note that from (\ref{eqn:amo-bounds}) both $\mu, \kappa < 1$. Letting 
\begin{equation}
\lambda:=(M-2)\max_{2 \leq \vert j \vert \leq M-2} |\lambda_j| \le \kappa (M-2) ~\mbox{,}
\end{equation}
we estimate
\begin{eqnarray}
p(y) > |\lambda_M|y^M-y+(\delta_1 y-(4+3M))+(\delta_2 y-\lambda y^{M-1}) ~\mbox{.} \label{eqn:amo-pick-up-here}
\end{eqnarray}

\begin{lemma} \label{lemma_tedious}
Let $y_0:=\frac{1}{|\lambda_M|^{1/(M-1)}}$. Then $y\ge y_0$ implies $p(y) > 0$.
\end{lemma}
\begin{proof}
Let $y\ge y_0$. First note that (\ref{eqn:amo-bounds}) implies $y\ge\frac{4+3M}{\delta_1}$, so that 
\begin{equation}
(\delta_1 y-(4+3M))\ge0 ~\mbox{.}
\end{equation}

Denote by $z_0$ the unique positive root of $|\lambda_M|y^M-(M-2)y^{M-1}-y$. For $M=2$, $z_0 = y_0$. Using Theorem \ref{thm:positive-root-bounds}, for $M>2$ we estimate
\begin{equation} \label{eq_estimz0}
z_0 \leq \frac{2(M-2)}{\vert \lambda_M \vert} ~\mbox{.}
\end{equation}

We distinguish the cases $y\le z_0$ and $y>z_0$. If $y>z_0$, then $|\lambda_M|y^M-\lambda y^{M-1}-y \geq |\lambda_M|y^M-(M-2)y^{M-1}-y > 0$, so that the right side of (\ref{eqn:amo-pick-up-here}) is positive, whence the claim in the lemma is satisfied.

If $y\le z_0$, then for $M >2$, the definition of $\kappa$ and $\lambda$ as well as (\ref{eq_estimz0}) imply,  
\begin{equation} \label{eq_case}
(\delta_2 y-\lambda y^{M-1}) \geq \delta_2y\left(1-\left(\frac{z_0 \vert \lambda_M \vert}{2 (M-2)}\right)^{M-2}\right)\ge0 ~\mbox{.}
\end{equation}

Thus for $y \geq y_0 =\frac{1}{|\lambda_M|^{1/(M-1)}}$, (\ref{eqn:amo-pick-up-here}) yields
\begin{eqnarray}
p(y) > |\lambda_M|y^M-y+\underbrace{(\delta_1 y-(4+3M))}_{\ge0}+\underbrace{(\delta_2 y-\lambda y^{M-1})}_{\ge0}\ge|\lambda_M|y^M-y \geq 0 ~\mbox{,} \nonumber
\end{eqnarray}
as claimed.
\end{proof}

We conclude from Lemma \ref{lemma_tedious} that $R_p < y_0$, whence
\begin{equation}
R_p \vert \lambda_M \vert^{1/(M-1)} < 1 ~\mbox{,}
\end{equation}
which in particular implies the condition (\ref{eq_coro_condition}) in Claim \ref{coro_condition}.
\end{proof}

We note that qualitatively this behavior is expected from the point of view of Avila's global theory; it is an immediate consequence of upper-semicontinuity of the acceleration in the cocycle \cite{Avila_globalthy_published}:
\begin{fact} \label{fact_global}
Given a quasi-periodic Schr\"odinger operator $H_{v(x); \alpha_0}$ with analytic potential $v_0$ and irrational frequency $\alpha_0$, suppose that all energies in the spectrum of $H_{v_0(x); \alpha_0}$ are subcritical. Then the same is true for all $(v,\alpha) \in \mathcal{C}^\omega(\mathbb{T}; \mathbb{R}) \times \mathbb{R}\setminus \mathbb{Q}$ in some open neighborhood of $(v_0,\alpha_0)$. 
\end{fact}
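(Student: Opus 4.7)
The plan is to combine three structural properties of Avila's global theory: joint continuity of $L(\epsilon;E)$ in $(v,\alpha,E,\epsilon)$ (cf.\ \cite{JitomirskayaMarx_2012,AvilaJitomirskayaSadel_2013}); upper semicontinuity and integer-valuedness of the acceleration $\omega(\cdot;\cdot)$ \cite{Avila_globalthy_published}; and upper semicontinuity of the spectrum in Hausdorff distance, which follows from joint openness of $\mathcal{UH}$. The working characterization is that $E$ is subcritical precisely when $L(0;E)=0$ and $\omega(0;E)=0$; by convexity, evenness and non-negativity of $L(\cdot;E)$ this is equivalent to $L(\cdot;E)\equiv 0$ on an open interval about $0$. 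Note also that $\omega(0;\cdot)\ge 0$ always holds (convexity plus evenness of $L$), so together with integer-valuedness and upper semicontinuity, the equality $\omega(0;E)=0$ at a single point propagates to a whole open neighborhood of that point in $(v,\alpha,E)$.

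The first step is to show that $\omega(0;\cdot)\equiv 0$ on an open neighborhood of $\{(v_0,\alpha_0)\}\times \Sigma(v_0,\alpha_0)$. Fix $\delta>0$ with $v_0\in \mathcal{C}^\omega_\delta$ and work in $\mathcal{C}^\omega_\delta\times(\mathbb{R}\setminus \mathbb{Q})$. By hypothesis $\omega(0;E)=0$ for each $E\in \Sigma(v_0,\alpha_0)$; by the propagation observed above, this equality extends to a neighborhood of each such $(v_0,\alpha_0,E)$. Compactness of $\Sigma(v_0,\alpha_0)$ then extracts a finite subcover, yielding an open product $\mathcal{V}\times \Omega$ with $\mathcal{V}\ni (v_0,\alpha_0)$, $\Omega\supset \Sigma(v_0,\alpha_0)$, and $\omega(0;E')=0$ throughout.

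Next I would shrink $\mathcal{V}$ using upper semicontinuity of the spectrum so that $\Sigma(v,\alpha)\subset \Omega$ for all $(v,\alpha)\in \mathcal{V}$: letting $K$ be a uniform spectral bound (valid on $\mathcal{V}$ after shrinking), every $E\in [-K,K]\setminus \Omega$ is UH at $(v_0,\alpha_0)$, and joint openness of $\mathcal{UH}$ combined with compactness of $[-K,K]\setminus \Omega$ yields a finite UH-covering, forcing $\Sigma(v,\alpha)\cap([-K,K]\setminus \Omega)=\emptyset$. Finally, for any $(v,\alpha)\in \mathcal{V}$ and any $E'\in \Sigma(v,\alpha)$: the previous step gives $\omega(0;E')=0$, while $E'\in \Sigma(v,\alpha)$ excludes uniform hyperbolicity, so Avila's UH characterization $(\mathcal{UH}\iff L(0;E')>0\text{ and }\omega(0;E')=0)$ forces $L(0;E')=0$. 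Together $L(0;E')=\omega(0;E')=0$ is precisely subcriticality of $E'$.

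The main obstacle is bookkeeping in the inductive limit topology: every continuity, upper semicontinuity, and openness statement above must be interpreted in the Banach space $(\mathcal{C}^\omega_\delta,\Vert\cdot\Vert_\delta)\times(\mathbb{R}\setminus \mathbb{Q})$ after fixing a common strip of analyticity, as done here. Once this is set up no estimate beyond the cited properties is required, and nothing specific to the trigonometric-polynomial structure of $v_0$ enters the argument.
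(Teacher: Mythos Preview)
Your proposal is correct and follows essentially the same route as the paper's proof: both hinge on the fact that $\omega(0;E)\ge 0$ is integer-valued and upper semicontinuous in $(v,\alpha,E)$, so that $\omega(0;E)=0$ propagates from $\{(v_0,\alpha_0)\}\times\Sigma(v_0,\alpha_0)$ to a full neighborhood via compactness of the spectrum and its Hausdorff continuity in $(v,\alpha)$. The only cosmetic differences are that the paper cites \cite{AvronSimon_1983} directly for the spectral continuity (rather than deriving it from openness of $\mathcal{UH}$) and leaves implicit the final step where you invoke the $\mathcal{UH}$-characterization to force $L(0;E')=0$; your version is simply more explicit.
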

\begin{proof}
By Theorem \ref{thm_complexLEglobal}, the acceleration for $\epsilon \geq 0$ can only attain non-negative integer values if $\alpha$ is irrational. Hence, upper semi-continuity of the acceleration and compactness of the spectrum ensure existence of some open neighborhood of $(\alpha_0, v_0)$ in $\mathcal{C}^\omega(\mathbb{T}; \mathbb{R}) \times \mathbb{R}\setminus \mathbb{Q}$, such that $\omega(\epsilon = 0, E) = 0$ for all $E$ in the spectrum. Here, we also use that the spectrum of quasi-periodic Schr\"odinger operators depends continuously on $(\alpha, v)$ in the Hausdorff metric \cite{AvronSimon_1983}. 
\end{proof}
We note that in spectral theoretic terms, Fact \ref{fact_global} states that purely ac spectrum is stable w.r.t. perturbations in $(\alpha, v)$.

Unfortunately, as $\lambda_1 \to 0^+$, Proposition \ref{coro_condition} does not capture the optimal behavior predicted by Fact \ref{fact_global}. Indeed, since $\widetilde{m}(\epsilon; E)$ depends continuously on the $\lambda_j$'s, we set $\lambda_1=0$ in (\ref{eq_genHarpers}) and explicitly solve for $\epsilon_{H;\mathrm{unif}}$ to find
\begin{equation}\label{eqn:badbehavior}
e^{2\pi\epsilon_{H;\mathrm{unif}}}=\left(\frac{2}{|\lambda_M|}+1+\sqrt{\frac{4}{|\lambda_M|^2}+\frac{4}{|\lambda_M|}+2}\right)^{1/M}\ge\left(\frac{4}{|\lambda_M|}\right)^{1/M} ~\mbox{,}
\end{equation}
which, if $\epsilon_{H;\mathrm{unif}}<-\frac{\log|\lambda_M|}{2\pi(M-1)}$, imposes the restriction
\begin{equation}\label{eqn:badbehavior-lambda-bound}
|\lambda_M|<\frac{1}{4^{\frac{M}{d}-1}}\le\frac{1}{4} ~\mbox{.}
\end{equation}
 
\subsection{Energy dependence} \label{sec_someapplications_energy}

To illustrate the application of Theorem \ref{thm_main} (i), we consider {\em{odd}} potentials where all $a_j = 0$ in (\ref{eq_potential}). In this case, it is known that $E=0 \in \Sigma(\alpha)$, indeed:
\begin{fact} \label{fact_symmetry}
Given a quasi-periodic Schr\"odinger operator such that for some $x_0 \in \mathbb{T}$, $v(. + x_0)$ is odd. Then, for every irrational $\alpha$, $0 \in \Sigma(\alpha)$.
\end{fact}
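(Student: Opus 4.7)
The plan is to combine a reflection symmetry induced by the oddness of $v$ with the gap--labelling theorem for the integrated density of states (IDS).

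First, since $\alpha$ is irrational the base rotation is minimal, whence the spectrum $\Sigma(\alpha)$ of $H_{v(x);\alpha}$ is independent of the phase $x\in\mathbb{T}$. Replacing $v$ by $v(\,\cdot\,+x_0)$ therefore does not alter $\Sigma(\alpha)$, and I may assume $v$ itself is odd. Fixing the phase $x=0$, the diagonal values $v_n:=v(n\alpha)$ then satisfy $v_{-n}=-v_n$ (and in particular $v_0=0$).

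Next I would introduce the unitary involution $U\colon \mathit{l}^2(\mathbb{Z})\to \mathit{l}^2(\mathbb{Z})$, $(U\psi)_n:=(-1)^n\psi_{-n}$, and verify by a direct calculation that $U H_{v(0);\alpha} U = - H_{v(0);\alpha}$. The $(-1)^n$ factor flips the sign of the off-diagonal (Laplacian) part, while $v_{-n}=-v_n$ flips the sign of the diagonal (potential) part, so the two contributions conspire to reverse the whole operator. Consequently $H$ and $-H$ are unitarily equivalent, so $\Sigma(\alpha)=-\Sigma(\alpha)$. Combining this with continuity of the IDS $N$ for irrational $\alpha$ (Delyon--Souillard / Avron--Simon) and the general identity $N_{-H}(E)=1-N_H(-E)$, unitary equivalence forces $N(E)=1-N(-E)$; evaluating at $E=0$ then yields $N(0)=\tfrac12$.

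Finally I would invoke the gap--labelling theorem (Johnson; Johnson--Moser): the IDS is constant on every connected component of $\mathbb{R}\setminus\Sigma(\alpha)$, and its value on any such gap lies in the countable frequency module $(\mathbb{Z}+\alpha\mathbb{Z})\bmod 1$. For irrational $\alpha$ this module does not contain $\tfrac12$, since $\tfrac12\equiv m+k\alpha\pmod 1$ with $k\ne 0$ would force $\alpha=(\tfrac12-m)/k\in\mathbb{Q}$, while $k=0$ is incompatible with $\tfrac12\notin\mathbb{Z}$. Because $N(0)=\tfrac12$, the energy $E=0$ cannot lie in a spectral gap, and hence $0\in\Sigma(\alpha)$.

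I do not anticipate a serious technical obstacle; the argument merely assembles three classical ingredients---phase independence of $\Sigma(\alpha)$ for irrational $\alpha$, the reflection unitary for odd $v$, and gap labelling---together with the elementary arithmetic observation that $\tfrac12$ lies outside the frequency module precisely when $\alpha\notin\mathbb{Q}$. The only place where one has to be mildly careful is the bookkeeping of signs in the conjugation $UHU=-H$, and the (standard) verification that gap labelling and IDS continuity apply in the generality of an arbitrary analytic $v$ rather than just the trigonometric polynomial setting used in the rest of the paper.
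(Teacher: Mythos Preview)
Your argument is correct and takes a genuinely different route from the paper. The paper's proof (Appendix~\ref{appendix_factsymmetry}) proceeds by rational approximation: for $\tfrac{p}{q}$ with $q$ odd, one computes directly that the discriminant $\Delta(\tfrac{p}{q},x_0;E)=\mathrm{tr}\,B_q^E(\tfrac{p}{q};x_0)$ satisfies $\Delta(\tfrac{p}{q},x_0;E)=(-1)^q\Delta(\tfrac{p}{q},x_0;-E)$, so that $\Delta(\tfrac{p}{q},x_0;0)=0\in[-2,2]$ and hence $0\in\sigma(\tfrac{p}{q},x_0)$; the conclusion for irrational $\alpha$ then follows from Hausdorff continuity of $\beta\mapsto\Sigma_+(\beta)$ at irrationals. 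Your proof instead works directly at the irrational frequency: the conjugation $UH_{v(0);\alpha}U=-H_{v(0);\alpha}$ yields $N(0)=\tfrac12$, and gap labelling rules out $\tfrac12$ as a gap value. The paper's approach is more elementary---it uses only finite-matrix algebra and continuity of the spectrum---while yours imports heavier but standard machinery (IDS continuity and the Johnson--Moser gap-labelling theorem). On the other hand, your argument is more conceptual, avoids the passage through periodic approximants, and makes transparent why irrationality of $\alpha$ is essential (it is precisely what keeps $\tfrac12$ out of the frequency module $\mathbb{Z}+\alpha\mathbb{Z}$). Both proofs ultimately exploit the same reflection symmetry, but detect the consequence ``$0\in\Sigma$'' through different invariants: the discriminant in the paper, the IDS in your version.
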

A proof of Fact \ref{fact_symmetry} can e.g. be found in \cite{BALASUBRAMANIAN_KULKARNI_RADHA_2001}; for the reader's convenience we give a slightly shorter, alternative argument in Appendix \ref{appendix_factsymmetry}.

As an example, we consider the simplest case where $v(x) = b_1 \sin(2 \pi x) + b_2 \sin(4 \pi x)$. Estimating like in (\ref{eq_lowerboundv}), allows to bound the Herman radius for $E=0$ from above by the largest positive root $R_q$ of the polynomial,
\begin{equation}
q(y) := \vert b_2 \vert y^2 - \vert b_1 \vert y - (\vert b_1 \vert + \vert b_2 \vert + 2) ~\mbox{,}
\end{equation}
where, as before, $y = \mathrm{e}^{2 \pi \epsilon}$. Thus, the condition (\ref{eq_subcritcrit_energy}) in Theorem \ref{thm_main} (i) asserts that $E=0$ is subcritical if $R_q \vert b_2 \vert < 1$, which, computing $R_q$, yields
\begin{equation} \label{eqn_exampleodd}
\vert b_2 \vert^2 + \vert b_1 \vert \vert b_2 \vert + 2 \vert b_2 \vert + \vert b_1 \vert < 1 ~\mbox{.}
\end{equation}
As mentioned earlier (see Remark \ref{remark_mainthm} (ii)), proving that $E=0$ is subcritical implies existence of some ac spectrum centered around $E=0$. This follows again using upper-semicontinuity of the acceleration, which since $E=0$ is subcritical, implies that $\omega(\epsilon=0, E) = 0$ for all $E$ in some interval $(E_1, E_2)$ containing $0$. Hence, by the almost reducibility theorem, all spectral measures are purely ac on $\Sigma(\alpha) \cap (E_1, E_2)$.

Fig. \ref{fig_some_ac_compare} depicts the region determined by (\ref{eqn_exampleodd}) where the operator has correspondingly {\em{some}} ac spectrum; the same figure compares this with the region determined by (\ref{eqn:simple-m1-lambda-conditions}) where {\em{all}} of the spectrum is purely ac continuous.
\begin{figure}[h!]
\begin{center}
\includegraphics[scale=0.4]{./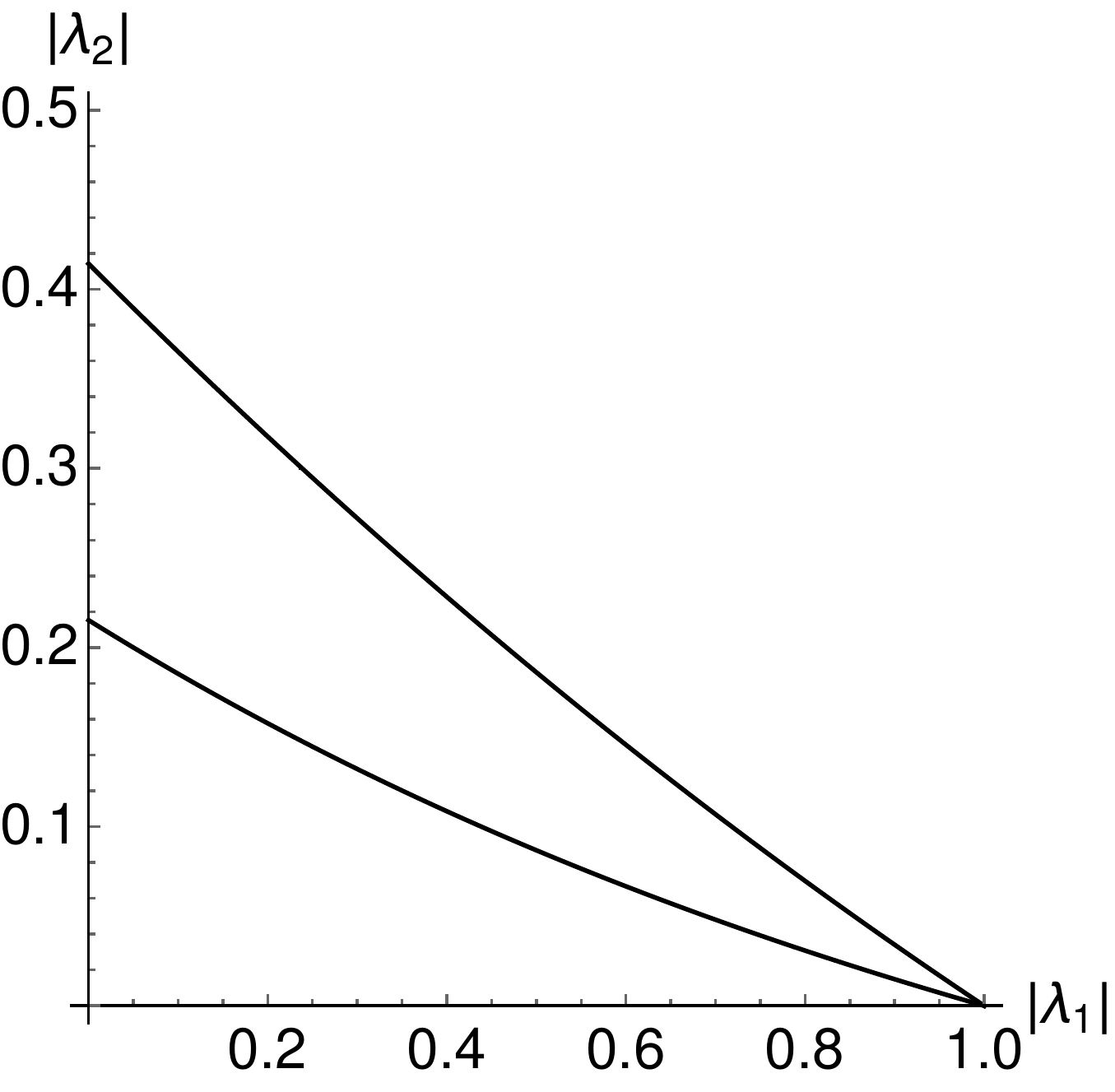}
\end{center}
\caption{For the Schr\"odinger operator with potential $v(x) = b_1 \sin(2 \pi x) + b_2 \sin(4 \pi x)$, the upper curve represents the boundary of the region determined by (\ref{eqn_exampleodd}); below the upper curve, $E=0$ is subcritical. This is compared with the region according to (\ref{eqn:simple-m1-lambda-conditions}) where {\em{all}} energies in the spectrum are subcritical; the lower curve represents the boundary of the latter.} \label{fig_some_ac_compare}
\end{figure}

More generally, using the same ideas for
\begin{equation} \label{eq_examplethm}
v(x) = 2 ( b_N \sin(2 \pi N x) + b_M \sin(2 \pi M x) ) ~\mbox{, $1 \leq N < M$, } b_M \neq 0 ~\mbox{,}
\end{equation}
we conclude, employing the root bound in Theorem \ref{thm:positive-root-bounds}, that:
\begin{ExampleTheorem}
Consider a quasi-periodic Schr\"odinger operator with potential given by (\ref{eq_examplethm}). $E = 0$ is subcritical if 
\begin{equation} \label{eq_examplethm_1}
\vert b_M \vert^{1/M-N} \max\left\{ \left(2 \frac{ \vert b_N \vert}{\vert b_M \vert}\right)^{1/(M-N)} ~,~ \left( \frac{4+ 2 \vert b_N \vert}{\vert b_M \vert} + 2 \right)^{1/M} \right\} < 1 ~\mbox{.}
\end{equation}
In particular, the operator has {\em{some}} ac spectrum if (\ref{eq_examplethm_1}) is met.
\end{ExampleTheorem}

\section{Jacobi operators} \label{sec_jacobi}

\subsection{Jacobi cocycles}
It is natural to try to extend Theorem \ref{thm_main} to Jacobi operators, 
\begin{equation} \label{eq_jacobiop}
(H_{c(x), v(x) ; \alpha}\psi)_{n} := \overline{c(x + (n-1) \alpha )}\psi_{n-1} + c(x +  n \alpha)\psi_{n+1} + v(x +  n \alpha)\psi_n ~\mbox{.}
\end{equation}
Here, the discrete Laplacian in (\ref{eq_qso}) is modified by evaluating the (complex) trigonometric polynomial
\begin{equation}
c(x) = \sum_{k=N_1}^{N_2} \mu_k \mathrm{e}^{2 \pi i k x} ~\mbox{, } N_1< N_2 ~\mbox{, } \vert \mu_{N_1} \mu_{N_2} \vert > 0 ~\mbox{,}
\end{equation}
along the trajectory $x \mapsto x + \alpha$. As before, $v$ is assumed to be (real) trigonometric polynomial of the form given in (\ref{eq_potential}).

A prominent example from physics is extended Harper's model (EHM) where both $c, v$ are both trigonometric polynomials of degree 1. Proposed by D. J. Thouless in context with the integer quantum Hall effect \cite{Thouless_1983}, EHM  generalizes the AMO, allowing for a wider range of lattice geometries by permitting the electrons to hop to both nearest and next nearest neighboring lattice sites. Its spectral theory has recently been solved \cite{AvilaJitomirskayaMarx_preprint_2015}, relying on an extension of parts of Avila's global theory to analytic Jacobi operators \cite{JitomirskayaMarx_2012, JitomirskayaMarx_2013_erratum}. In fact, the ``method of almost constant cocycles'' underlying the complex Herman formula was originally developed in \cite{JitomirskayaMarx_2012} to find the complexified LE for EHM.

As has been detailed in \cite{JitomirskayaMarx_review_2015, Marx_thesis}, the complexified LE for Jacobi operators is defined by\footnote{Replacing the complex conjugate of $c$ by its reflection along the real line as done in the upper right corner of (\ref{eq_jacobicocycle}) makes (\ref{eq_jacobicocycle}) an analytic, matrix-valued function.}
\begin{equation} \label{eq_complexle_jacobi}
L(\epsilon; E) := L(\alpha, A_\epsilon^E) - I(c) ~\mbox{, } I(c) := \int_\mathbb{T} \log \vert c(x) \vert \ud x ~\mbox{,}
\end{equation}
where $L(\alpha, A_\epsilon^E)$ is the LE of the {\em{phase-complexified Jacobi cocycle}} induced by 
\begin{equation} \label{eq_jacobicocycle}
A_\epsilon^E(x) := \begin{pmatrix}  E - v(x + i \epsilon) & - \overline{c(x - i \epsilon - \alpha)} \\ c(x + i \epsilon) & 0 \end{pmatrix} ~\mbox{.}
\end{equation}
Since $I(c)$ is a constant independent of $\epsilon$, determining the complexified LE (\ref{eq_jacobicocycle}) reduces to finding the LE associated with (\ref{eq_jacobicocycle}).

As before, letting $\epsilon = 0$ in (\ref{eq_complexle_jacobi}) yields what is usually known as {\em{the}} LE of a Jacobi operator (\ref{eq_jacobiop}). We mention that alternative choices for Jacobi cocycles exist (see e.g. \cite{Marx_2014, JitomirskayaMarx_review_2015} for details), however for what is to come, (\ref{eq_jacobicocycle}) turns out to be the most advantageous.

One feature not present for Schr\"odinger cocycles is that $(\alpha, A_\epsilon^E)$ is in general non-invertible, indeed, $\det A_\epsilon^E(x) = \overline{c(x - i \epsilon - \alpha)} c(x + i \epsilon)$ which may vanish due to zeros of $c$. A Jacobi operator is called {\em{singular}} if $c(x)$ has zeros on $\mathbb{T}$ and {\em{non-singular}} otherwise. Singularities of the cocycle often lead to interesting phenomena when trying to generalize results for Schr\"odinger operators to the Jacobi case, which has been explored in several recent articles \cite{JitomirskayaMarx_review_2015, Marx_2014, taovoda, AvilaJitomirskayaSadel_2013, bindervoda, BinderVoda_2013, JitomirskayaMarx_2012, KTao_2012, KTao_preprint_2011, JitomirskayaMarx_2011}. From a dynamical point of view, presence of singularities is accounted for by replacing uniform hyperbolicity ($\mathcal{UH}$) with uniform domination ($\mathcal{DS}$) (recall Sec. \ref{sec_uniformds} for a definition)  \cite{Marx_2014, AvilaJitomirskayaSadel_2013}.

It was proven in \cite{JitomirskayaMarx_2012, JitomirskayaMarx_2013_erratum} (see also \cite{JitomirskayaMarx_review_2015}) that Theorem \ref{thm_complexLEglobal} essentially carries over to Jacobi operators:
\begin{theorem} \label{thm_complexLEglobal_jacobi}
\begin{itemize}
The analytic properties of $\epsilon \mapsto L(\epsilon; E)$ stated in Theorem \ref{thm_complexLEglobal} hold essentially unchanged with the only alteration that $\omega(\epsilon; E) \in \frac{1}{2} \mathbb{Z}$. For {\em{non}}-singular Jacobi operators one still has $\omega(0; E) \in \mathbb{Z}$, in particular the smallest {\em{non-zero}} value of $\omega(0; E)$ is 1.
\end{itemize}
\end{theorem}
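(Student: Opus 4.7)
The plan is to adapt Avila's proof of Theorem \ref{thm_complexLEglobal} to the Jacobi setting, following the treatment in \cite{JitomirskayaMarx_2012, JitomirskayaMarx_2013_erratum}. Convexity, piecewise linearity, and evenness carry over from the Schr\"odinger argument almost verbatim. The finite approximations $L_n(\epsilon) := \frac{1}{n}\int_\mathbb{T} \log\|A_{n,\epsilon}^E(x)\|\,\ud x$ are convex in $\epsilon$ because the integrand is subharmonic in $z = x + i\epsilon$ (being the logarithm of a norm of a holomorphic matrix-valued function) and we average along the real direction; this convexity passes to the limit $L(\alpha, A_\epsilon^E)$, and subtraction of the constant $I(c)$ preserves it. Evenness follows from a transpose/conjugation symmetry of $A_\epsilon^E$ which, since the LE is invariant under both transposition and uniform base-point shift, effectively sends $\epsilon \mapsto -\epsilon$. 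Non-negativity of $L(\epsilon; E)$ can be extracted from a Herman-type bound: Jensen's inequality applied to the subharmonic $\log|c|$ on horizontal lines gives $\int_\mathbb{T} \log|c(x+i\epsilon)|\,\ud x \geq I(c)$ for $|\epsilon|$ outside the radius of zeros of $c$, and combined with the dominance of the off-diagonal entry of $A_\epsilon^E$ this yields $L(\alpha, A_\epsilon^E) \geq I(c)$ on that range; convexity then propagates the bound to all $\epsilon$.

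For the quantization of $\omega(\epsilon; E)$, Avila's strategy is to show that each $L_n(\epsilon)$ is piecewise linear with right-slopes in $\tfrac{2\pi}{n}\mathbb{Z}$, arising from counting zeros of determinants of finite products via the argument principle; the limit $L(\alpha, A_\epsilon^E)$ retains an integer-slope structure scaled by the Fourier degree span of the cocycle. In the Schr\"odinger case, the $(2,1)$-entry is constantly $1$ and the $(1,2)$-entry is $-1$, and this symmetry between off-diagonal entries forces $\omega \in \mathbb{Z}$. In the Jacobi case, the off-diagonal entries are $c(x+i\epsilon)$ and $-\overline{c(x-i\epsilon-\alpha)}$, two distinct Laurent polynomials in $e^{2\pi i x}$ which are related by a reflection rather than being equal; an argument-principle count of their joint winding now permits half-integer totals, and this is precisely how $\omega(\epsilon; E) \in \tfrac{1}{2}\mathbb{Z}$ arises. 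In the asymptotic regime $|\epsilon| \to \infty$, the method of almost constant cocycles reduces $A_\epsilon^E$ (after factoring out the dominant scalar) to a nearly off-antidiagonal cocycle; its iterated square has integer Fourier degrees, but a single step can carry half of that degree, making the half-integer quantization unavoidable in general.

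The improvement $\omega(0; E) \in \mathbb{Z}$ in the non-singular case is the main technical subtlety. When $c$ has no zeros on $\mathbb{T}$, the continuous map $x \mapsto c(x)/|c(x)|$ has a well-defined integer winding number $n_c$, so one can write $c(x) = e^{2\pi i n_c x} \tilde c(x)$ with $\tilde c$ of zero winding; a diagonal conjugacy absorbing the factor $e^{2\pi i n_c x}$ transforms $A_0^E$ into a cocycle whose two off-diagonal entries have matching integer Fourier supports. The argument principle applied to this normalized cocycle then reproduces the Schr\"odinger-type quantization, giving $\omega(0; E) \in \mathbb{Z}$; the least positive value is $1$ since all Fourier coefficients of $c$ and $v$ are integer-indexed.

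The hard part of the argument is this last step, namely verifying that the winding-number trick really does eliminate the half-integer accelerations exactly at $\epsilon = 0$ in the non-singular case while keeping them available in the singular case. In the singular case, zeros of $c$ on $\mathbb{T}$ obstruct the continuous extraction of a logarithm of $c$, so the diagonal conjugacy above does not exist globally; the cocycle can genuinely have half-integer acceleration at $\epsilon = 0$, and tracking this dichotomy carefully through the argument-principle computation is what distinguishes the Jacobi analysis from Avila's Schr\"odinger original.
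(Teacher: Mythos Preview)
The paper does not give its own proof of this theorem; it is quoted from \cite{JitomirskayaMarx_2012, JitomirskayaMarx_2013_erratum} (see also \cite{JitomirskayaMarx_review_2015}) and used as a black box in Sec.~\ref{sec_jacobi}. So there is no in-paper argument to compare against, only the cited references.

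Your sketch gets the coarse architecture right --- convexity from subharmonicity, evenness from a symmetry, and the half-integer versus integer dichotomy tied to whether $c$ vanishes on $\mathbb{T}$ --- but the mechanism you describe for quantization is incorrect. The finite approximations $L_n(\epsilon)=\tfrac{1}{n}\int_\mathbb{T}\log\|A_{n,\epsilon}^E(x)\|\,\ud x$ are convex but \emph{not} piecewise linear: $\log\|A_{n,\epsilon}^E(x)\|$ is the log of an operator norm, not the log-modulus of a single holomorphic scalar, so no argument-principle count on ``zeros of determinants of finite products'' produces slopes in $\tfrac{2\pi}{n}\mathbb{Z}$ (for Schr\"odinger cocycles the determinant is identically~$1$, so there is nothing to count). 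Avila's actual route to quantization runs through the regular (dominated/uniformly hyperbolic) regime: there the unstable section $u(x,\epsilon)$ is holomorphic, $L$ is affine in $\epsilon$ on each such component, and the acceleration equals the topological degree of $u(\cdot,\epsilon):\mathbb{T}\to\mathbb{P}\mathbb{C}^2$, hence an integer. The half-integer in the Jacobi case enters because normalizing $A_\epsilon^E$ to unit determinant requires extracting a square root of $\det A_\epsilon^E(x)=\overline{c(x-i\epsilon-\alpha)}\,c(x+i\epsilon)$, which in general is only single-valued on the double cover $\mathbb{R}/2\mathbb{Z}$; this halves the degree count. At $\epsilon=0$ in the non-singular case the determinant $\overline{c(x-\alpha)}\,c(x)$ is zero-free on $\mathbb{T}$, so the normalization can be carried out on $\mathbb{R}/\mathbb{Z}$ and integer acceleration is restored --- this is essentially your winding-number observation, but it enters via determinant normalization rather than via ``matching Fourier supports of off-diagonal entries.''
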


In particular, for both singular and non-singular Jacobi operators, Fig. \ref{fig:complexified-LE-energies} represents the three possible situations for the graph of $L(\epsilon; E)$ in a neighborhood of $\epsilon = 0$ if $E \in \Sigma$. From a spectral theoretic point of view the implications for each of these three cases for {\em{non-singular}} operators are the same as in the Schr\"odinger case \cite{AvilaJitomirskayaMarx_preprint_2015, JitomirskayaMarx_review_2015}; one thus partitions $\Sigma$ into subcritical, supercritical, and critical energies. 

For {\em{singular}} operators, partitioning the set $\mathcal{Z} := \{E: ~L(0; E) = 0\}$ into subcritical and critical does not yield any further information, as the presence of zeros of $c(x)$ a priori excludes any absolutely continuous spectrum for all $x \in \mathbb{T}$ \cite{Dombrowsky_1978}. 
For singular Jacobi operators, the criteria derived below will thus simply imply zero LE.

\subsection{Asymptotic Analysis}

We now turn to generalizing Theorem \ref{thm_main} to Jacobi operators, and will encounter two complications:

First, observe that the method of almost constant cocycles does not generalize immediately to the Jacobi case - if we naively factor out the leading terms in $A^E_\epsilon$ in analogy to (\ref{eq_almostconst}),  the remainder for the case $\mathrm{deg}(c) > \mathrm{deg}(v)$ will in general approach a constant matrix with {\em{zero}} spectral radius (and thus LE $= - \infty$), which would yield an undetermined expression for $L(\alpha, A_\epsilon^E)$ in the limit $\epsilon \to\infty$. 

This problem is remedied by first conjugating $A_\epsilon^E$ by 
\begin{equation} \label{eq_conjuacyjacobi}
C = \begin{pmatrix} 1 & 0 \\ 0 & w^K \end{pmatrix} ~\mbox{,}
\end{equation}
for some appropriate $K \in \frac{1}{2} \mathbb{Z}$, where for convenience we write $w:= \mathrm{e}^{- 2 \pi i (x + i \epsilon)}$. Here, we call two cocycles $(\alpha, A)$ and $(\alpha, A^\prime)$ with $A, A^\prime \in \mathcal{C}^\omega(\mathbb{R}/ 2 \mathbb{Z}, M_2(\mathbb{C}))$ conjugate if $A^\prime(x) = C(x + \alpha)^{-1} A(x) C(x)$ for some $C \in \mathcal{C}^\omega(\mathbb{R} / 2 \mathbb{Z}, GL(2, \mathbb{C}))$. Conjugacies clearly preserve the LE.

Applying the conjugacy in (\ref{eq_conjuacyjacobi}) will lead to consideration of cases, depending on the sign of $2M - (N_2 - N_1)$. The latter expresses the dependence of the asymptotics of $L(\epsilon; E)$ on the relative degree of $c$ and $v$. We note that from a spectral theoretic point of view, the conjugacy in (\ref{eq_conjuacyjacobi}) is equivalent to a known unitary whose action transforms the original Jacobi operator to one where $c(x)$ is replaced by $c(x) \mathrm{e}^{- 2 \pi i K x}$, see e.g. \cite{Teschl_book_2000}, Lemma (1.57) and Lemma 1.6, therein. 

The second complication is of fundamental nature. The key in Sec. \ref{sec_uniformds} was to {\em{quantify}} when the asymptotics expressed through the complex Herman formula holds. This was possible since Schr\"odinger cocycles are asymptotically (as $\epsilon \to \infty$) close to the constant $D_\infty = (\begin{smallmatrix} 1 & 0 \\ 0 & 0 \end{smallmatrix})$ with $(\alpha, D_\infty) \in \mathcal{DS}$; quantifying the asymptotics then amounted to finding the radius of stability of $\mathcal{DS}$ about $(\alpha, D_\infty)$. 

Depending on the sign of $2M - (N_2 - N_1)$, this will not be possible for Jacobi operators. Indeed analyzing the case $2M - (N_2 - N_1) < 0$ will show that  the limiting constant cocycle lies on the \emph{boundary} of $\mathcal{DS}$. The asymptotic expression however still leads a Herman bound which, due to above remarks, is not entirely trivial and has not appeared before.

We turn to the analysis of the cases, starting with $2M > (N_2 - N_1)$, where a criterion for sub-criticality can be obtained. Here, the analogue of (\ref{eq_defnminfunc}) is defined by
\begin{equation} \label{eq_jacobi_auxfun}
m(\epsilon; E) := \min_{x \in \mathbb{T}} \dfrac{ \vert E - v(x + i \epsilon) \vert^2 }{\vert c(x+ i \epsilon) c(x - i \epsilon - \alpha) \vert} ~\mbox{,}
\end{equation}
respectively, uniformly over $E \in \Sigma$,
\begin{equation} \label{eq_jacobi_auxfun_unif}
\widetilde{m}(\epsilon; E) := \min_{x \in \mathbb{T}} \dfrac{ \left( \vert v(x + i \epsilon) \vert - 2 \left( \sum_{j = N_1}^{N_2} \vert \mu _j \vert + \sum_{j=1}^{M} \vert \lambda_j \vert \right) \right)^2 }{\vert c(x+ i \epsilon) c(x - i \epsilon - \alpha) \vert} ~\mbox{,}
\end{equation}
where $\lambda_j$ is given in (\ref{eq_Fouriercoeff_potential}), as earlier.

Define the corresponding Herman radii, $\epsilon_H$ and $\epsilon_{H; \mathrm{unif}}$, as the largest $\epsilon \geq 0$ such that, respectively, $m(\epsilon; E) = 4$ and $\widetilde{m}(\epsilon; E) = 4$. 

\begin{theorem} \label{thm_jacobi_case1}
Suppose $2M > (N_2 - N_1)$.
\begin{itemize}
\item[(i)] Then for every $E \in \mathbb{R}$,
\begin{equation} \label{eq_case1_complexherman}
L(\alpha, A^E) = 2 \pi K \vert \epsilon \vert + \log \vert \lambda_M \vert ~\mbox{, all $\vert \epsilon \vert \geq \epsilon_H$.}
\end{equation}
In particular one has the Herman bound,
\begin{equation} \label{eq_case1_hermanbd}
L(0; E) \geq  \log \vert \lambda_M \vert - I(c)  ~\mbox{.}
\end{equation}

\item[(ii)]
For {\em{non-singular}} Jacobi operators, $E \in \Sigma(\alpha)$ is subcritical whenever
\begin{equation} \label{eq_case1_subcr}
\epsilon_H < \dfrac{-\log \vert \lambda_M \vert + I(c)}{2 \pi (M-1)} ~\mbox{.}
\end{equation}

All of the spectrum is subcritical whenever 
\begin{equation} \label{eq_case1_subcrunif}
\epsilon_{H; \mathrm{unif}} < \dfrac{-\log \vert \lambda_M \vert + I(c)}{2 \pi (M-1)} ~\mbox{.}
\end{equation}
\end{itemize}
For {\em{singular}} Jacobi operators, (\ref{eq_case1_subcr}) and (\ref{eq_case1_subcrunif}) simply imply $L(0; E) = 0$.
\end{theorem}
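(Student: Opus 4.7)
The plan is to mirror the Schr\"odinger-case strategy of Sections \ref{sec_complexHerman}--\ref{sec_uniformds}: first establish a Jacobi analogue of Proposition \ref{prop_key} ensuring that $(\alpha, A_\epsilon^E) \in \mathcal{DS}$ whenever $m(\epsilon;E) > 4$, and then combine this dynamical input with the convexity/quantization of $\epsilon \mapsto L(\epsilon;E)$ from Theorem \ref{thm_complexLEglobal_jacobi} to derive both the complex Herman formula (\ref{eq_case1_complexherman}) and the subcriticality criterion.

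For part (i), first apply the conjugacy (\ref{eq_conjuacyjacobi}) with exponent $K_0$ taken in the (by assumption non-empty) interval $(-(N_1+M),\, M-N_2)$; the symmetric choice $K_0 = -(N_1+N_2)/2 \in \frac{1}{2}\mathbb{Z}$ is convenient. Factoring $(E-v(x+i\epsilon))$ out of the conjugated cocycle $\widetilde A_\epsilon^E$ yields
\[
\widetilde A_\epsilon^E(x) = \bigl(E-v(x+i\epsilon)\bigr)\, D_\epsilon(x), \qquad D_\epsilon(x) = \begin{pmatrix} 1 & \beta(x;\epsilon) \\ \gamma(x;\epsilon) & 0 \end{pmatrix},
\]
with $|\beta(x;\epsilon)|\,|\gamma(x;\epsilon)| = 1/m(\epsilon;E)$ pointwise, and with both $\beta,\gamma$ tending uniformly to $0$ on $\mathbb{T}$ as $|\epsilon|\to \infty$ (this is what the choice of $K_0$ buys). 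When $m(\epsilon;E) > 4$, for each $x$ one can pick $\eta(x) \in (|\beta(x;\epsilon)|,\, 1/(4|\gamma(x;\epsilon)|))$ and invoke Lemma \ref{lem_stability} (with $\delta = 0$) to make $F_{D_\epsilon(x)}$ a fibrewise contraction of $\overline{B_{1/(2\eta(x))}(0)}$; the resulting $x$-varying disks assemble into a conefield, establishing $(\alpha, D_\epsilon) \in \mathcal{DS}$ and hence $(\alpha, A_\epsilon^E) \in \mathcal{DS}$. Since $D_\epsilon \to \bigl(\begin{smallmatrix}1 & 0\\0 & 0\end{smallmatrix}\bigr)$ in the inductive-limit topology, continuity of the LE in the analytic category \cite{JitomirskayaMarx_2012, AvilaJitomirskayaSadel_2013}, together with the scalar-factoring identity $L(\alpha, \widetilde A_\epsilon^E) = \int_\mathbb{T} \log|E-v(x+i\epsilon)|\,\mathrm{d}x + L(\alpha, D_\epsilon)$ and the Jensen computation $\int_\mathbb{T}\log|E-v(x+i\epsilon)|\,\mathrm{d}x = \log|\lambda_M| + 2\pi M |\epsilon|$ (valid for $|\epsilon|$ outside the root radius of $E-v$), yields $L(\alpha, A_\epsilon^E) = 2\pi M|\epsilon| + \log|\lambda_M| + o(1)$. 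Theorem \ref{thm_complexLEglobal_jacobi} upgrades this asymptotic to an exact identity on $\{|\epsilon|\geq \epsilon_H\}$ via the piecewise-linearity/minimum-modulus argument of Section \ref{sec_complexHerman}, establishing (\ref{eq_case1_complexherman}) with slope $K = M$; evaluating at $\epsilon = 0$ via convexity then gives (\ref{eq_case1_hermanbd}).

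For part (ii), mirror the proof of Theorem \ref{thm_main}. In the non-singular case, if $E \in \Sigma(\alpha)$ is not subcritical then Theorem \ref{thm_complexLEglobal_jacobi} forces $\omega(0;E) \geq 1$; combining convexity of $L(\epsilon;E)$, $L(0;E) \geq 0$, and (\ref{eq_case1_complexherman}) at $\epsilon = \epsilon_H$ produces
\[
0 \;\leq\; L(\epsilon_H;E) - 2\pi\epsilon_H \;=\; 2\pi(M-1)\,\epsilon_H + \log|\lambda_M| - I(c),
\]
the contrapositive of (\ref{eq_case1_subcr}). The uniform version (\ref{eq_case1_subcrunif}) follows from the standard norm bound $\|H_{c(x),v(x);\alpha}\| \leq 2\sum_j|\mu_j| + 2\sum_j|\lambda_j|$, which uniformly over $\Sigma(\alpha)$ replaces $m$ by $\widetilde m$ and $\epsilon_H$ by $\epsilon_{H;\mathrm{unif}}$. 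For singular operators Theorem \ref{thm_complexLEglobal_jacobi} only guarantees $\omega(0;E) \in \frac{1}{2}\mathbb{Z}$, so the same chain of inequalities merely forces $\omega(0;E) = 0$, i.e.\ $L(0;E) = 0$, which is all the theorem claims in that regime (subcriticality being vacuous because of the zeros of $c$).

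The main technical obstacle is the conefield step of (i): converting the \emph{pointwise} control $|\beta(x)|\,|\gamma(x)| < 1/4$ into a genuinely $(\alpha, D_\epsilon)$-invariant conefield with $x$-dependent radii, since the joint bound $\sup_x|\beta|\cdot \sup_x|\gamma| < 1/4$ -- which is what one uses to produce a fixed conefield in the Schr\"odinger case -- need not follow from $m(\epsilon;E) > 4$ alone. I expect this to be handled either by verifying strict containment $D_\epsilon(x)\cdot \overline{U_x} \subsetneq U_{x+\alpha}$ from the strict inequality $m(\epsilon;E) > 4$ together with continuity of $\eta(x)$ in $x$, or else by passing to an iterate on which uniform domination becomes manifest. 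Once the cone condition is secured, the remainder is a transparent transfer of Section \ref{sec_complexHerman} to the Jacobi setting, with the determinant contribution absorbed into the constant $I(c)$.
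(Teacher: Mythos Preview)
Your overall architecture for part (i) and for the non-singular half of part (ii) matches the paper's: conjugate by $C = \bigl(\begin{smallmatrix}1&0\\0&w^{K_0}\end{smallmatrix}\bigr)$ with $K_0 = -(N_1+N_2)/2$, factor so that the residual cocycle tends to $\bigl(\begin{smallmatrix}1&0\\0&0\end{smallmatrix}\bigr)$, invoke Lemma~\ref{lem_stability} to get $\mathcal{DS}$ for $|\epsilon|>\epsilon_H$, and then run the convexity/contrapositive argument exactly as in Theorem~\ref{thm_main}. The technical worry you flag about the conefield (product of pointwise bounds versus product of suprema) is a genuine one; the paper glosses over it too, simply citing Lemma~\ref{lem_stability}. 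So on that point you are not missing an idea the paper supplies---you have identified a soft spot shared by both write-ups.

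Where your proposal goes wrong is the \emph{singular} case. You argue: ``Theorem~\ref{thm_complexLEglobal_jacobi} only guarantees $\omega(0;E)\in\tfrac12\mathbb{Z}$, so the same chain of inequalities merely forces $\omega(0;E)=0$, i.e.\ $L(0;E)=0$.'' Both steps fail. First, with only half-integer quantization, ``not subcritical'' gives $\omega(0;E)\ge \tfrac12$, and the convexity chain then yields
\[
\epsilon_H \;\ge\; \frac{-\log|\lambda_M|+I(c)}{2\pi\,(M-\tfrac12)}\,,
\]
whose right-hand side is \emph{smaller} than the threshold in (\ref{eq_case1_subcr}); thus the hypothesis $\epsilon_H < \frac{-\log|\lambda_M|+I(c)}{2\pi(M-1)}$ does not contradict it, and you cannot conclude $\omega(0;E)=0$. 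Second, even if you could force $\omega(0;E)=0$, this by itself does not give $L(0;E)=0$: a priori the complexified LE could be a positive constant near $\epsilon=0$, and ruling that out requires the $\mathcal{DS}$/Johnson mechanism, which is exactly what is delicate in the singular setting.

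The paper handles the singular case by an entirely different route: it approximates the singular $c$ by non-singular trigonometric polynomials $c_n$ of the \emph{same} upper and lower degrees (so that $2M > N_2-N_1$ persists), proved via a short density Lemma (Lemma~\ref{Lem density}). For the approximants the integer quantization $\omega(0;E)\ge 1$ is available, so the upper bound (\ref{eq_jacobi_case1_upperbd}) holds along the sequence; one then passes to the limit using Hausdorff continuity of the spectrum and continuity of the LE in the analytic category to transfer the inequality to the singular operator. You should replace your direct half-integer argument with this approximation step.
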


\begin{proof}
Let $K \in \frac{1}{2} \mathbb{Z}$ to be determined later. Conjugating the Jacobi cocycle by $C$ in (\ref{eq_conjuacyjacobi}), we obtain 
\begin{equation} \label{eq_jacobiafterconj}
A^\prime_\epsilon(x) :=  \begin{pmatrix} E - v(x+ i \epsilon) & -w^K \overline{c(x - i \epsilon -\alpha)} \\ z^{-K} e^{2\pi i K\alpha} c(x + i \epsilon) & 0 \end{pmatrix} ~\mbox{.}
\end{equation}
Note that taking the limit $\epsilon \to + \infty$ is equivalent to $\vert w \vert \to \infty$. Thus, expressing the two off-diagonal terms of (\ref{eq_jacobiafterconj}) in terms of $w$,
\begin{eqnarray}
 w^K \overline{c(x - i \epsilon -\alpha)} &=& \sum_{j = N_1}^{N_2} \overline{\mu_j}w^{j+K}e^{2\pi i j \alpha}  ~\mbox{,}    \\
w^{-K} e^{2\pi i K \alpha}c(x + i \epsilon) &=& e^{2 \pi i M \alpha} \sum_{j = N_1}^{N_2} \mu_j w^{-(j+K)} ~\mbox{,}
\end{eqnarray}
we see that the upper left corner in (\ref{eq_jacobiafterconj}) will dominate as $\epsilon \to + \infty$, if
\begin{eqnarray} \label{eq_cond_potdom}
N_2 + K < M ~\mbox{,} \nonumber \\
N_1 + K > - M ~\mbox{,} 
\end{eqnarray}
which is possible since $N_2 - N_1 < 2M$. 

In agreement with (\ref{eq_cond_potdom}), we pick $K = - \frac{N_1 + N_2}{2}$.
Now we can factor out the dominating term in $A^\prime_{\epsilon}$ and take $\epsilon \to + \infty$, giving
\begin{equation} \label{mat} 
L(\alpha, A_{\epsilon}^E) = L(\alpha, A_\epsilon^\prime) = 2 \pi M \epsilon + \log{|\lambda_M|}+ L(\alpha, \begin{pmatrix} 1 + f_a(\epsilon, x) & f_b(\epsilon, x) \\ f_c(\epsilon, x) & 0  \end{pmatrix} ) 
\end{equation}
where
\begin{eqnarray}
 f_a(\epsilon, x) =  \dfrac{E - v(x + i \epsilon)}{\lambda_M e^{2\pi M \epsilon - 2\pi i M x}} - 1 ~\mbox{, }
& |f_b(\epsilon, x)| =  \dfrac{|c(x - i \epsilon - \alpha)|}{|\lambda_M| e^{2 \pi (M - K) \epsilon}}~\mbox{, } \\
& |f_c(\epsilon, x)| = \dfrac{|c(x + i \epsilon)|}{|\lambda_M| e^{2 \pi (M + K) \epsilon}} ~\mbox{.}
\end{eqnarray}

Since $N_2 - N_1 < 2M$, $f_a, f_b, f_c = o(1)$ uniformly in $x$ as $\epsilon \to +\infty$, hence
\begin{equation}
(\alpha, D_\epsilon): = (\alpha, \begin{pmatrix} 1 + f_a(\epsilon, x) & f_b(\epsilon, x) \\ f_c(\epsilon, x) & 0  \end{pmatrix}) \to (\alpha, \begin{pmatrix} 1 & 0 \\ 0 & 0 \end{pmatrix}) \in \mathcal{DS} ~\mbox{.}
\end{equation}
Thus, from the stability statement in Lemma \ref{lem_stability}, we conclude $(\alpha, D_\epsilon) \in \mathcal{DS}$ if $\epsilon > \epsilon_H$.

On the other hand it is known that \cite{AvilaJitomirskayaSadel_2013} \footnote{Indeed, here we only use the ``only if'' direction which is essentially trivial.}, $L(\epsilon; E)$ is linear and positive if and only if $(\alpha, A_\epsilon^E) \in \mathcal{DS}$, which, combining Theorem \ref{thm_complexLEglobal_jacobi} and \ref{mat}, yields (\ref{eq_case1_complexherman}). Note that as in the Schr\"odinger case, we use that $\epsilon \mapsto m(\epsilon; E)$ increases strictly for $\epsilon \geq \epsilon_H$. Finally, convexity of the complexified LE implies the Herman bound, (\ref{eq_case1_hermanbd}).

To prove part (ii), we first assume that the Jacobi operator is {\em{non-singular}} and follow the same contrapositive argument as in the proof of Theorem \ref{thm_main}. If $E \in \Sigma$ is {\em{not}} subcritical, $\omega(\epsilon =0 ; E) \geq 1$, which by convexity would imply the upper bound
\begin{equation} \label{eq_jacobi_case1_upperbd}
0 \leq L(\epsilon; E) \leq \log| \lambda_M |+2\pi\epsilon_H+2\pi (\epsilon-\epsilon_H) ~\mbox{, $0 \leq \epsilon \leq \epsilon_H$.}
\end{equation}
Notice, that by Theorem \ref{thm_complexLEglobal_jacobi}, $\omega(0; E) \geq 1$ for non-singular operators if $E \in \Sigma$ is not subcritical\footnote{For singular operators, the least positive value would be $\frac{1}{2}$ which would immediately imply a weaker form of (\ref{eq_jacobi_case1_upperbd}). The limiting argument below however allows to improve on that.}.

In particular, letting $\epsilon = 0$, we obtain (\ref{eq_case1_subcr}) upon taking the contrapositive. Finally, the uniform condition in (\ref{eq_case1_subcrunif}) follows immediately estimating the spectral radius of $H_{c(x), v(x); \alpha}$ from above by
\begin{equation}
2 \Vert c \Vert_\infty + \Vert v \Vert_\infty \leq 2 \left( \sum_{j = N_1}^{N_2} \vert \mu _j \vert + \sum_{j=1}^{M} \vert \lambda_j \vert \right) ~\mbox{.}
\end{equation}

Finally, if the operator is {\em{singular}}, we can use density of non-singular Jacobi operators in operator-norm topology (see Lemma \ref{Lem density}, below) to extend the upper bound in (\ref{eq_jacobi_case1_upperbd}) to the singular case, which then yields the claim in (ii): To see this, by the proof of Lemma \ref{Lem density},  there exists a sequence $(c_n)$ of trigonometric polynomials such that $c_n \to c$ in $\mathcal{C}^\omega(\mathbb{T})$, and for all $n \in \mathbb{N}$, $c_n$ has upper and lower degrees $N_2$ and $N_1$, respectively, and has no zeros on $\mathbb{T}$. Note that this in particular implies that the condition $2M > (N_2 - N_1)$ holds along the resulting sequence of approximating non-singular Jacobi operators, which allows application of above argument for the non-singular case. 

Denote the spectrum of $H_{c_n(x), v(x); \alpha}$ by $\Sigma_n$ and the spectrum of $H_{c_(x), v(x); \alpha}$ by $\Sigma$. Clearly, for all $x \in \mathbb{T}$, $H_{c_n(x), v(x); \alpha} \to H_{c_(x), v(x); \alpha}$ in norm-topology, in particular, $\Sigma_n \to \Sigma$ in the Hausdorff metric.

Suppose $E \in \Sigma$ does {\em{not}} satisfy $L(\epsilon = 0; E) = 0$. Then, taking $E_n \in \Sigma_n$ such that $E_n \to E$, continuity of the LE of analytic cocycles \cite{JitomirskayaMarx_2012}, implies that $L(\epsilon = 0; E_n) > 0$, eventually; in particular, $E_n$ is not subcritical whence (\ref{eq_jacobi_case1_upperbd}) holds eventually. Taking the limit implies that the upper bound in (\ref{eq_jacobi_case1_upperbd}) holds for $E$. 
\end{proof}

\begin{lemma}\label{Lem density}
The set of functions in $\mathcal{C}^{\omega}(\T)$ which are bounded away from zero on $\mathbb{T}$ is open and dense in $\mathcal{C}^{\omega}(\T)$. In particular, non-singularity of analytic Jacobi operators is Baire-generic in operator norm.
\end{lemma}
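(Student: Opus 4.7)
The approach splits naturally into three steps: openness in the inductive limit topology on $\mathcal{C}^\omega(\mathbb{T})$, density in the same topology, and transfer of both to operator norm. I expect the argument to be largely elementary, the only nonroutine ingredient being a measure-zero statement for the image $f(\mathbb{T}) \subset \mathbb{C}$.

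For openness, I would start from $f \in \mathcal{C}^\omega(\mathbb{T})$ with $\eta := \min_{x \in \mathbb{T}} |f(x)| > 0$ and fix $\delta > 0$ so that $f \in \mathcal{C}^\omega_\delta(\mathbb{T})$. Uniform continuity of $f$ on the compact strip $\{|\im z| \le \delta\}$ yields some $0 < \delta' \le \delta$ with $|f(z)| \ge \eta/2$ throughout $\{|\im z| \le \delta'\}$. Then any $g \in \mathcal{C}^\omega_{\delta'}(\mathbb{T})$ with $\|g - f\|_{\delta'} < \eta/4$ satisfies $|g(z)| \ge \eta/4$ on the same strip, and in particular is bounded away from zero on $\mathbb{T}$. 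By the definition of the inductive limit topology, this exhibits an open neighborhood of $f$ contained in the desired set.

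For density I would perturb by a small complex constant, reducing the problem to finding $\epsilon \in \mathbb{C}$ with $|\epsilon|$ small and $-\epsilon \notin f(\mathbb{T})$; then $f + \epsilon \in \mathcal{C}^\omega_\delta(\mathbb{T})$ has no zero on $\mathbb{T}$ and $\|(f+\epsilon) - f\|_\delta = |\epsilon|$, giving convergence in the inductive limit topology. Existence of such $\epsilon$ is the main delicacy: since $f|_{\mathbb{T}}$ is real analytic, it has finitely many critical points on $\mathbb{T}$, so $f(\mathbb{T})$ is a finite union of smooth real-analytic arcs in $\mathbb{C}$ and has planar Lebesgue measure zero. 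Its complement is then dense near $0$, which supplies the required $\epsilon$.

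For the operator-norm statement, I would observe that the assignment $c \mapsto H_{c(x), v(x); \alpha}$ satisfies the direct bound $\|H_{c_1(x), v(x); \alpha} - H_{c_2(x), v(x); \alpha}\| \le 2 \|c_1 - c_2\|_\infty$, and that non-singularity of $H_{c, v; \alpha}$ is equivalent to $c$ being bounded away from zero on $\mathbb{T}$. The two arguments above transfer to the sup-norm $\|\cdot\|_\infty$ on $\mathbb{T}$ essentially verbatim: in openness the shrinking-of-$\delta$ step becomes superfluous, and the constant-perturbation in density still lies in $\mathcal{C}^\omega(\mathbb{T})$. Consequently the set of non-singular analytic Jacobi operators is both open and dense in operator norm, hence residual, proving the Baire-genericity claim.
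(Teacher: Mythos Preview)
Your proof is correct, but the density step uses a genuinely different construction from the paper's. The paper proceeds by factorizing $f = t\cdot q$, where $t$ is a trigonometric polynomial collecting all the zeros of $f$ on $\mathbb{T}$ and $q$ is zero-free on $\mathbb{T}$, and then approximates by $f_n(x) := t(x+i\epsilon_n)\,q(x)$ with $\epsilon_n\to 0$; you instead translate $f$ by a small complex constant, using that $f(\mathbb{T})\subset\mathbb{C}$ has planar measure zero. Your approach is more elementary and avoids any factorization, but it loses a property the paper explicitly records and later exploits in the proof of Theorem~\ref{thm_jacobi_case1}(ii): when $f$ is a trigonometric polynomial, the paper's approximants $f_n$ are trigonometric polynomials of the \emph{same} upper and lower degree as $f$, so that the hypothesis $2M>(N_2-N_1)$ persists along the approximating non-singular sequence. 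Your constant perturbation need not preserve $N_1,N_2$ (e.g.\ when $0\notin[N_1,N_2]$), so it would not serve that application without modification. For the lemma as stated, though, your argument is complete. A small caveat: in the density step the claim of ``finitely many critical points'' needs $f$ non-constant; the constant case is of course trivial, and alternatively the measure-zero statement follows directly from $f\in C^1(\mathbb{T},\mathbb{C})$ without any critical-point analysis.
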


\begin{proof}
Openness is clear. To show density, given $0 \not \equiv f \in  \mathcal{C}^{\omega}(\T)$ that has zeros on $\mathbb{T}$, factorize $f(x) = t(x)\cdot q(x)$, where $q$ is zero-free on $\mathbb{T}$ and  $t(x)$ is a trigonometric polynomial containing all the zeros of $f$ on $\mathbb{T}$. Then let $(\epsilon_n)$ be a real sequence with $\epsilon_n = o(1)$, letting
\begin{equation}
f_n(x) : = t(x+i\epsilon_n)q(x) ~\mbox{,}
\end{equation}
$f_n$ has no zeros on $\mathbb{T}$ and $f_n \to f$ in $\mathcal{C}^\omega(\mathbb{T})$.

We note that if $f$ is a trigonometric polynomial, then $f_n$ is a trigonometric polynomial of the same degree than $f$; the latter is relevant for the proof of Theorem \ref{thm_jacobi_case1} (ii).
\end{proof}

Note that in order to apply the stability Lemma \ref{lem_stability}, the previous argument relied on $m(\epsilon; E) \to +\infty$ as $\epsilon \to +\infty$, in particular $m(\epsilon; E) = 4$ eventually. If $2M = (N_2 - N_1)$, this is not the case anymore, in fact
\begin{equation}
m(\epsilon; E) \to \dfrac{\vert \mu_{N_1} \mu_{N_2} \vert}{\vert \lambda_M \vert^2} ~\mbox{, as $\epsilon \to + \infty$ .}
\end{equation}
Lemma \ref{lem_stability} is however still applicable if $\frac{\vert \mu_{N_1} \mu_{N_2} \vert}{\vert \lambda_M \vert^2} < \frac{1}{4}$, in which case both $\epsilon_H$ and $\epsilon_{H; \mathrm{unif}}$ are well-defined. Hence, we conclude:
\begin{theorem}
Suppose $2M = N_2 - N_1$.
\begin{itemize}
\item[(i)] For all $E \in \mathbb{R}$ and {\em{some}} $0 \leq \epsilon_0$,
\begin{equation} \label{asymp2intro}
 L(\alpha, A_{\epsilon}^E) = 2 \pi M | \epsilon| + \log{\left( \max_\pm \left| \dfrac{\lambda_M \pm \sqrt{\lambda_M^2 - 4\overline{\mu_{N_2}}\mu_{N_1}e^{2\pi i M \alpha}}}{2}\right| \right)} ~\mbox{, all $|\epsilon| \geq \epsilon_0$.}
\end{equation}
In particular, one has the Herman bound
\begin{equation} \label{eq_case2_hermanbd}
L(0; E) \geq \log{\left( \max_\pm \left| \dfrac{\lambda_M \pm \sqrt{\lambda_M^2 - 4\overline{\mu_{N_2}}\mu_{N_1}e^{2\pi i M \alpha}}}{2}\right| \right)} - I(c) ~\mbox{.}
\end{equation}

If
\begin{equation} \label{eq_case2_cond}
\frac{|\mu_{N_1}\mu_{N_2}|}{|\lambda_M|^2} < \frac{1}{4} ~\mbox{,}
\end{equation}
then (\ref{asymp2intro}) holds for all $\vert \epsilon \vert \geq \epsilon_H$.

\item[(ii)] Suppose (\ref{eq_case2_cond}) holds. 

If the operator is {\em{non-singular}}, then $E \in \Sigma(\alpha)$ is subcritical whenever
\begin{equation}  \label{eq_case2_subcr}
\epsilon_H < \frac{-\log{\left( \max_\pm \left| \dfrac{\lambda_M \pm \sqrt{\lambda_M^2 - 4\overline{\mu_{N_2}}\mu_{N_1}e^{2\pi i M \alpha}}}{2}\right| \right)} + I(c)}{2\pi(M-1)} ~\mbox{.}
\end{equation}
All of the spectrum is subcritical whenever (\ref{eq_case2_subcr}) holds with $\epsilon_H $ replaced by $\epsilon_{H; \mathrm{unif}}$.

For {\em{singular}} Jacobi operators, (\ref{eq_case2_cond}) and (\ref{eq_case2_subcr}) imply $L(0; E) = 0$.
\end{itemize}
\end{theorem}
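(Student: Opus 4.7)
The plan is to parallel the proof of Theorem \ref{thm_jacobi_case1}, modified to accommodate the balanced case $2M = N_2 - N_1$, where the conjugation (\ref{eq_conjuacyjacobi}) with $K = -(N_1+N_2)/2$ makes all three nontrivial entries of $A'_\epsilon(x)$ acquire the same dominant growth $w^M$. First I would apply this conjugation and verify the expansion $A'_\epsilon(x) = w^M\bigl(M_0 + \mathrm{o}(1)\bigr)$ uniformly in $x$ as $\epsilon \to +\infty$, with
\begin{equation*}
M_0 = \begin{pmatrix} -\overline{\lambda_M} & -\overline{\mu_{N_2}}\,e^{2\pi i N_2 \alpha} \\ e^{2\pi i K\alpha}\mu_{N_1} & 0 \end{pmatrix}.
\end{equation*}
Computing $\rho(M_0)$ from the characteristic polynomial $\lambda^2 + \overline{\lambda_M}\lambda + \overline{\mu_{N_2}}\mu_{N_1}e^{2\pi i M\alpha}$ and applying $|{-z}|=|z|$ and $|\overline{w}|=|w|$ inside the radical produces exactly the quadratic expression in (\ref{asymp2intro}). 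Continuity of the LE in the analytic category \cite{JitomirskayaMarx_2012, AvilaJitomirskayaSadel_2013} then yields the asymptotic $L(\alpha, A^E_\epsilon) = 2\pi M|\epsilon| + \log\rho(M_0) + \mathrm{o}(1)$, and Theorem \ref{thm_complexLEglobal_jacobi} (piecewise linearity of $L$ with slopes in $\pi\mathbb{Z}$) upgrades this to an exact equality for all $|\epsilon| \geq \epsilon_0$, some $\epsilon_0 \geq 0$. The Herman bound (\ref{eq_case2_hermanbd}) then follows from convexity of $L$ at $\epsilon = 0$.

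For the quantitative improvement $\epsilon_0 \leq \epsilon_H$ under hypothesis (\ref{eq_case2_cond}), I would first use Vieta's formulas $\lambda_+ + \lambda_- = -\overline{\lambda_M}$, $\lambda_+\lambda_- = \overline{\mu_{N_2}}\mu_{N_1}e^{2\pi i M\alpha}$ to show that $|\lambda_+| = |\lambda_-|$ would force $|\overline{\mu_{N_2}}\mu_{N_1}| \geq |\lambda_M|^2/4$, so (\ref{eq_case2_cond}) is precisely the condition placing $(\alpha, M_0) \in \mathcal{DS}$. Then, factoring $A^E_\epsilon(x) = (E - v(x+i\epsilon))\,\tilde D_\epsilon(x)$ with $\tilde D_\epsilon(x) = \bigl(\begin{smallmatrix}1 & b(x;\epsilon)\\ c(x;\epsilon) & 0\end{smallmatrix}\bigr)$ (as in Sec. \ref{sec_uniformds}) and verifying the cone condition of Lemma \ref{lem_stability} with $\delta = 0$ -- the natural quantitative hypothesis being exactly the largeness condition $m(\epsilon;E) > 4$ from (\ref{eq_jacobi_auxfun}) -- would give $(\alpha, \tilde D_\epsilon)$, hence $(\alpha, A^E_\epsilon)$, in $\mathcal{DS}$ for all $|\epsilon| \geq \epsilon_H$. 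Since $L(\alpha, A^E_\epsilon)$ is linear on the $\mathcal{DS}$ region \cite{AvilaJitomirskayaSadel_2013}, this forces $\epsilon_0 \leq \epsilon_H$ and completes part (i).

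Part (ii) would follow from the same contrapositive argument as in the proof of Theorem \ref{thm_jacobi_case1}(ii): if $E \in \Sigma$ is not subcritical, Theorem \ref{thm_complexLEglobal_jacobi} gives $\omega(0;E) \geq 1$ in the non-singular case, and convexity of $L(\epsilon;E)$ together with the formula established on $[\epsilon_H, \infty)$ forces $0 \leq L(0;E) \leq \log\rho(M_0) - I(c) + 2\pi(M-1)\epsilon_H$, whose contrapositive is exactly (\ref{eq_case2_subcr}). The uniform version replaces $\epsilon_H$ by $\epsilon_{H;\mathrm{unif}}$ using $2\|c\|_\infty + \|v\|_\infty$ to bound the spectral radius of $H_{c,v;\alpha}$. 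For singular operators, I would invoke Lemma \ref{Lem density} to approximate $c$ by non-singular trigonometric polynomials $c_n$ of the same upper and lower degrees $N_2, N_1$ (so that both $2M = N_2 - N_1$ and, for $n$ large, (\ref{eq_case2_cond}) are preserved), then pass the upper bound to the singular limit using continuity of the LE of analytic cocycles \cite{JitomirskayaMarx_2012} together with Hausdorff convergence of the spectra; the weaker conclusion $L(0;E) = 0$ reflects that the acceleration can take half-integer values in the singular setting. The main technical obstacle I anticipate is precisely matching the pointwise-minimum definition of $m(\epsilon;E)$ to the cone hypothesis of Lemma \ref{lem_stability}, since unlike in Theorem \ref{thm_jacobi_case1} the off-diagonal entries of $\tilde D_\epsilon$ do not tend to zero as $\epsilon\to\infty$ and so the slack between $\sup_x|b(x)c(x)|$ and $\sup_x|b(x)|\sup_x|c(x)|$ cannot simply be absorbed by taking $\epsilon$ large.
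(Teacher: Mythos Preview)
Your proposal follows essentially the same route as the paper: conjugate by $C = \bigl(\begin{smallmatrix}1 & 0\\ 0 & w^K\end{smallmatrix}\bigr)$ with $K = M - N_2 = -(N_1+N_2)/2$, factor out $w^M$, identify the limiting constant matrix and its spectral radius, then combine continuity of the LE with Theorem~\ref{thm_complexLEglobal_jacobi} for the asymptotic formula and with convexity for the Herman bound; for part (ii) the contrapositive argument and the singular-limit via Lemma~\ref{Lem density} are identical to the paper's.

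Two remarks. First, the technical obstacle you flag---that $m(\epsilon;E) > 4$ controls $\sup_x|b(x)c(x)|$ rather than $\sup_x|b(x)|\sup_x|c(x)|$ as Lemma~\ref{lem_stability} literally requires---is real, and the paper does not address it either: it simply writes ``we conclude from Lemma~\ref{lem_stability}.'' So you are not missing an idea the paper supplies; this is a point on which the paper is equally brief. Second, the one ingredient the paper adds that you do not mention is a monotonicity statement for $m(\epsilon;E)$ on $[\epsilon_H,\infty)$ via Hadamard's three-circle theorem, used to ensure $m(\epsilon;E) > 4$ throughout that range; however, since $\epsilon_H$ is by definition the \emph{largest} $\epsilon$ with $m(\epsilon;E)=4$ and $m(\epsilon;E)$ tends to a limit exceeding $4$, continuity and the intermediate value theorem already give $m(\epsilon;E)>4$ for $\epsilon>\epsilon_H$, so this extra step is not strictly needed.
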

\begin{remark}
We note the frequency dependence in the lower bound (\ref{eq_case2_hermanbd}). Indeed, it follows from our earlier work on the LE of extended Harper's model in \cite{JitomirskayaMarx_2012}, that for $N_1 = -1, N_2 = 1$ and $M=1$, the frequency dependence of the asymptotics (\ref{asymp2intro}) persists as $\epsilon \to 0$, resulting in a frequency dependence of the LE of the Jacobi operator,
\begin{equation}
L(0; E) = \max\left\{ \log{\left( \max_{\pm} \left| \dfrac{\lambda_1 \pm \sqrt{\lambda_1^2 - 4\overline{\mu_{1}}\mu_{-1}e^{2\pi i \alpha}}}{2}\right| \right)} - I(c); 0 \right\} ~\mbox{.}
\end{equation}
This is interesting, since for quasi-periodic Schr\"odinger operators there is no known example of a LE with explicit dependence on $\alpha$.
\end{remark}

\begin{proof}
The argument follows the same steps as in the proof of Theorem \ref{thm_jacobi_case1}. We again conjugate by $C= (\begin{smallmatrix} 1 & 0 \\  0 & w^K \end{smallmatrix})$, this time with $K = M - N_2 = -N_1 - M$. The leading terms in the off-diagonal entries of (\ref{eq_jacobiafterconj}) are now $\overline{\mu_{N_2}}w^{M}e^{2\pi i N_2 \alpha}$ and $e^{2 \pi i K \alpha} \mu_{N_1} w^{M}$. Thus, we can pull out $w^M$ and use Theorem \ref{thm_complexLEglobal_jacobi} to see that for $\vert \epsilon \vert$ sufficiently large, 
\begin{eqnarray} 
L(\alpha, A_{\epsilon}^E) &=& 2\pi M |\epsilon| + \log{ \rho \begin{pmatrix} \lambda_M &  \overline{\mu_{N_2}}e^{2\pi i N_2 \alpha} \\ e^{2 \pi i (M-N_2) \alpha} \mu_{N_1} & 0 \end{pmatrix} } \label{eq_jacobi_case2_1} \\
&=& 2 \pi K | \epsilon| + \log{\left( \max_\pm \left| \dfrac{\lambda_M \pm \sqrt{\lambda_M^2 - 4\overline{\mu_{N_2}}\mu_{N_1}e^{2\pi i M \alpha}}}{2} \right| \right)} ~\mbox{,} \label{eq_jacobi_case2_2}
\end{eqnarray} 
which by convexity implies the lower bound in (\ref{eq_case2_hermanbd}). In (\ref{eq_jacobi_case2_1}),  we use $\rho(M)$ to denote the spectral radius of a matrix $M$. 

Moreover, if (\ref{eq_case2_cond}) holds, the limiting constant cocycle in (\ref{eq_jacobi_case2_1}) induces a dominated splitting, whence we conclude from Lemma \ref{lem_stability} that (\ref{eq_jacobi_case2_2}) holds for $\vert \epsilon \vert \geq \epsilon_H$. Here, we also use that concavity of $\epsilon \mapsto \log m(\epsilon;E)$ (Hadamard's three-circle theorem) and the maximum modulus principle imply that $m(\epsilon; E) \nearrow \frac{|\mu_{N_1}\mu_{N_2}|}{|\lambda_M|^2}$ strictly as $\epsilon_H \leq \epsilon \to + \infty$.

Part (ii) of the theorem is now obtained using identical arguments as in the proof of Theorem \ref{thm_jacobi_case1}.
\end{proof}

Lastly, we turn to the case when $2M - (N_2 - N_1) < 0$. Then, the conjugacy mediated by $C$ in (\ref{eq_conjuacyjacobi}) does not resolve the problem of an undetermined expression when considering the asymptotics. 

Instead, we consider the second iterate of the Jacobi-cocycle, $(2\alpha, A_{\epsilon; 2}^E)$, where $A_{\epsilon; 2}^E = A_\epsilon^E(x + \alpha) A_\epsilon^E(x)$ with
\begin{eqnarray}
A_{\epsilon; 2}^E(x + i \epsilon) = \begin{pmatrix} (E-v(x))(E-v(x + \alpha)) - \overline{c(x - i \epsilon)} c(x)  & -(E-v(x+ \alpha))\overline{c(x - i \epsilon - \alpha)} \\ c(x+\alpha)(E-v(x)) & -\overline{c(x - i \epsilon - \alpha)}c(x + \alpha) \end{pmatrix} ~\mbox{,} \nonumber
\end{eqnarray} 
and $L(2\alpha, A_{\epsilon; 2}^E) = 2 L(\alpha, A_\epsilon^E)$. Then, since $2M < N_2 - N_1$, we can write
\begin{equation}
A_{\epsilon; 2}^E(x + i \epsilon) = \mathrm{e}^{2 \pi (N_2 - N_1) \epsilon} \mathrm{e}^{-2 \pi i (N_2 - N_1) x} \begin{pmatrix} o(1) - \mu_{N_1}\overline{\mu_{N_2}} & o(1) \\  o(1) & o(1) - \mu_{N_1}\overline{\mu_{N_2}} \end{pmatrix} ~\mbox{,}
\end{equation}
as $\epsilon \to +\infty$.

By Theorem \ref{thm_complexLEglobal_jacobi}, we thus conclude:
\begin{theorem}
If $N_2 - N_1 > 2M$, then for all $E \in \mathbb{R}$, there exists $0 \leq \epsilon_0$ such that
\begin{equation}
L(\alpha, A_{\epsilon}^E) = \pi(N_2 - N_1)|\epsilon| + \frac{1}{2}(\log{|\mu_{N_1}|} + \log{|\mu_{N_2}|}) ~\mbox{all $\vert \epsilon \vert \geq \epsilon_0$.} 
\end{equation}
In particular, one has the Herman bound,
\begin{equation}
L(0; E) \geq \frac{1}{2}(\log{|\mu_{N_1}|} + \log{|\mu_{N_2}|}) - I(c) ~\mbox{.}
\end{equation}

\end{theorem}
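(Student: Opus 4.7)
The plan is to follow the same template as the two preceding theorems, but working with the \emph{second iterate} of the Jacobi cocycle, as already set up immediately above the statement. I use the identity $L(\alpha, A_\epsilon^E) = \tfrac12 L(2\alpha, A_{\epsilon;2}^E)$ together with the displayed factorization
\[
A_{\epsilon;2}^E(x+i\epsilon) = \mathrm{e}^{2\pi(N_2-N_1)\epsilon}\,\mathrm{e}^{-2\pi i(N_2-N_1)x}\,B_\epsilon(x),
\]
where $B_\epsilon \to B_\infty := -\mu_{N_1}\overline{\mu_{N_2}}\,I$ in $\mathcal{C}^\omega(\mathbb{T}; M_2(\mathbb{C}))$ as $\epsilon \to +\infty$ (the assumption $2M < N_2-N_1$ ensures the $v\cdot v$ contribution to the diagonal is subdominant to the $\overline{c}\cdot c$ contribution). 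The constant cocycle $(2\alpha, B_\infty)$ is a scalar multiple of the identity with $L(2\alpha, B_\infty) = \log|\mu_{N_1}\mu_{N_2}|$, so continuity of the LE of analytic cocycles \cite{JitomirskayaMarx_2012, AvilaJitomirskayaSadel_2013} together with $\int_\mathbb{T}\log|\mathrm{e}^{2\pi(N_2-N_1)\epsilon}\mathrm{e}^{-2\pi i(N_2-N_1)x}|\,\ud x = 2\pi(N_2-N_1)\epsilon$ yield the asymptotic identity
\[
L(\alpha, A_\epsilon^E) = \pi(N_2-N_1)\,\epsilon + \tfrac12\bigl(\log|\mu_{N_1}| + \log|\mu_{N_2}|\bigr) + o(1), \qquad \epsilon \to +\infty.
\]

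Next I promote ``asymptotic'' to ``eventually exact''. By Theorem \ref{thm_complexLEglobal_jacobi}, the map $\epsilon \mapsto L(\alpha, A_\epsilon^E)$ is convex, even, and piecewise linear with half-integer-valued acceleration. The asymptotic slope $\pi(N_2-N_1) = 2\pi\cdot\tfrac{N_2-N_1}{2}$ is a half-integer multiple of $2\pi$ and hence admissible for the acceleration on the terminal linear piece, so the $o(1)$ error must in fact vanish identically for all sufficiently large $\epsilon$. This gives existence of $\epsilon_0 \geq 0$ such that the identity is exact on $[\epsilon_0, \infty)$, and evenness extends it to $|\epsilon| \geq \epsilon_0$, which is the first claim.

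The Herman bound is then an immediate consequence of convexity. Since the right-derivative of $L(\alpha, A_\epsilon^E)$ is non-decreasing and equals $\pi(N_2-N_1)$ on $[\epsilon_0,\infty)$, the subgradient inequality at $\epsilon = \epsilon_0$ gives
\[
L(\alpha, A_0^E) \geq L(\alpha, A_{\epsilon_0}^E) - \epsilon_0\cdot\pi(N_2-N_1) = \tfrac12\bigl(\log|\mu_{N_1}| + \log|\mu_{N_2}|\bigr),
\]
and subtracting $I(c)$ via (\ref{eq_complexle_jacobi}) yields the stated bound.

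The main obstacle---already signaled in the discussion preceding the statement---is that the limiting constant cocycle $(2\alpha, B_\infty)$ is a scalar multiple of the identity: its two eigenvalue moduli coincide, so $B_\infty$ sits on the \emph{boundary} of $\mathcal{DS}$. Lemma \ref{lem_stability} therefore cannot provide an effective neighborhood of stability around $B_\infty$, and no quantitative value of $\epsilon_0$ can be extracted the way $\epsilon_H$ was in Theorem \ref{thm_jacobi_case1} and in the $2M=N_2-N_1$ case. The existence of $\epsilon_0$ must instead be inferred non-constructively from the half-integer quantization of the acceleration, which is also why no subcriticality criterion is available here.
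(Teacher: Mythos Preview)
Your proof is correct and follows essentially the same approach as the paper: the paper's argument is given in the paragraph immediately preceding the theorem, where the second iterate is factored exactly as you describe, and the theorem is then stated as a direct consequence of Theorem~\ref{thm_complexLEglobal_jacobi}. You have simply supplied more detail---the explicit computation $L(2\alpha,B_\infty)=\log|\mu_{N_1}\mu_{N_2}|$, the promotion of the asymptotic to an eventually exact formula via half-integer quantization, and the convexity argument for the Herman bound---all of which the paper leaves implicit.
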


\section{Some remarks on supercritical behavior} \label{sec_supercrit}

In this final section, we comment briefly on how one could use ideas from Sec. \ref{sec_uniformds} to obtain conclusions about {\em{supercritical}} behavior (i.e. positivity of the LE) for quasi-periodic Schr\"odinger operators. The following relies on the estimates of the complexified LE in Proposition \ref{prop_key_amended}, which, assuming existence of some $\epsilon_1$ satisfying 
\begin{equation} \label{eq_epsilon1}
0 \leq \epsilon_1 < \epsilon_H ~\mbox{with } m(\epsilon; E) > 2 ~\mbox{.}
\end{equation}
will allow to extract a lower bound for $L(\alpha, B^E)$ (see (\ref{eq_imrpovedHerman}), below), thereby improving on the classical Herman bound (\ref{eq_herman}). 

Testing for existence of such $\epsilon_1$ requires estimates of the function $m(\epsilon; E)$ {\em{outside}} the asymptotic regime which, unfortunately, we have found difficult to extract. On the other hand, it is easy to solve for $m(\epsilon; E)$ numerically, which at least gives rise to a simple numerical scheme to test for supercritical behavior. Below, we will demonstrate this for generalized Harper's model (\ref{eq_genHarpers}) with $M=2$.

Assuming existence of $\epsilon_1$ satisfying (\ref{eq_epsilon1}), we first establish above mentioned improvement on the Herman bound:
\begin{prop} \label{prop_positiveLE}
Consider a quasi-periodic Schr\"odinger operator with trigonometric potential, $v(x) = \sum_{j = -M}^{M} \lambda_j \mathrm{e}^{2 \pi i j x}$, $\lambda_j = \overline{\lambda_{-j}}$ and $\alpha$ irrational. Given $E \in \mathbb{R}$, suppose there is $\epsilon_1 = \epsilon_1(E; \{ \lambda_j \}_{j=1}^{M}\}$ satisfying (\ref{eq_epsilon1}). Then
\begin{equation}\label{eqn:LE-DS-ineq-ii}
L(\epsilon;E)\geq \log|\lambda_M|+2\pi M\epsilon+\gamma\left(\frac{\epsilon_H-\epsilon}{\epsilon_H-\epsilon_1}\right) ~\mbox{, } 0\leq \epsilon \leq \epsilon_1 ~\mbox{,}
\end{equation}
where
\begin{equation} \label{eq_defgamma}
\gamma := \int_{\mathbb{T}} \log \vert E - v(x + i \epsilon_1) \vert \ud x - \log(2) - \log \vert \lambda_M \vert - 2 \pi M \epsilon_1 ~\mbox{.}
\end{equation}
In particular, letting $\epsilon = 0$, one has
\begin{equation} \label{eq_imrpovedHerman}
L(\alpha, B^E) \geq \log|\lambda_M|+\gamma \frac{\epsilon_H}{\epsilon_H-\epsilon_1} ~\mbox{.}
\end{equation}
\end{prop}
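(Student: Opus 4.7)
The plan is to interpolate, via convexity of the complexified Lyapunov exponent, between a lower bound at the intermediate point $\epsilon_1$ supplied by Proposition \ref{prop_key_amended}, and the exact value at the Herman radius $\epsilon_H$ supplied by the complex Herman formula. No new technical ingredients are needed beyond what is already established; the argument is essentially a one-line convexity interpolation together with the definition of $\gamma$.

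First, since $m(\epsilon_1;E)>2$ by the assumption (\ref{eq_epsilon1}), Proposition \ref{prop_key_amended} applies at $\epsilon=\epsilon_1$ and the estimate (\ref{eq_complexLE_ds_lowerboundsupercrit}) in the remark that follows it yields
\begin{equation*}
L(\epsilon_1;E) \;\ge\; \int_{\mathbb{T}} \log\vert E-v(x+i\epsilon_1)\vert\,\mathrm{d}x - \log 2,
\end{equation*}
which, by the definition (\ref{eq_defgamma}) of $\gamma$, rearranges to $L(\epsilon_1;E) \ge \log|\lambda_M| + 2\pi M \epsilon_1 + \gamma$. Second, since $\epsilon_H$ lies in the asymptotic regime where (\ref{eq_complexherman_1}) holds, the complex Herman formula gives the equality $L(\epsilon_H;E) = \log|\lambda_M| + 2\pi M \epsilon_H$.

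Third, by Theorem \ref{thm_complexLEglobal}, $\epsilon\mapsto L(\epsilon;E)$ is convex on $[0,\infty)$. For any $0\le\epsilon\le\epsilon_1<\epsilon_H$ write $\epsilon_1 = t\epsilon + (1-t)\epsilon_H$ with $t := \tfrac{\epsilon_H-\epsilon_1}{\epsilon_H-\epsilon}\in(0,1]$, so that convexity gives $L(\epsilon_1;E) \le t\,L(\epsilon;E) + (1-t)\,L(\epsilon_H;E)$, equivalently
\begin{equation*}
L(\epsilon;E) \;\ge\; \tfrac{1}{t}\,L(\epsilon_1;E) - \tfrac{1-t}{t}\,L(\epsilon_H;E).
\end{equation*}
Plugging in the two bounds from the previous paragraph, the $\log|\lambda_M|$ coefficients recombine to $1$, the $2\pi M$ coefficients recombine to $\epsilon$, and there remains a residual $\gamma/t = \gamma\,\tfrac{\epsilon_H-\epsilon}{\epsilon_H-\epsilon_1}$; this is precisely (\ref{eqn:LE-DS-ineq-ii}). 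Specializing to $\epsilon=0$ gives $t=\epsilon_H/(\epsilon_H - \epsilon_1)$ and hence (\ref{eq_imrpovedHerman}). The only step requiring care is the algebraic recombination of the affine pieces, and there is no genuine obstacle internal to the proposition. The real difficulty, as the authors flag in the surrounding discussion, is \emph{not} in this proof but in producing an $\epsilon_1$ satisfying (\ref{eq_epsilon1}) for a given potential, which demands control of $m(\epsilon;E)$ outside the asymptotic regime.
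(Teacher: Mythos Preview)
Your proof is correct and follows essentially the same approach as the paper: both use the lower bound at $\epsilon_1$ from Proposition~\ref{prop_key_amended}, the exact value at $\epsilon_H$ from the complex Herman formula, and convexity to extrapolate to $[0,\epsilon_1]$; the paper phrases the convexity step via the slope bound $\omega(\epsilon_1;E)\le a$, while you use the equivalent three-point inequality directly. One trivial slip: in your final sentence $t=(\epsilon_H-\epsilon_1)/\epsilon_H$, not $\epsilon_H/(\epsilon_H-\epsilon_1)$; it is $1/t$ that equals the latter, and your conclusion (\ref{eq_imrpovedHerman}) is unaffected.
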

\begin{remark}
Using Jensen's formula, the integral $\int_{\mathbb{T}} \log \vert E - v(x + i \epsilon_1) \vert \ud x$ can be evaluated based on the zeros of the polynomial 
\begin{equation}
f_{E,\epsilon_1}(z):=Ez^M-\sum_{j=1}^M(\lambda_jz^{j+M}e^{-2\pi j\epsilon_1}+\overline{\lambda_j} z^{-j+M}e^{2\pi j\epsilon_1}) ~\mbox{.}
\end{equation} 
Letting $a_1,\ldots,a_n$ be the zeros of $f_{E,\epsilon_1}(z)$ (counted with multiplicity) in the complex unit disk $\mathbb{D}$, $\gamma$ is given by
\begin{equation} \label{eq_gammafromjensen}
\gamma=\sum_{k=1}^n\log\frac{1}{|a_k|}-\log2 ~\mbox{.}
\end{equation}
\end{remark}

\begin{proof}
Consider the line segment connecting the points $(\epsilon_1,L(\epsilon_1;E))$ and $(\epsilon_0,L(\epsilon_0;E))$. 

By convexity of $\epsilon \mapsto L(\epsilon; E)$, one necessarily has 
\begin{equation}
\omega(\epsilon_1; E) \leq \frac{L(\epsilon_H;E)-L(\epsilon_1;E)}{2\pi(\epsilon_H-\epsilon_1)} =:a ~\mbox{,}
\end{equation}
whence
\begin{equation} \label{eq_supercrit_1}
L(\epsilon;E)\geq L(\epsilon_1;E)+2\pi a (\epsilon-\epsilon_1) ~\mbox{, $0 \leq \epsilon \leq \epsilon_1$} ~\mbox{.}
\end{equation} 

On the other hand, one has $L(\epsilon_H;E)=\log|\lambda_M|+2\pi M\epsilon_H$, and the lower bound (\ref{eq_complexLE_ds_lowerboundsupercrit}) from Proposition \ref{prop_key_amended} implies
\begin{equation} \label{eq_supercrit_2}
L(\epsilon_1; E) \geq \int_{\mathbb{T}} \log \vert E - v(x + i \epsilon_1) \vert \ud x - \log(2) = \gamma + \log \vert \lambda_M \vert + 2 \pi M \epsilon_1 ~\mbox{,}
\end{equation}
where we have made use of the definition of $\gamma$ in (\ref{eq_defgamma}).

Thus, we can estimate
\begin{equation}
a = \frac{L(\epsilon_H;E)-L(\epsilon_1;E)}{2\pi(\epsilon_H-\epsilon_1)} \leq M - \dfrac{\gamma}{2 \pi (\epsilon_H - \epsilon)} ~\mbox{,}
\end{equation}
which, combined with (\ref{eq_supercrit_1}), yields (\ref{eqn:LE-DS-ineq-ii}).
\end{proof}

\subsection{Example (numerics)}
Consider $v(x)=18\cos(2\pi x)+1.6\cos(4\pi x)$, so $\lambda_1=9$ and $\lambda_2=0.8$. Using Mathematica, we first computed numerically $\widetilde{m}(\epsilon; E) = \min_{x \in \mathbb{T}} \vert v(x + i \epsilon) \vert$, the results of which are shown in Fig. \ref{fig:min-plotting98}. 
\begin{figure}[h!]
\includegraphics[scale=0.5]{./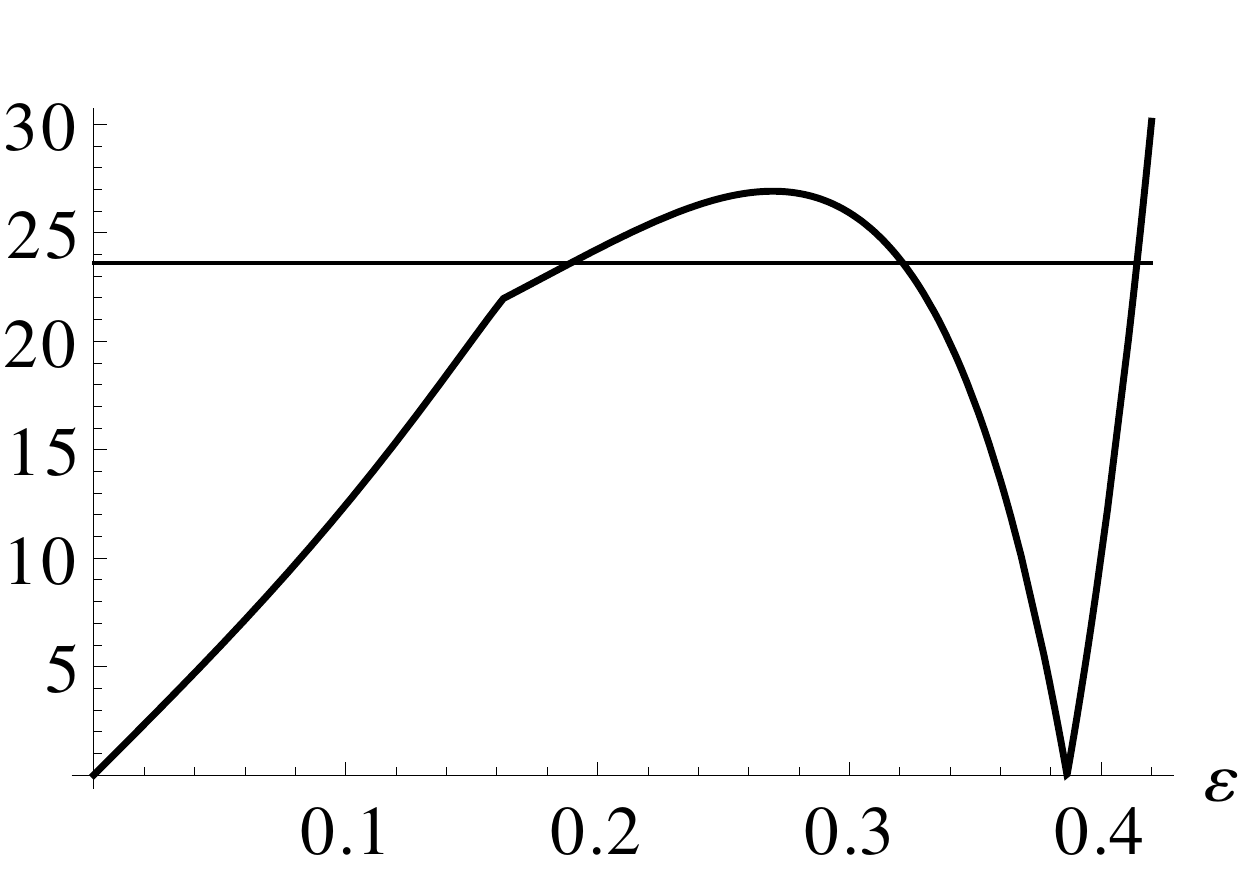}
\caption{Plot of $\widetilde{m}(\epsilon; E)$ for $\lambda_1=9$, $\lambda_2=0.8$. The horizontal line is drawn at $4+2\lambda_1+2\lambda_2=23.6$.}
\label{fig:min-plotting98}
\end{figure}

At $\epsilon=0.2$, $\min_{x\in\mathbb{T}}|v(x+i0.2)|\approx24.242>23.6=4+2\lambda_1+2\lambda_2$, so that $m(\epsilon; E)>2$ for all $E\in[-21.6,21.6]\supseteq\Sigma$; thus for any $E$ in this interval, we can apply Proposition \ref{prop_positiveLE} with $\epsilon_1 = 0.2$. 

For a specific example, take $E=-2$. The roots of $f(z)=-2z^2-9e^{-2\pi(0.2)}z^3-9e^{2\pi(0.2)}z-1.6e^{-4\pi(0.2)}z^4-1.6e^{4\pi(0.2)}$ are 
$-39.055, -0.3161$, and $-0.07839\pm 3.51271i$.
Only one root, $z=-0.3161$ is in $B_1(0)$, so from (\ref{eq_gammafromjensen}),
\begin{equation}
\gamma=-\log(0.3161)-\log 2=0.458.
\end{equation}

To apply (\ref{eq_imrpovedHerman}), we first solve numerically solve for the Herman radius, which for $E=-2$, yields $\epsilon_H=0.3864$. Then, the lower bound in (\ref{eq_imrpovedHerman}) implies
\begin{equation}
L(\alpha,B^E)=L(0;E=-2)\geq \log(0.8)+0.458\left(\frac{.3864}{.1864}\right)=0.727 > 0~\mbox{.}
\end{equation}
In comparison, the classical Herman bound gives $L(0;E)=L(\alpha,B^E)\geq \log|\lambda_M|\approx-0.223$, for all $E \in \mathbb{R}$.

Using Mathematica, we also sampled energies $E\in[-21.6,21.6] \supseteq \Sigma$, using a step size of $0.001$. We simplified the computation, by computing the uniform Herman radius instead of $\epsilon_H$ for each energy. From the proof of Proposition \ref{prop_positiveLE}, it is clear that (\ref{eqn:LE-DS-ineq-ii}) also holds for $\epsilon_H$ replaced by $\epsilon_{H; \mathrm{unif}}$, since then still $L(\epsilon_{H; \mathrm{unif}}; E) = \log \vert \lambda_M \vert + 2 \pi \epsilon_{H; \mathrm{unif}} M$ which was used in (\ref{eq_supercrit_2}). The computation leading to Fig.  \ref{fig:min-plotting98} allows to extract the uniform Herman radius, $\epsilon_{H; \mathrm{unif}} =0.4142$.

Applying the bound in (\ref{eq_imrpovedHerman}), numerical computation of $\gamma$ using (\ref{eq_gammafromjensen}), results 
\begin{equation}
L(0;E)\ge\log|\lambda_2|+\gamma \frac{\epsilon_{H; \mathrm{unif}}}{\epsilon_{H; \mathrm{unif}}-\varepsilon_1} =: L^-(E)>0 ~\mbox{, for all $E\in[-21.6,21.6] \subseteq \Sigma$.} 
\end{equation}
We show a plot of $L^-(E)$ in Fig. \ref{fig:example_energies}.
\begin{figure}[h!]
\includegraphics[width=2.7in]{./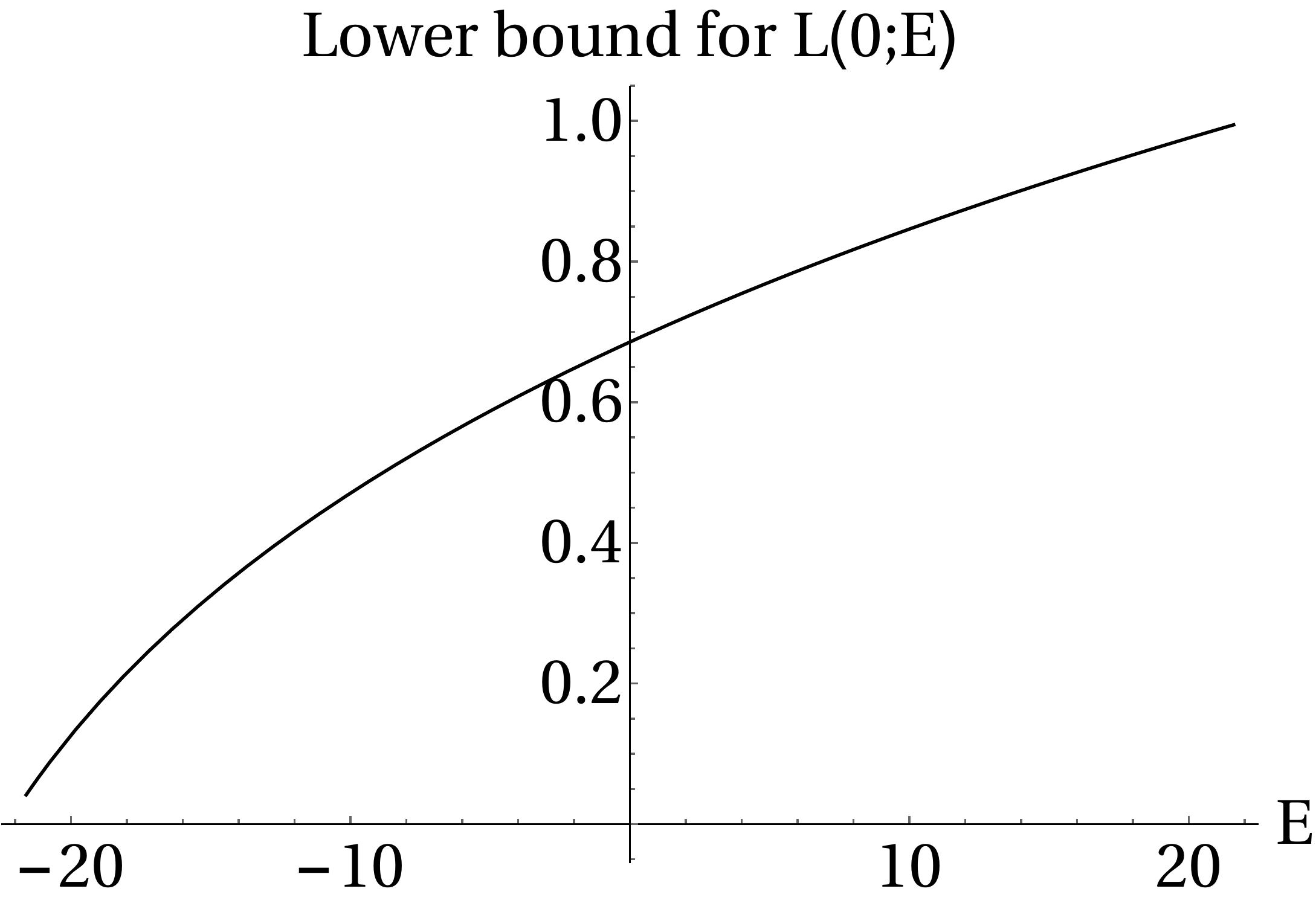}
\caption{Plot of the lower bound $L^-(E)$ for the LE extracted from Proposition \ref{prop_positiveLE} for $\lambda_1=9$, $\lambda_2=0.8$, and $E\in[-21.6,21.6] \supseteq \Sigma$.}
\label{fig:example_energies}
\end{figure}

\appendix

\section{Proof of Fact \ref{fact_symmetry}} \label{appendix_factsymmetry}

We approximate $\alpha$ by some rational $\frac{p}{q}$, in which case the resulting discrete Schr\"odinger operator becomes $q$-periodic. From the theory of periodic Schr\"odinger operators, it is know that for each $x \in \mathbb{T}$, the spectrum $\sigma(\frac{p}{q}, x)$ is determined by the {\em{discriminant}} $\Delta(\frac{p}{q}, x; E) = \mathrm{tr} \left\{ B_{q}^E(\frac{p}{q}; x)  \right\}$ via
\begin{equation}
\sigma(\frac{p}{q}, x) = \Delta(\frac{p}{q}, x; .)^{-1}([-2,2]) ~\mbox{.}
\end{equation}

We claim:
\begin{lemma} \label{lem_evenodd}
Let $x_0$ as in Fact \ref{fact_symmetry}, then for all $E \in \mathbb{R}$,
\begin{equation}
\Delta(\frac{p}{q}, x_0; E) = (-1)^q \Delta(\frac{p}{q}, x_0; -E) ~\mbox{.}
\end{equation}
In particular, if $q$ is odd, then $E=0 \in \sigma(\frac{p}{q}, x_0)$.
\end{lemma}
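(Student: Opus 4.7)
The plan is to prove the functional equation $\Delta(p/q,x_0;E) = (-1)^q \Delta(p/q,x_0;-E)$; the second assertion is then immediate, since setting $E=0$ for odd $q$ forces $\Delta(p/q,x_0;0) = -\Delta(p/q,x_0;0)$, hence $\Delta(p/q,x_0;0) = 0 \in [-2,2]$ and $E=0 \in \sigma(\tfrac{p}{q},x_0)$.

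First, I would encode the odd symmetry $v(x_0+y) = -v(x_0-y)$ as a pointwise matrix identity. With $J := \mathrm{diag}(1,-1)$ (so $J^2 = I$), a direct computation using the oddness hypothesis yields
\[
B^E(x_0+y) \;=\; -\,J\, B^{-E}(x_0-y)\, J.
\]
Applying this at $y = jp/q$ for $j = 0,\ldots,q-1$ and telescoping the $J$'s between consecutive factors gives
\[
B_q^E(x_0;p/q) \;=\; (-1)^q\, J \Bigl(\prod_{j=q-1}^{0} B^{-E}(x_0 - jp/q)\Bigr) J.
\]
Invoking $1$-periodicity of $v$ (so $x_0 - jp/q \equiv x_0 + (q-j)p/q \bmod 1$) and cyclicity of the trace then reduces the target identity to
\[
\Delta(p/q,x_0;E) \;=\; (-1)^q\, \mathrm{tr}\bigl(C_0 C_1 \cdots C_{q-1}\bigr), \qquad C_j := B^{-E}(x_0 + jp/q),
\]
i.e., the trace of the $q$-step transfer product at energy $-E$ taken in \emph{increasing} order of $j$.

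The main obstacle is now visible: $\Delta(p/q,x_0;-E) = \mathrm{tr}(C_{q-1}\cdots C_1 C_0)$ corresponds to the \emph{decreasing} order, and for generic $SL_2$ matrices the trace is not invariant under reversal ($\mathrm{tr}(ABC) \neq \mathrm{tr}(CBA)$ in general). What rescues the argument is the specialized Schr\"odinger form $B^{-E}(x) = T(-E-v(x))$ with $T(a) := \bigl(\begin{smallmatrix} a & -1 \\ 1 & 0 \end{smallmatrix}\bigr)$: a quick check shows that the \emph{same} $J$ satisfies $J T(a) J = T(a)^T$. Transposing the reversed product and telescoping $J^2 = I$,
\[
(C_{q-1} \cdots C_0)^T \;=\; C_0^T\, C_1^T \cdots C_{q-1}^T \;=\; J\,(C_0 C_1 \cdots C_{q-1})\, J,
\]
so $\mathrm{tr}(C_{q-1}\cdots C_0) = \mathrm{tr}((C_{q-1}\cdots C_0)^T) = \mathrm{tr}(C_0\cdots C_{q-1})$ by cyclicity. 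Combining with the earlier display yields the functional equation, and the lemma follows.
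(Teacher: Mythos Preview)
Your proof is correct and follows essentially the same route as the paper: both arguments rest on the elementary identity relating $A_\mu$ to $-A_{-\mu}^{\mathrm{t}}$ (which you realize via conjugation by $J=\mathrm{diag}(1,-1)$), combined with the oddness of $v(x_0+\cdot)$, periodicity, and invariance of the trace under transposition and cyclic permutation. Your presentation is slightly more systematic in that it handles even and odd $q$ uniformly, whereas the paper's displayed computation is written for $q$ odd, but the underlying mechanism is identical.
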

\begin{proof}
For simplicity denote $V_j = v(x_0 + j \frac{p}{q})$, $0 \leq j \leq q-1$, and, for $\mu \in \mathbb{R}$, set $A_\mu :=( \begin{smallmatrix}  \mu & -1 \\ 1& 0  \end{smallmatrix})$. Note that for every $\mu$,
\begin{equation}
A_\mu = (-1) A_{-\mu}^\mathrm{t} ~\mbox{,}
\end{equation}
where superscript $\mathrm{t}$ is the matrix transpose. Thus, using the anti-symmetry of $v(x_0 + .)$ and invariance of the trace under matrix transposition and cyclic permutation, we obtain
\begin{eqnarray}
\Delta(\frac{p}{q}, x_0; E) = & \mathrm{tr} \left\{ \left( \prod_{j=1}^{(q-1)/2} A_{E + V_j} \right) \left(  \prod_{j = (q-1)/2}^{1} A_{E - V_j} \right) A_E    \right\} \nonumber \\
      = & (-1)^q  \mathrm{tr} \left\{ \left( \prod_{j=1}^{(q-1)/2} A_{-E - V_j}^\mathrm{t} \right) \left(  \prod_{j = (q-1)/2}^{1} A_{-E + V_j}^\mathrm{t} \right) A_{-E}^\mathrm{t}    \right\} \nonumber \\
      = & (-1)^q \mathrm{tr} \left\{ A_{-E} \left( \prod_{j=1}^{(q-1)/2} A_{-E + V_j}    \right) \left(  \prod_{j = (q-1)/2}^{1} A_{-E - V_j}   \right) \right\} \nonumber \\
      = & (-1)^q \Delta(\frac{p}{q}, x_0; - E) ~\mbox{.}
\end{eqnarray}
\end{proof}

Finally, let $(\frac{p_n}{q_n})$ be any sequence of rationals with $\frac{p_n}{q_n} \to \alpha$ and $q_n$ odd for all $n \in \mathbb{N}$. Since the map $\mathbb{R} \ni \beta \mapsto \Sigma_+(\beta) := \cup_{x \in \mathbb{T}} \sigma(\beta , x)$ is known to be continuous in the Hausdorff metric at every irrational $\alpha$ \cite{AvronSimon_1983}, Lemma \ref{lem_evenodd} implies Fact \ref{fact_symmetry}.

\bibliographystyle{amsplain}

\begin{thebibliography}{10}



\bibitem{Avila_globalthy_published} A. Avila, \textit{Global theory of one-frequency Schr\"odinger operators}, Acta Math. 215, 1-54 (2015).
\bibitem{Avila_2011} A. Avila, \textit{Density of positive Lyapunov exponents for $SL(2,\mathbb{R})$ cocycles}, Journal of the American Mathematical Society 24 (2011), 999-1014.
\bibitem{AvilaFayadKrikorian_2011} A. Avila, B. Fayad, R. Krikorian, \textit{A KAM scheme for SL(2,$\mathbb{R}$) cocycles with Liouvillian frequencies}, Geometric and Functional Analysis 21, 1001-1019 (2011).
\bibitem{AvilaJitomirskayaSadel_2013} A. Avila, S. Jitomirskaya, C. Sadel, \textit{Complex one-frequency cocycles}, JEMS 16, 1915 - 1935 (2014).
\bibitem{Avila_prep_ARC_1} A. Avila, \textit{Almost reducibility and absolute continuity I}, 2011. Preprint available on arXiv:1006.0704v1.
\bibitem{Avila_prep_ARC_2} A. Avila, \textit{Almost reducibility and absolute continuity II}, in preparation.
\bibitem{AvilaJitomirskayaMarx_preprint_2015} A. Avila, S. Jitomirskaya, C. A. Marx, \textit{On the spectral theory of Extended Harper's Model}, in preparation (2015).
\bibitem{AvronSimon_1983} J. ~Avron, B. ~Simon,  \textit{Almost periodic Schr\"odinger operators. II. The integrated density of states.}, Duke Math. J. 50 (1983), 369 -- 391.
\bibitem{AvronVMoucheSimon_1990} J. Avron, P. H. M. v. Mouche, and B. Simon, \textit{On the Measure of the Spectrum for the Almost Mathieu Operator}, Commun. Math. Phys. 132, 103-118 (1990).
%

\bibitem{BALASUBRAMANIAN_KULKARNI_RADHA_2001} R. Balasubramanian, S. H. Kulkarni, and R. Radha, \textit{Non-invertibility of certain almost Mathieu operators}, Proc AMS 129, 2017 - 2018 (2001).   
\bibitem{BinderVoda_2013} I. Binder and M. Voda, \textit{An estimate on the number of eigenvalues of a quasiperiodic Jacobi matrix of size $n$ contained in an interval of size $n^{-C}$}, Journal of Spectral Theory 3 (2013), 1 - 45.
\bibitem{bindervoda} I. Binder and M. Voda, On Optimal Separation of Eigenvalues for a Quasiperiodic Jacobi Matrix, Commun. Math. Phys. 325, 1063–1106 (2014).
\bibitem{Bourgain_book_2005} J. Bourgain, \textit{Green's function estimates for lattice Schršdinger operators and applications}, Princeton Univ. Press (2005), Princeton.
\bibitem{Bourgain_2005} J. Bourgain, \textit{Positivity and continuity of the Lyapunov exponent for shifts on $\mathbb{T}^d$ with arbitrary frequency vector and real analytic potential}, Journal d'Analyse Math\'ematique 96 (2005), 313 - 355.
\bibitem{BourgainJitomirskaya_2002_inventiones} J. Bourgain, S. Jitomirskaya, \textit{Absolutely continuous spectrum for 1D quasiperiodic operators}, Invent. math. 148 (2002), 453 - 463.
\bibitem{BonattiDiazViana_book_2005} C. Bonatti, L. J. D\'iaz, M. Viana, \textit{Dynamics beyond Uniform Hyperbolicity}, Encyclopedia of Mathematical Sciences Vol. 102, Springer, Berlin Heidelberg (2005).

%
%
%
\bibitem{ChaoRiklundLiu_1985} K. A. Chao, R. Riklund, and You-Yan Liu, \textit{Renormalization-group results of electronic states in a one-dimensional systematic incommensurate potentials}, Phys. Rev. B 32, 5979 - 5986 (1985).
\bibitem{ChaoRiklundLiu_1986} K. A. Chao, R. Riklund, and You-Yan Liu, \textit{dc conductivity in one-dimensional incommensurate systems}, Phys. Rev. B 34, 5247 - 5252 (1986).
%
%
\bibitem{Dombrowsky_1978} J. Dombrowsky, \textit{Quasitriangular matrices}, Proc. Amer. Math. Soc. 69 (1978), 95-96.
\bibitem{KleinDuarte_2013_posLE} P. Duarte and S. Klein, \textit{Positive Lyapunov exponents for higher dimensional quasiperiodic cocycles}, Comm. Math. Phys. 332, 189 - 219 (2014). 
%
%
%
%
%
\bibitem{HaroPuig_2013} A. Haro and J. Puig, \textit{A Thouless formula and Aubry duality for long-range Schr\"odinger skew-products}, Nonlinearity 26 (2013), 1163 - 1187.
\bibitem{Herman_1983} M. Herman, \textit{Une methode pour minorer les exposants des Lyapunov et quelques examples montrant le charact\`ere local d'un th\'eor\`eme d'Arnold et de Moser sur le tore de dimension 2}, Comment. Math. Helv. 58 (1983), 453 - 562.
\bibitem{HiramotoKohmoto_1989} H. Hiramoto, M. Kohmoto, \textit{Scaling Analysis of quasi-periodic systems: Generalized Harper model}, Phys. Rev. B 40, 8225 - 8234 (1989).



%
\bibitem{Jitomirskaya_review_2007} S. Jitomirskaya, \textit{Ergodic Schr\"odinger Operators (on one foot)}, Proceedings of Symposia in Pure Mathematics 76 (2007), 613 - 647.
\bibitem {JitomirskayaMarx_2011} S. Jitomirskaya, C. A. Marx, \textit{Continuity of the Lyapunov exponent for analytic quasi-periodic cocycles with singularities}, Journal of Fixed Point Theory and Applications 10, 129 -146 (2011)
\bibitem{JitomirskayaLiu_2015} S. Jitomirskaya, W. Liu, \textit{A lower bound on the Lyapunovexponents for Schr\"odinger cocycles with trigonometric polynomial potentials}, preprint (2015).
\bibitem{JitomirskayaMarx_2012}  S. Jitomirskaya, C. A. Marx, \textit{Analytic quasi-perodic cocycles with singularities and the Lyapunov Exponent of Extended Harper's Model}, Commun. Math. Phys. 316, 237 - 267 (2012).
\bibitem{JitomirskayaMarx_2013_erratum} S. Jitomirskaya, C. A. Marx, \textit{Erratum to: Analytic quasi-perodic cocycles with singularities and the Lyapunov Exponent of Extended Harper's Model}, Commun. Math. Phys. 317, 269 - 271 (2013).
\bibitem{JitomirskayaMarx_review_2015} S. Jitomirskaya, C. A. Marx, \textit{Dynamics and spectral theory of quasi-periodic Schr\"odinger-type operators}, Review article (2015). Preprint available on arXiv:1503.05740 [math-ph]
\bibitem{Johnson_1986} R. A. Johnson, \textit{Exponential Dichotomy, Rotation Number, and Linear Differential Operators with Bounded Coefficients}, Journal of Differential Equations 61 (1986), 54 - 78.
%
%
\bibitem{Kotani_1982} S. Kotani, \textit{Lyapunov indices determine absolutely continuous spectra of stationary random one- dimensional Schr\"odinger operators}, Proc. Kyoto Stoch. Conf. (1982).
\bibitem{kioustelidis} J. B. Kioustelidis, \textit{Bounds for positive roots of polynomials}, Journal of Computational and Applied Mathematics, vol. 16, issue 2 (1986), 241-244.

%
%
%
%
\bibitem{Marx_thesis} C. A. Marx, \textit{Quasi-periodic Jacobi-cocycles: Dynamics, Continuity, and Applications to Extended Harper's Model}, PhD thesis, Irvine, CA (2012).
\bibitem{Marx_2014} C. A. Marx, \textit{Dominated splittings and the spectrum for quasi periodic Jacobi operators}, Nonlinearity 27, 3059-3072 (2014).
\bibitem{MontambauxKohmoto_1990} G. Montambaux, M. Kohmoto, \textit{Quantized Hall effect in three dimensions}, Phys. Rev. B 4, 11417 - 11421 (1990).

%
%
\bibitem{Ruelle_1979_LEanalytic} D. Ruelle, \textit{Analyticity Properties of the Characteristic Exponents of Random Matrix Products}, Advances in Mathematics 32 (1979), 68-80.
%
\bibitem{Simon_1983} B. Simon, \textit{Kotani Theory for one dimensional stochastic Jacobi matrices}, Commun. Math. Phys. 89 (1983), 227 - 234.
\bibitem{SoukoulisEconomou_1982} C. M. Soukoulis and E. N. Economou, \textit{Localization in One-dimensional Lattices in the Presence of Incommensurate Potentials}, Phys. Rev. Lett. 48, 1043 - 1046 (1982).
%
\bibitem{SoretsSpencer_1991} E. Sorets, T. Spencer, \textit{Positive Lyapunov Exponents for Schr\"odinger Operators with quasi-periodic potentials}, CMP 142 (3), 543 - 566 (1991).
\bibitem{stefanescu1} D. \Stefanescu, \textit{New Bounds for Positive Roots of Polynomials}, Journal of Universal Computer Science, vol. 11, no. 12 (2005), 2125-2131. 
%
%
\bibitem{Teschl_book_2000} G. Teschl, \textit{Jacobi Operators and Completely Integrable Nonlinear Lattices}, Mathematical Surveys and Monographs 72, Amer. Math. Soc., Providence (2000).
\bibitem{Thouless_1983} D. J. Thouless, \textit{Bandwidth for a quasiperiodic tight binding model}, Phys. Rev. B 28, 42724276 (1983).
\bibitem{KTao_preprint_2011} K. Tao, \textit{H\"older continuity of Lyapunovexponent for quasi-periodic Jacobi operators}, preprint (2011). Preprint  available on arXiv:1108.3747v1 [math.DS].
\bibitem{KTao_2012} K. Tao, \textit{Continuity of Lyapunov exponent for analytic quasi-periodic cocyclesn higher-dimensional torus}, Front. Math. China (2012), 521 - 542.
\bibitem{taovoda} K. Tao, M. Voda, \textit{H\"older continuity of the integrated density of states for quasi-periodic Jacobi operators}, preprint (2015). Preprint available on arXiv:1501.01028 [math.SP].


%
%
\end{thebibliography}

\end{document}